\documentclass[11pt,a4paper]{amsart}
\usepackage[top=4cm,bottom=4cm,left=3.5cm,right=3.5cm]{geometry}
 
\usepackage{subcaption}
\usepackage{tabularx}
\usepackage{amsmath,amssymb,amsthm}
\usepackage[english]{babel}
\usepackage{mathtools}
\usepackage{enumerate}
\usepackage[numbers]{natbib}
\usepackage[hidelinks]{hyperref}
 
\usepackage{comment}
 
\usepackage{multirow}
\usepackage[dvipsnames]{xcolor}
\usepackage{algorithm}
\usepackage[noend]{algpseudocode}
\usepackage[colorinlistoftodos]{todonotes}

\usepackage{nicematrix}


\DeclareMathOperator{\rk}{rk}

\DeclareMathOperator{\supp}{supp}

\DeclareMathOperator{\wt}{wt}

\newtheorem{theorem}{Theorem}

\newtheorem{lemma}[theorem]{Lemma}
\newtheorem{proposition}[theorem]{Proposition}
\newtheorem{corollary}[theorem]{Corollary}
\theoremstyle{definition}

\newtheorem{definition}[theorem]{Definition}
\theoremstyle{remark}

\newtheorem{example}[theorem]{Example}

\numberwithin{equation}{section}

\newcommand{\Z}{\mathbb{Z}}

\newcommand{\Zp}[1]{\Z/{#1}\Z}

\newcommand{\field}{\mathbb{F}}
\newcommand{\Fq}{\mathbb{F}_q}

\newcommand{\zps}{\mathbb{Z} /p^s \mathbb{Z}}
\newcommand{\zpsk}[1]{\left(\mathbb{Z}/p^s\mathbb{Z}\right)^{#1}}

\newcommand{\card}[1]{\left\vert \, {#1} \,\right\vert} 
\newcommand{\floor}[1]{\left\lfloor\, {#1}\, \right\rfloor} 
\newcommand{\set}[1]{\left\lbrace{#1}\right\rbrace} 

\newcommand{\st}{\, | \,} 
\newcommand{\tendsto}{\longrightarrow}

\newcommand{\prob}{\mathbb{P}} 

\newcommand{\code}{\mathcal{C}}
\newcommand{\subcode}{\mathcal{D}}

\newcommand{\LW}{\wt_{\scriptsize\mathsf{L}}}
\newcommand{\LWjoin}{\wt_{\scriptsize\mathsf{L, join}}}
\newcommand{\LWmeet}{\wt_{\scriptsize\mathsf{L, meet}}}

\newcommand{\HW}{\wt_{\scriptsize\mathsf{H}}}
\newcommand{\HWjoin}{\wt_{\scriptsize\mathsf{H, join}}}
\newcommand{\HWmeet}{\wt_{\scriptsize\mathsf{H, meet}}}
\DeclareMathOperator{\dist}{d}
\newcommand{\HD}{\dist_{\scriptsize\mathsf{H}}}
\newcommand{\LD}{\dist_{\scriptsize\mathsf{L}}}
\newcommand{\LDjoin}{\dist_{\scriptsize\mathsf{L, join}}}
\newcommand{\LDmeet}{\dist_{\scriptsize\mathsf{L, meet}}}

\newcommand{\lweight}{\wt_\mathsf{L}}

\newcommand{\suppjoin}{\supp_{\mathsf{ join}}}
\newcommand{\suppmeet}{\supp_{\mathsf{meet}}}
\newcommand{\suppH}{\supp_{\mathsf{H}}}
\newcommand{\suppHjoin}{\supp_{\mathsf{H, join}}}
\newcommand{\suppHmeet}{\supp_{\mathsf{H, meet}}}
\newcommand{\suppL}{\supp_{\mathsf{L}}}
\newcommand{\suppLC}{\supp_{\mathsf{L, col}}}
\newcommand{\suppLjoin}{\supp_{\mathsf{L, join}}}
\newcommand{\suppLmeet}{\supp_{\mathsf{L, meet}}}

\newcommand{\wtC}{\wt_{\mathsf{C}}}
\newcommand{\wtLC}{\wt_{\mathsf{L, col}}}
\newcommand{\distC}{\dist_{\mathsf{C}}}
\newcommand{\distLC}{\dist_{\mathsf{L, col}}}

\newcommand{\Gsys}{G_{\mathsf{sys}}}
\newcommand{\Grsys}{G_{\mathsf{rsys}}}


\usepackage{stmaryrd}


\newcommand{\commentOut}[1]{}

\colorlet{myblue}{cyan!55!teal}
\colorlet{myred}{red!30!purple}
\colorlet{mygreen}{green!70!teal}

\colorlet{jgreen}{PineGreen!60!LimeGreen}
\colorlet{Jgreen}{teal!80!jgreen}

\newcommand{\highlightred}[1]{\colorbox{WildStrawberry!70}{#1}}

 \usepackage{ytableau}
 \ytableausetup
{mathmode, boxframe=normal, boxsize=3.5em}

\def\zpsn{\left(\mathbb{Z} /{p^s}\mathbb{Z}\right)^n}

\def\supp{{\mathrm{supp}}}

\usepackage{longtable}
\usepackage{multirow,multicol}
\usepackage{caption}
\usepackage{ltablex}\keepXColumns

\newcolumntype{Y}{>{\centering\arraybackslash}X}
\newcolumntype{Z}{>{\scriptsize}Y}
\usepackage{rotating}

\usepackage{tikz,pgfplots}
\pgfplotsset{compat=newest}


\title{Better bounds on the minimal Lee distance}
\begin{document}
\author[J. Bariffi]{Jessica Bariffi$^{1,2}$}
\author[V. Weger]{Violetta Weger$^3$}

\address{$^1$Institute of Communication and Navigation \\ 
        German Aerospace Center \\ 
        Germany
\newline \indent $^2$Institute of Mathematics \\ 
        University of Zurich \\ 
        Switzerland 
}
\email{jessica.bariffi@dlr.de}
\address{$^3$Department of Electrical and Computer Engineering\\
	Technical University of Munich\\
        Germany
	}
\email{violetta.weger@tum.de}
\subjclass[2020]{94B05,94B65}

\keywords{Ring-linear code, Lee distance, Generalized Weights}
\maketitle
\begin{abstract}
   This paper provides new and improved Singleton-like bounds for Lee metric codes over integer residue rings. We derive the bounds using  various novel definitions of generalized Lee weights based on different notions of a support of a linear code. In this regard, we introduce three main different support types for codes in the Lee metric and analyze their utility to derive bounds on the minimum Lee distance. Eventually, we propose a new point of view to generalized weights and give an improved bound on the minimum distance of codes in the Lee metric for which we discuss the density of maximum Lee distance codes with respect to this novel Singleton-like bound.
\end{abstract}


\section{Introduction}\label{sec:intro}
The Lee metric was introduced in 1958 by Lee \cite{lee1958some} to cope with phase modulation in communication. It provides an interesting alternative to the Hamming and rank metric which are considered for orthogonal modulation and network coding, respectively. The Lee metric is most known for the celebrated result in \cite{hammons1994z} where the authors showed that some optimal non-linear binary codes can be represented as linear codes over $\mathbb{Z}/4\mathbb{Z}$ endowed with the Lee metric.
Recently, there is a renewed interest in the Lee metric due to its similarities with the Euclidean norm used in lattice-based cryptography. In fact, the Lee metric was introduced to cryptography in \cite{leeZ4} and its hard problems were further studied in \cite{LeeNP}. Recently, a first Lee-metric primitive \cite{fuleeca} has been submitted to the reopened NIST standardization process for digital signature schemes.

Although the Lee metric is one of the oldest metrics and has interesting properties and applications, it did not receive as much attention as  other metrics which is visible in the lacking of a well understood algebraic foundation of Lee-metric codes.  Indeed, only recently it was discovered that Lee-metric codes attain the Gilbert-Varshamov bound with high probability \cite{free}, for the length of the code tending to infinity. This aligns with famous and well studied results in the Hamming \cite{forney} and the rank metric \cite{loidreau}.
In addition, the characterization of constant Lee weight codes, initiated by Wood \cite{wood}, has only recently been completed in \cite{byrne2023bounds}.

The study of optimal codes is a classical direction 
in algebraic coding theory and the most famous bound within this direction is the Singleton bound. 
The bound gives an upper bound  on the minimum distance of a code, given other parameters of the code. Thus, codes attaining this bound have the maximal possible minimum distance and in turn the largest error correction capacity.

In the Hamming metric, the Singleton bound was introduced by Singleton \cite{singleton} and was already studied by Komamiya \cite{komamiya}. 
Codes in the Hamming metric that are attaining the Singleton bound are called Maximum Distance Separable (MDS) codes and are constituting some of the most used and studied codes in coding theory. It is well known that codes attain the Hamming-metric Singleton bound with high probability when letting the size of the finite field $q$ tend to infinity. 
Instead, if one lets the length $n$ tend to infinity, the probability for a code to attain the Singleton bound tends to 0.

When changing the underlying metric, this behaviour can drastically change. This is for example the case in the rank metric. 
While the rank metric is younger than the Lee metric, first being introduced by Delsarte \cite{delsarte} in 1978 and reintroduced by Gabidulin \cite{gabidulin} and Roth \cite{roth}, its Singleton bound was already given in \cite{gabidulin} in 1985 and its optimal codes, called Maximum Rank Distance (MRD) codes, have been well studied since. In fact, we know that for $\mathbb{F}_{q^m}$-linear codes, MRD codes are dense when letting $q$ or $m$ tend to infinity \cite{ale}. For $\mathbb{F}_q$ linear codes, however MRD codes are sparse when letting $q$ tend to infinity \cite{anina}, except for some special cases, where $m$ or $n$ are 2 \cite{heide2, BR19, heide}.

The situation in the Lee metric is completely different. 
Indeed, the first Singleton bound in the Lee metric has only been stated in 2000 by Shiromoto \cite{shiromoto}. This bound is tight, as there exists a code attaining it. 
 However, the recent paper \cite{byrne2023bounds}
revealed that this example is in fact the only non-trivial linear code that is optimal with respect to this Lee-metric Singleton bound. Thus, such optimal codes are  "extremely" sparse and show the need of a more thorough study of bounds in the Lee metric and their optimal codes. 
The used puncturing argument to derive the classical Singleton bound and thus also the Lee-metric alternative in \cite{shiromoto} proves to be not suitable for the Lee metric. One thus requires other techniques to derive a Singleton-like bound which appear to are more tailored to the Lee metric setting. In fact, one  possible technique is through generalized weights, first introduced in \cite{gen}.

In this paper we introduce several possible definitions of Lee supports of a code, which allows us  to define generalized Lee weights. 
We compare the resulting Lee-metric Singleton bounds and compute the density of their optimal codes.
The main contribution of this paper is a new Lee-metric Singleton bound for which the optimal codes are not sparse.

This paper is structured as follows. In Section \ref{sec:prelim} we provide the preliminary basics on ring-linear codes and the Lee metric, which will be helpful for the remainder of the paper. 
We also provide the already known Singleton-like bounds for the Lee metric and the densities of their optimal codes. 
Section \ref{sec:genWeight_subcodes} serves as a recap of generalized Hamming weights for finite fields and for finite integer rings. We restate there the main properties and definitions and discuss how to derive bounds on the minimum Lee distance using generalized weights. In Sections \ref{sec:joinsupport} and \ref{sec:colsupport} we introduce two new support definitions in the Lee metric deriving from the ideas used in the Hamming metric. For both of them we are able to derive new bounds on the minimum Lee distance defining generalized Lee weights with respect to the new supports. Additionally, we show that the newly proposed generalized Lee weights are invariant under isometries in the Lee metric, and we discuss the density of optimal codes. Even though these new bounds are sharper than the existing bounds on the minimum Lee distance, they are still not sufficiently tight. Therefore, we introduce new generalized weights based on filtrations of a code in Section \ref{sec:genWeight_filtration}. Together with some additional parameters for a generator matrix of a filtration of a code we are able to derive an improved bound on the minimum Lee distance. Again, we discuss the density of optimal codes and the invariance under isometries for the new generalized weights. In Section \ref{sec:comparison} we compare all the bounds for several parameters. 
Finally, we conclude the paper in Section \ref{sec:conclusions}.

\section{Preliminaries}\label{sec:prelim}
We start this section by introducing the main definitions, notions and results used in the course of this paper. Throughout this paper we denote by $p$ a 
prime number, by $s$ a positive integer, and by $\zps$ the integer residue ring. Furthermore, for any integer $i \in \set{0, \ldots, s-1}$, we write $\langle p^i \rangle$ to denote either the ideal $p^i(\zps)$ or the submodule $p^i(\zps)^n$. 

\subsection{Ring-Linear Codes}\hfill\\
Classical coding theory considers finite fields $\mathbb{F}_q$ with $q$ elements and a linear code $\mathcal{C}$ is a subspace of the vector space $\mathbb{F}_q^n$. Thus, $\mathcal{C}$ has a dimension $k$, which determines its size $\mid \mathcal{C} \mid =q^k$ and the minimum number of generators.
Such a code $\mathcal{C}$ can then be represented through a generator matrix $G \in \mathbb{F}_q^{k \times n}$ or a parity-check matrix $H \in \mathbb{F}_q^{(n-k) \times n}$, which have the code as image, respectively as kernel.

In this paper, we focus on codes over $\mathbb{Z}/p^s\mathbb{Z}.$

\begin{definition}\label{def:code}
    A linear \textit{code} $\code \subseteq (\zps)^n$ is a $\zps$-submodule of $(\zps)^n$.
\end{definition}

Due to the fundamental theorem of finite abelian groups, a linear code is isomorphic to the following direct sum of $\mathbb{Z}/p^s\mathbb{Z}$-modules
$$\mathcal{C} \cong \bigoplus\limits_{i=1}^K \left(\mathbb{Z}/p^s\mathbb{Z} \right) / \langle p \rangle^{\lambda_i}.$$

The type of a module $\mathcal{C}$ is then defined as the partition
$$\lambda= (\underbrace{s, \ldots, s}_{k_0}, \underbrace{s-1, \ldots, s-1}_{k_1}, \ldots, \underbrace{1, \ldots, 1}_{k_{s-1}})$$ or equivalently
$\lambda=(s^{k_0}(s-1)^{k_1}\cdots 1^{k_{s-1}})$. Instead of using this well-known notation from the theory of  modules, we prefer the  notation $(k_0, \ldots, k_s)$, called \emph{subtype} of the code $\mathcal{C}$, due to its simplicity.

 We call $n$ the \textit{length} of the code $\code$ and its elements \textit{codewords}. We can define the $\mathbb{Z}/p^s\mathbb{Z}$-dimension of a linear code $\mathcal{C}$ as
  \begin{align*}
        k := \log_{p^s}(\card{\code}).
    \end{align*}
The \textit{rank} $K$ of a code is given by the number of generators, i.e., $K := \sum_{i = 0}^{s-1} k_i$. Additionally,  $k_0$ is called the \textit{free rank} of the code $\code$. 
We denote by the $\mathbb{Z}/p^s\mathbb{Z}$-\emph{dimension} of the code $\mathcal{C}$ the following 
$$k= \log_{p^s}\left(|\mathcal{C}|\right).$$
Note that the $\mathbb{Z}/p^s\mathbb{Z}$-dimension of a code is not necessarily an integer. In fact, $k$ is determined by the subtype as $$k= \sum\limits_{i=0}^{s-1} \frac{s-i}{s}k_i.$$
In general it holds that $0\leq k_0 \leq k \leq K \leq n.$
If the rank and the $\mathbb{Z}/p^s\mathbb{Z}$-dimension of a code coincide, we call the code \emph{free}. 
It has been shown in \cite{free} that free codes are dense as $p \longrightarrow \infty$ but they are neither dense nor sparse as the  length $n$ or $s$ tend to infinity.


As in the finite field case, ring-linear codes are represented through a generator matrix or a parity-check matrix. 
\begin{definition}\label{def:GH}
    Consider a code $\code \subseteq (\zps)^n$ of rank $K$. A matrix $G \in (\zps)^{K \times n}$ is called a generator matrix of $\code$ if the rows of $G$ span the code. A \textit{parity-check matrix} $H$ is an $(n-K)\times n$ matrix over $\zps$ whose its null-space coincides with $\code$.
\end{definition}
Usually it is helpful to consider these matrices in their systematic form.
\begin{proposition}\label{prop:systematicG}
    Let $\code$ be a linear code in $\zpsk{n}$ of subtype $(k_0, \ldots, k_{s-1})$ and rank $K$. Then $ \code$ is permutation equivalent to a code having a generator matrix $\Gsys \in \zpsk{K \times n}$ of the form
        \begin{equation}\label{systematicformG}
            \Gsys =
            \begin{pmatrix}
                \mathbb{I}_{k_0}  & A_{1,2} & A_{1,3} & \cdots & A_{1,s} & A_{1,s+1}  \\
                0 & p\mathbb{I}_{k_1} &  pA_{2,3} & \cdots & pA_{2,s} & pA_{2,s+1}  \\
                0 & 0 & p^2 \mathbb{I}_{k_2} & \cdots & p^2A_{3,s}  & p^2 A_{3,s+1} \\
                \vdots & \vdots &  \vdots & & \vdots & \vdots \\
                0 & 0 & 0  & \cdots &  p^{s-1} \mathbb{I}_{k_{s-1}} & p^{s-1} A_{s,s+1} 
            \end{pmatrix},
        \end{equation}
    where $A_{i,s+1} \in (\mathbb{Z} / p^{s+1-i}\mathbb{Z})^{k_{i-1} \times (n-K)}, A_{i,j} \in (\mathbb{Z} / p^{s+1-i} \mathbb{Z})^{k_{i-1} \times k_j}$ for $j \leq s$.
    In addition, the code $\code$ is permutation equivalent to a code having a parity-check matrix $H \in \zpsk{(n-k_0) \times n}$ of the form
    \begin{equation}\label{systematicformH}
        H_{\mathsf{sys}} =
        \begin{pmatrix}
            B_{1,1} & B_{1,2} & \cdots & B_{1,s-1} & B_{1,s} & \mathbb{I}_{n-K} \\
            pB_{2,1} & pB_{2,2} & \cdots & pB_{2,s-1} & p\mathbb{I}_{k_{s-1}} & 0  \\
            p^2B_{3,1} & p^2B_{3,2} & \cdots & p^2 \mathbb{I}_{k_{s-2}} & 0 & 0 \\
            \vdots & \vdots & & \vdots & \vdots & \vdots \\
            p^{s-1}B_{s,1} & p^{s-1}\mathbb{I}_{k_1} & \cdots &  0 & 0 & 0
        \end{pmatrix},
    \end{equation}
    where $B_{1,j} \in (\mathbb{Z} / p^s\mathbb{Z})^{(n-K) \times k_{j+1}}, B_{i,j} \in (\mathbb{Z} / p^{s+1-i} \mathbb{Z})^{k_{s-i+1} \times k_{j+1}}$ for $i > 1$.
\end{proposition}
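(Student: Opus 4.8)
The plan is to establish the generator-matrix form \eqref{systematicformG} by a constructive row reduction over the finite chain ring $\zps$, and then to read off the parity-check form \eqref{systematicformH} from it. The one algebraic fact driving everything is that $\zps$ is a local ring with maximal ideal $\langle p \rangle$: an element is a unit exactly when it is not divisible by $p$, and every nonzero element factors as $up^{j}$ with $u$ a unit and a well-defined $0\le j\le s-1$. Row operations --- scaling a row by a unit, adding a ring multiple of one row to another, and swapping rows --- leave the $\zps$-module spanned by the rows unchanged, whereas permuting columns realizes exactly the permutation equivalence allowed in the statement.

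First I would take an arbitrary generating matrix of $\code$ and extract the free part. As long as some row fails to be divisible by $p$, it has a unit entry; I permute rows and columns to move such an entry to the top-left corner, scale its row so the pivot equals $1$, and use it to eliminate the remaining entries in that column. Passing to the complementary submatrix and repeating, I obtain as many unit pivots as possible and, after permuting the pivot columns to the front, a first block of rows of the shape $(\mathbb{I}_{k_{0}}\mid A_{1,2}\mid\cdots\mid A_{1,s+1})$. By construction every remaining row is now divisible by $p$, since any surviving unit entry would have yielded a further pivot.

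Next I would peel off the deeper levels inductively. Each remaining row equals $p$ times a vector living in $(\Z/p^{s-1}\Z)^{n}$, and applying the same elimination to these divided rows produces the pivot block $p\,\mathbb{I}_{k_{1}}$; iterating down the chain $\langle p\rangle\supset\langle p^{2}\rangle\supset\cdots$ yields the successive diagonal blocks $p^{i-1}\mathbb{I}_{k_{i-1}}$. At level $i$ the pivot row carries the factor $p^{i-1}$, so an off-diagonal entry matters only modulo $p^{s+1-i}$, which is precisely the constraint $A_{i,j}\in(\Z/p^{s+1-i}\Z)^{k_{i-1}\times k_{j}}$ and $A_{i,s+1}\in(\Z/p^{s+1-i}\Z)^{k_{i-1}\times(n-K)}$. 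That exactly $k_{i-1}$ pivots appear at level $i$ is forced by the structure theorem recalled above: the subtype $(k_0,\dots,k_{s-1})$ is a module invariant, so the number of generators of each order $p^{s-i+1}$ cannot depend on the order of elimination.

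The main obstacle I anticipate is bookkeeping rather than conceptual: one must check that the column permutations aligning the pivots at the different levels can be chosen consistently, so that the resulting matrix is genuinely block upper triangular, and that the successive reductions modulo lower powers of $p$ record each off-diagonal block in the correct residue ring. For the parity-check form I would verify directly that $\Gsys\,\trans{H_{\mathsf{sys}}}=0$, so that $\code$ is contained in the kernel of $H_{\mathsf{sys}}$, and then close the argument with the cardinality count $\card{\code}=p^{sk}$ to force equality of $\code$ with that kernel. The subtlety worth flagging here is that $H_{\mathsf{sys}}$ has $n-k_{0}$ rows rather than $n-K$: over a non-field ring the checks must also detect the torsion part of the code, and the staircase of $p$-scaled identity blocks $p^{i-1}\mathbb{I}_{k_{s-i+1}}$ encodes exactly these torsion conditions, with coefficient rings constrained by the same power-of-$p$ annihilation as on the generator side.
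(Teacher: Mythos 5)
The paper gives no proof of this proposition at all: it is stated as a known structural fact about codes over the chain ring $\zps$ (it is the standard systematic form going back to the literature on codes over $\Z/4\Z$ and its generalizations), so there is no in-paper argument to compare against. Your proposal is essentially the standard proof and is correct in outline: row reduction using that $\zps$ is local with maximal ideal $\langle p\rangle$, unit pivots at level $0$, inductive peeling along the chain $\langle p\rangle \supset \langle p^2\rangle \supset \cdots$ to produce the blocks $p^{i-1}\mathbb{I}_{k_{i-1}}$, the observation that a block row carrying the factor $p^{i-1}$ determines its entries only modulo $p^{s+1-i}$ (which is exactly the residue-ring constraint on $A_{i,j}$), invariance of the pivot counts $k_{i-1}$ via the uniqueness part of the structure theorem, and permutation equivalence absorbing the column moves.

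One step deserves tightening. For the parity-check half you write that you would ``verify directly that $\Gsys\,\trans{H_{\mathsf{sys}}}=0$,'' but this presupposes that the blocks $B_{i,j}$ are already in hand; the proposition asserts their \emph{existence}, so they must be produced. Two standard ways to close this: either solve the block-triangular system $\Gsys\,\trans{H_{\mathsf{sys}}}=0$ explicitly for the $B_{i,j}$ in terms of the $A_{i,j}$ (the staircase of identity blocks makes this system solvable level by level), or apply your already-proved generator-matrix statement to the dual code $\code^\perp$, which has subtype $(n-K,k_{s-1},\ldots,k_1)$, and reverse the column order. After that, your cardinality count does finish the job: the image of $x\mapsto H_{\mathsf{sys}}x$ has size $p^{s(n-K)+\sum_{j=1}^{s-1} jk_j}$ by the staircase structure, whence $\card{\Ker(H_{\mathsf{sys}})} = p^{\sum_{i=0}^{s-1}(s-i)k_i} = p^{sk} = \card{\code}$, forcing equality with $\code$. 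With that existence step made explicit, the argument is complete.
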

We call the forms in \eqref{systematicformG} and \eqref{systematicformH} the \textit{systematic form} of a generator matrix and a parity-check matrix, respectively.

Additionally to the subtype of a code $\code \subseteq (\zps)^n$, we can define a similar parameter going over the columns of a generator matrix of $\code$.
\begin{definition}\label{def:supp_subtype}
    Let $\code \subseteq (\zps)^n$ be a linear code of rank $K$. For each $j \in \set{1, \ldots , n}$ consider the $j$-th coordinate map
    \begin{align*}
        \begin{array}{rccl}
            \pi_j: & \zpsn & \longrightarrow & \zps \\
                    & (c_1,\ldots,c_n) & \longmapsto & c_j
        \end{array}.
    \end{align*}
    The \textit{support subtype} of $\code$ is defined to by an $(s+1)$-tuple $(n_0(\code), \ldots, n_s(\code))$, where $n_i(\code)$ counts the number coordinates $j \in \set{1, \ldots ,n}$ belonging to ideal $\langle p^i \rangle$, i.e.,
    \begin{align*}
        n_i(\code) := \card{\set{ j \in \set{1, \ldots, n} \st \langle \pi_j(\code) \rangle = \langle p^i \rangle }}.
    \end{align*}
    A code with $n_s(\code) = 0$ is called \textit{non-degenerate}.
\end{definition}
We will simply write $n_i$ instead of $n_i(\code)$ if the code $\code$ is clear from the context. 


\subsection{Lee Metric}\hfill\\
In this paper we will focus on the Lee metric introduced in \cite{lee1958some}. This metric was introduced to cope with phase modulation in communications. 
However, we will often refer and compare the Lee metric to the Hamming metric. 

Thus, recall that for two $n$-tuples $x, y \in (\zps)^n$ their \textit{Hamming distance} is defined to be the number of positions where they differ, i.e.,
\begin{align*}
    \HD(x, y) = \card{\set{i \in \set{1, \ldots, n} \st x_i \neq y_i}}.
\end{align*}
For a single element $x\in(\zps)^n$ its \textit{Hamming weight} is the Hamming difference to zero, i.e., the number of nonzero entries of $x$, and we denote it by $\HW(x) := \HD(x, 0)$. The \textit{minimum Hamming distance} of a linear code $\code \subseteq (\zps)^n$ is then defined as
\begin{align*}
    \dist_{\mathsf{H}}(\code) := \min\set{ \HW(x) \st x \in \code\setminus\set{0}}.
\end{align*}
Let us now introduce the Lee metric.
\begin{definition}\label{def:leemetric}
    Given an integer residue ring $\zps$ and consider an element $a \in \zps$ interpreted as an integer in $\set{0, \ldots , p^s-1}$. The \textit{Lee weight} $a\in\zps$ is given by
    \begin{align*}
        \LW(a) := \min\set{a, \card{p^s-a}}.
    \end{align*}
    For $x \in (\zps)^n$ its Lee weight is defined in an additive fashion. That is,
    \begin{align*}
        \LW(x) = \sum_{i = 1}^n \LW(x_i).
    \end{align*}
\end{definition}
Similarly to the Hamming metric, the Lee weight induces a distance which we refer to as \textit{Lee distance}. In fact, for $x, y \in (\zps)^n$ their Lee distance is defined to be 
\begin{align*}
    \LD(x, y) = \LW(x-y).
\end{align*}
Note that the Hamming weight of an element $x \in (\zps)^n$ is a natural lower bound on its Lee weight. Additionally, the Lee weight of each entry $x_i$ can never exceed $M := \floor{p^s/2}$. Hence, we have
\begin{align*}
    0 \leq \HW(x) \leq \LW(x) \leq \HW(x)M \leq n M.
\end{align*}


Considering the Lee metric for linear codes $\code$ over $\zps$, we are able to introduce the minimum Lee distance of $\code$.
\begin{definition}\label{def:minDistL}
    Given a linear code $\code \subseteq (\zps)^n$, we define its \textit{minimum Lee distance} as
    \begin{align*}
        \LD(\code) := \min\set{ \LW(c) \st c \in \code\setminus\set{0}}.
    \end{align*}
\end{definition}
Again, we easily observe that 
$$\HD(\code) \leq \LD(\code) \leq M \HD(\code).$$

The linear isometries of the Lee metric are $\{-1,1\} \rtimes S_n,$ that is we can permute the entries and multiply each entry with either 1 or $-1.$

\subsection{Singleton-like Bounds over $\zps$}\hfill\\
One of the major research directions in coding theory is to bound the minimum distance of a code. 
This will then, in turn, bound the error-correction capability of a code $\code \subseteq \zpsn$. In fact, the higher the minimum distance of $\code$, the more errors can be corrected.

The task of bounding the minimum distance, clearly depends on the metric  used to endow the ambient space. 
For the Hamming metric over finite fields, the best-known bound is the Singleton bound describing the trade-off between the minimum Hamming distance of a code and its dimension.
\begin{theorem}[Singleton Bound]
    Given a linear code $\code \subseteq \field_q^n$ of dimension $k$ over $\mathbb{F}_q$, its minimum Hamming distance is upper bounded by
    \begin{align*}
        \HD(\code) \leq n - k + 1.
    \end{align*}
\end{theorem}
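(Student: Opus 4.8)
The plan is to prove the classical Singleton bound for a linear code $\code \subseteq \field_q^n$ of dimension $k$ via a puncturing (or equivalently, a projection) argument. The key observation is that if we delete coordinates from a code, the dimension cannot drop too quickly, while the minimum distance drops by at most the number of deleted coordinates. By balancing these two effects, one forces the desired inequality.

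First I would consider the projection map $\pi: \field_q^n \to \field_q^{n-d+1}$ that deletes the last $d-1$ coordinates, where $d = \HD(\code)$ denotes the minimum Hamming distance. The core claim is that $\pi$ restricted to $\code$ is injective. Indeed, suppose $x, y \in \code$ with $\pi(x) = \pi(y)$. Then $x - y$ is a codeword (by linearity) that is zero on all of the first $n-d+1$ coordinates, so its Hamming weight is at most $d - 1$. Since the only codeword of weight strictly less than $d$ is the zero codeword, we conclude $x = y$, establishing injectivity.

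From injectivity it follows that $\card{\code} = \card{\pi(\code)}$. Now $\pi(\code)$ is a subspace of $\field_q^{n-d+1}$, so
\begin{align*}
    q^k = \card{\code} = \card{\pi(\code)} \leq q^{n-d+1}.
\end{align*}
Taking logarithms base $q$ yields $k \leq n - d + 1$, which rearranges to $d \leq n - k + 1$, i.e. $\HD(\code) \leq n - k + 1$, as desired.

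The argument is almost entirely routine, so there is no serious obstacle in the finite-field case; the only point requiring care is the verification of injectivity, which hinges crucially on linearity (so that $x - y \in \code$) and on the defining property of the minimum distance. I would emphasize that this very puncturing technique is precisely what the introduction flags as \emph{not} transferring cleanly to the Lee metric over $\zps$: deleting a coordinate can reduce the Lee weight by as much as $M = \floor{p^s/2}$ rather than by $1$, and the $\zps$-dimension $k$ need not behave integrally under projection. This is exactly why the paper must develop alternative techniques based on generalized weights rather than mimicking the proof above.
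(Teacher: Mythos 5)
Your proof is correct: the paper states this classical theorem without proof, and your puncturing/projection argument is exactly the standard one the paper itself invokes later (``by a puncturing argument, as in the classical case'') when deriving Shiromoto's Lee-metric analogue. Your closing remark about why this technique degrades over $\zps$ with the Lee weight also matches the paper's own discussion, so there is nothing to correct.
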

Codes achieving this bound are called \textit{maximum distance separable} (MDS). It is well known that for $q$ tending to infinity a random linear code will attain the Singleton bound. Hence, MDS codes are dense for $q$ tending to infinity with high probability. On the other hand, if we let the length $n$ grow, MDS codes are sparse. 
 Note that this also follows immediately from the famous MDS conjecture \cite{segre}, which states that if $q$ is odd, then an MDS code must have $n \leq q+1$. In the case where $q=2^s$, and $k = 3$ or $k = q -1$, in which case $n \leq q + 2$. 
A very analogous bound can be established over finite integer residue rings $\zps$ for the Hamming metric. 
\begin{proposition}\label{MDS}
     Let $\code \subseteq \left( \mathbb{Z}/p^s\mathbb{Z}\right)^n$ be a linear code of $\mathbb{Z}/p^s\mathbb{Z}$-dimension $k$, then 
     $$\HD(\code) \leq n-k+1.$$
\end{proposition}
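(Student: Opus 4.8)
The plan is to adapt the classical puncturing argument which—unlike in the Lee metric—carries over verbatim to the Hamming metric over $\zps$, precisely because it relies only on cardinalities and not on any field structure. Set $d := \HD(\code)$; if $\card{\code} = 1$ the bound is trivial, so assume $d \geq 1$. Choose any subset $T \subseteq \set{1, \ldots, n}$ of size $d-1$ and let $\pi \colon \zpsn \to (\zps)^{n-d+1}$ be the coordinate projection that deletes the positions indexed by $T$.

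First I would show that the restriction $\pi|_\code$ is injective. Indeed, if $\pi(c) = \pi(c')$ for $c, c' \in \code$, then $c - c'$ is a codeword supported entirely on $T$, whence $\HW(c - c') \leq \card{T} = d - 1 < d$. By the definition of the minimum Hamming distance in Definition \ref{def:minDistL} (its Hamming analogue), this forces $c - c' = 0$, i.e.\ $c = c'$. Note that this step uses nothing about the ring beyond the fact that every nonzero codeword has Hamming weight at least $d$; in particular the code need not be free, and the systematic form of Proposition \ref{prop:systematicG} is not required.

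Second, injectivity yields $\card{\code} = \card{\pi(\code)}$, and since $\pi(\code) \subseteq (\zps)^{n-d+1}$ we obtain the exact cardinality bound $\card{\code} \leq (p^s)^{n-d+1}$. Applying the (strictly increasing) function $\log_{p^s}$ to both sides and recalling that $k = \log_{p^s}\!\big(\card{\code}\big)$, we get $k \leq n - d + 1$, which rearranges to $\HD(\code) = d \leq n - k + 1$, as claimed.

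I do not expect a genuine obstacle here: the substantive content is simply the observation that the field-theoretic puncturing argument is purely combinatorial and therefore insensitive to the passage from $\Fq$ to $\zps$. The only point meriting a word of care is the use of the possibly non-integer dimension $k = \log_{p^s}\!\big(\card{\code}\big)$; but since the cardinality inequality $\card{\code} \leq (p^s)^{n-d+1}$ holds exactly regardless of whether $k$ is an integer, the monotonicity of $\log_{p^s}$ preserves the inequality and delivers the bound without any appeal to the subtype $(k_0, \ldots, k_{s-1})$ of $\code$.
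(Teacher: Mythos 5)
Your proof is correct and is exactly the classical puncturing (projection-injectivity) argument that the paper itself invokes for this bound: Proposition \ref{MDS} is stated there without proof as the standard analogue over $\zps$, and the cardinality form $\card{\code} \leq (p^s)^{n-\HD(\code)+1}$ that your argument establishes is precisely the version the paper records immediately afterwards for non-linear codes. Your handling of the possibly non-integer dimension $k = \log_{p^s}(\card{\code})$ via monotonicity of the logarithm is the right (and only) point of care, and it is handled correctly.
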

This Singleton-like bound does hold for non-linear codes as well, stating that 
$\mid \mathcal{C} \mid \leq q^{n- \HD(\code) +1}$. For linear codes only, this bound has been further tightened:
\begin{proposition}[\cite{doughertybook, douandshi})]\label{MDR}
     Let $\code \subseteq \left( \mathbb{Z}/p^s\mathbb{Z}\right)^n$ be a linear code of rank $K$, then 
     $$\HD(\code) \leq n-K+1.$$
\end{proposition}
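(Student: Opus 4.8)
The plan is to adapt the classical puncturing argument, but to track the \emph{minimum number of generators} rather than the cardinality. Write $d := \HD(\code)$ and fix any set $T \subseteq \set{1, \ldots, n}$ of $d-1$ coordinates. Let
\[
    \pi_T \colon \zpsn \longrightarrow \zpsk{n-d+1}
\]
be the projection deleting the coordinates indexed by $T$. First I would show that the restriction $\pi_T|_{\code}$ is injective: if $c \in \code$ satisfies $\pi_T(c) = 0$, then $c$ is supported inside $T$, so $\HW(c) \leq d-1 < d$; since $d$ is the minimum Hamming distance, this forces $c = 0$. Consequently $\pi_T(\code)$ is a $\zps$-submodule of $\zpsk{n-d+1}$ that is isomorphic to $\code$ as a $\zps$-module, and therefore has the same rank $K$.

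It remains to bound the rank of a submodule of a free module, which is the crux of the argument and the step that genuinely improves on Proposition \ref{MDS}. The key observation is that the rank $K$, being the minimal number of generators of the underlying finite abelian $p$-group, can be read off from the $p$-torsion: for any $\zps$-module $N$ one has $K = \dim_{\Fp} N[p]$, where $N[p] := \set{x \in N \st px = 0}$ is the socle. I would deduce this from the structure $N \cong \bigoplus_i \zps / \langle p \rangle^{\lambda_i}$, since each cyclic summand contributes exactly one dimension to the socle. The assignment $N \mapsto N[p]$ preserves inclusions, so from $\pi_T(\code) \subseteq \zpsk{n-d+1}$ I obtain
\[
    K = \dim_{\Fp} \bigl( \pi_T(\code)[p] \bigr) \leq \dim_{\Fp} \bigl( \zpsk{n-d+1}[p] \bigr) = n - d + 1,
\]
where the last equality holds because $\zpsk{m}[p] = \bigl(\langle p^{s-1} \rangle\bigr)^m$ is an $\Fp$-vector space of dimension $m$. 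Rearranging gives $d \leq n - K + 1$, as claimed.

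The main obstacle is precisely the passage from cardinality to rank: the naive puncturing argument only controls $\card{\pi_T(\code)} = \card{\code}$ and hence bounds the $\zps$-dimension $k$, recovering merely the weaker Proposition \ref{MDS}. Replacing the size invariant by the socle dimension $\dim_{\Fp} N[p]$ is what allows the injective projection into $\zpsk{n-d+1}$ to bound $K$ rather than $k$, and I would take care to justify carefully that the socle dimension is monotone under submodule inclusion and coincides with the number of cyclic factors, i.e.\ with the rank.
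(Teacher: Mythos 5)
Your proof is correct: the injectivity of the puncturing map $\pi_T|_{\code}$ is the standard Singleton step, and the invariant $\rk(N) = \dim_{\Fp} N[p]$ is valid by the structure theorem (each nonzero cyclic factor $\zps/\langle p^{\lambda_i}\rangle$ contributes exactly one $\Fp$-dimension to the $p$-torsion), is preserved by the isomorphism $\code \cong \pi_T(\code)$, and is monotone under inclusion since $N \subseteq M$ gives $N[p] \subseteq M[p]$ --- which, together with $\zpsk{m}[p] = \langle p^{s-1}\rangle^m \cong \Fp^m$, closes the argument. Note, however, that the paper does not prove this proposition at all; it is quoted from the literature, and the paper's surrounding discussion indicates the standard route there: the socle $\code \cap \langle p^{s-1} \rangle$ is (after dividing by $p^{s-1}$) an $\Fp$-linear code of length $n$ and dimension equal to the rank $K$ of $\code$, and being a subcode it has minimum Hamming distance at least $\HD(\code)$, so the classical field Singleton bound applied to it yields $\HD(\code) \leq n - K + 1$ directly. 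Your argument and the standard one are two organizations of the same underlying fact $K = \dim_{\Fp}(\,\cdot\,[p])$: the literature proof pushes the whole code down to its socle and invokes the finite-field bound as a black box, while you keep the puncturing argument at the level of $\code$ itself and use the socle only to transport the rank bound through the free module $\zpsk{n-d+1}$. Your version is marginally more self-contained (it reproves rather than cites the field case) and makes explicit why rank, unlike cardinality, survives the projection --- which is exactly the point where the naive argument only recovers Proposition \ref{MDS}; the socle-subcode version is shorter and is the one that explains the paper's later remark that $\code$ is MDR if and only if its socle is MDS over $\Fp$. One small stylistic caution: you justify monotonicity of the number of generators via the socle, which is the right move --- the alternative Nakayama invariant $\dim_{\Fp} N/pN$ also computes the rank but its monotonicity under submodules is less immediate, so your choice of $N[p]$ over $N/pN$ is exactly what makes the argument clean.
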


To differentiate codes achieving the bound in Proposition \ref{MDR} from MDS codes, we will refer to the first as \textit{maximum distance codes with respect to the rank} (or MDR codes for short). However, MDS codes and MDR codes are closely related. In fact, any linear MDS code is always an MDR code. Vice versa we can say that a code $\code \subseteq \zpsn$ is MDR if and only if the socle $\code \cap \langle p^{s-1} \rangle$ can be identified with an MDS over $\field_p$. We can therefore directly apply the results from finite fields which means that by the MDS conjecture, MDR codes are sparse as $n$ grows large and, as usual, they are dense as $p$ grows large.

Note that the relation $$\HD(\code) \leq \LD(\code) \leq M \HD(\code)$$
immediately gives raise to a bound for  codes in the Lee metric, i.e., $$\LD(\code) \leq M(n-\lfloor k \rfloor +1).$$

Apart from this obvious bound,  the first Lee-metric Singleton-like bound was introduced by Shiromoto in 2000 \cite{shiromoto}. 
\begin{theorem}[\cite{shiromoto}]\label{shir}
    Consider a linear code $\code \subseteq \zpsn$ of $\mathbb{Z}/p^s\mathbb{Z}$-dimension $k$. Then the following bound holds
    \begin{align*}
        \floor{ \frac{\LD(\code) - 1}{M} } \leq n - k.
    \end{align*}
\end{theorem}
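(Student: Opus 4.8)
The plan is to reduce the Lee-metric statement to the Hamming-metric Singleton bound already available in Proposition \ref{MDS}, by passing to a carefully chosen subcode. The key observation is that the quantity $\floor{(\LD(\code)-1)/M}$ is essentially a ``Hamming-support count'' once we normalize Lee weights by the maximal per-coordinate Lee weight $M = \floor{p^s/2}$. So the goal is to produce, from a minimum-Lee-weight codeword or from the code itself, a combinatorial structure whose Hamming parameters are controlled by $\LD(\code)$ and $k$.

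\medskip

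First I would set $d := \LD(\code)$ and let $c \in \code$ be a codeword achieving $\LW(c) = d$. Write $t := \floor{(d-1)/M}$, so that $d \leq tM + M = (t+1)M$, i.e.\ the minimum Lee weight is bounded by $(t+1)$ times the maximal single-coordinate contribution. The strategy is to bound the Hamming weight (or Hamming distance) of the code from below using the Lee distance: since each coordinate contributes at most $M$ to the Lee weight, any nonzero codeword satisfies $\LW(x) \leq M \cdot \HW(x)$, hence $\HW(x) \geq \LW(x)/M \geq d/M$. Taking the minimum over nonzero codewords gives $\HD(\code) \geq \lceil d/M \rceil \geq \floor{(d-1)/M} + 1 = t+1$. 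Combining this with the Hamming Singleton bound of Proposition \ref{MDS}, namely $\HD(\code) \leq n - k + 1$, immediately yields $t + 1 \leq n - k + 1$, that is $t = \floor{(d-1)/M} \leq n - k$, which is exactly the claimed inequality.

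\medskip

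The main subtlety I expect is verifying the ceiling/floor arithmetic linking $\lceil d/M\rceil$ and $\floor{(d-1)/M}+1$: these agree because for any positive integers $d, M$ one has $\lceil d/M \rceil = \floor{(d-1)/M} + 1$, and I would state this small integer identity explicitly rather than wave at it, since the whole sharpness of the bound hinges on using the floor rather than a cruder division. A second point worth a line of care is that Proposition \ref{MDS} is phrased for the $\zps$-dimension $k$ (not necessarily an integer), so I should confirm that the inequality $\HD(\code) \leq n - k + 1$ applies verbatim with this $k$; since the excerpt states Proposition \ref{MDS} precisely in terms of the $\mathbb{Z}/p^s\mathbb{Z}$-dimension $k$, no adjustment is needed.

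\medskip

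In short, the proof is a two-step sandwich: the elementary per-coordinate bound $\LW(x) \leq M\,\HW(x)$ converts the Lee minimum distance into a lower bound on the Hamming minimum distance, and then the already-established Hamming Singleton bound (Proposition \ref{MDS}) supplies the upper bound on that Hamming distance in terms of $n$ and $k$. The only genuinely metric-specific ingredient is the inequality relating Lee and Hamming weights, everything else being the finite-field-style Singleton argument transported to $\zps$; I do not anticipate a hard obstacle, only the need to handle the floor function cleanly so that the bound comes out in the sharp form stated.
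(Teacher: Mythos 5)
Your proof is correct and coincides with the first of the two derivations the paper itself indicates: it reduces the statement to the Hamming-metric Singleton bound of Proposition \ref{MDS} via the elementary relation $\LD(\code)\leq M\,\HD(\code)$, handled sharply through the integer identity $\ceil{d/M}=\floor{(d-1)/M}+1$. The paper's alternative puncturing sketch is merely a second route to the same inequality, so nothing in your argument differs in substance from the paper's approach.
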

Note that this bound implies $\LD(\code) \leq M(n-k)+\alpha$ for some $\alpha \in \{1, \ldots, M\}$, and can be generalized easily to  $\LD(\code) \leq M(n-\lfloor k \rfloor)+\alpha$, again for some $\alpha \in \{1, \ldots, M\}$. 

The bound from Theorem \ref{shir} can be derived from the simple fact that $\LD(\code) \leq M \HD(\code)$ for a code $\code \subseteq \zpsn$ or by a puncturing argument, as in the classical case. 
That is, given a code of size $\mid \code \mid$ and minimum Lee distance $\LD(\code)$, we can  puncture the code in only $ \floor{ \frac{\LD(\code) - 1}{M} }$ coordinates, to make sure that the punctured code $\mathcal{C}'$ still has size $\mid \mathcal{C} \mid.$ In fact, every pair of codewords in $\mathcal{C}$ has Lee distance at least $\LD(\code)$, by puncturing in one position, one has to assume that their distance decreased by the maximal possible value, i.e., by $M$.
The claim then follows easily as
$$\mathcal{C}' \subseteq \left(\mathbb{Z}/p^s\mathbb{Z}\right)^{n- \floor{ \frac{\LD(\code) - 1}{M} }}.$$

Shiromoto also provided a code which attains the bound, showing its tightness. 
\begin{example}
    Let $\mathcal{C}= \langle 1,2 \rangle \subseteq \left(\mathbb{Z}/5\mathbb{Z}\right)^2$. Then, since $\LD(\code)=3$, this code attends the bound of Theorem\ref{shir}, as
       $$ \floor{ \frac{\LD(\code) - 1}{M} } = \frac{3-1}{2}=1=  n - k =2-1.$$
\end{example}
However, in \cite{byrne2023bounds}, it was observed that this is actually the only non-trivial linear code that attends the bound in Theorem \ref{shir}. The Lee-metric Singleton bound can be further improved, as shown in \cite{byrne2023bounds}, e.g. by employing the rank $K$.

\begin{corollary}\label{LMDR}
   Let $\code \subseteq \left(\mathbb{Z}/p^s\mathbb{Z}\right)^n$ be a linear code of rank $K$, then 
   $$\LD(\code) \leq M(n-K+1).$$
\end{corollary}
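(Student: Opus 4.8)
The plan is to obtain this bound as an immediate corollary of the rank-based Hamming Singleton bound in Proposition \ref{MDR} together with the elementary comparison between the Lee and Hamming metrics recorded in the preliminaries. The essential observation is that the Lee weight is controlled entrywise by the Hamming weight: since every nonzero coordinate contributes at most $M = \floor{p^s/2}$ to the Lee weight, we have $\LW(c) \leq M\,\HW(c)$ for every codeword $c \in \code$.

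First I would translate this pointwise estimate into a statement about the code by evaluating it at a codeword realizing the minimum Hamming weight. If $c^\ast \in \code \setminus \set{0}$ satisfies $\HW(c^\ast) = \HD(\code)$, then $\LD(\code) \leq \LW(c^\ast) \leq M\,\HW(c^\ast) = M\,\HD(\code)$, which gives the code-level inequality $\LD(\code) \leq M\,\HD(\code)$ (this is precisely the right-hand inequality already noted right after Definition \ref{def:minDistL}).

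Next I would invoke Proposition \ref{MDR}, which for a rank-$K$ linear code over $\zps$ yields $\HD(\code) \leq n - K + 1$. Substituting this into the inequality from the previous step immediately gives $\LD(\code) \leq M(n - K + 1)$, which is the claim.

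I expect no genuine obstacle in this argument, as all the content is already packaged in the two cited facts; the corollary is really just their product. The only point requiring a little care is the direction of the Lee/Hamming comparison: one must apply the bound $\LW \leq M\,\HW$ to a minimum-Hamming-weight codeword in order to bound $\LD(\code)$ from above, and not conflate it with the reverse inequality $\HD(\code) \leq \LD(\code)$ that holds simultaneously. It is also worth emphasizing that the strength of this bound over the weaker $\LD(\code) \leq M(n - \lfloor k \rfloor + 1)$ comes entirely from using the rank $K$ rather than the dimension $k$ in Proposition \ref{MDR}, exactly as in the passage from Proposition \ref{MDS} to Proposition \ref{MDR}.
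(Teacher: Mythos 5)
Your proposal is correct and matches the derivation the paper intends: the result is stated as a corollary of the rank-based Hamming Singleton bound (Proposition \ref{MDR}) combined with the relation $\LD(\code) \leq M\,\HD(\code)$ recorded after Definition \ref{def:minDistL}, exactly the two facts you chain together. The paper gives no further argument beyond this, so your write-up is essentially the paper's proof made explicit.
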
 

Alderson and Huntemann in \cite{alderson} provided a similar bound to Corollary \ref{LMDR} by restricting $k$ to a positive integer bounded by $n$.
\begin{theorem}[\cite{alderson}]\label{alderson}
For any code $\code$ in $\zpsk{n}$ of  $\mathbb{Z}/p^s\mathbb{Z}$-dimension $k$ a positive integer, $1<k<n$ we have that $$\LD(\code) \leq M(n-k).$$
\end{theorem}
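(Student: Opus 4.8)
The plan is to argue by contradiction: assume $\LD(\code) \geq M(n-k)+1$ and manufacture a nonzero codeword of Lee weight at most $M(n-k)$. The starting point is the same puncturing argument that yields Shiromoto's bound, but now I exploit that $k$ is an \emph{integer}. Set $t = n-k \geq 1$ and puncture $\code$ in any $t$ coordinates. If two codewords agreed off these $t$ positions, their difference would be supported there and hence have Lee weight at most $tM = M(n-k) < \LD(\code)$, forcing them to coincide; thus puncturing is injective and the punctured code $\code'$ satisfies $\card{\code'} = \card{\code} = p^{sk} = (p^s)^k$. Since $\code' \subseteq \zpsk{k}$ and $k$ is an integer, $\code'$ must be \emph{all} of $\zpsk{k}$, and the puncturing map $\phi\colon \code \to \zpsk{k}$ is a module isomorphism. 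This is precisely where integrality of $k$ enters: it makes the pigeonhole count exact, so the punctured code fills the whole space.

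Next I would pull back a basis. For $i = 1, \dots, k$ let $c^{(i)} \in \code$ be the unique codeword with $\phi(c^{(i)}) = e_i$, and write $c^{(i)} = (e_i \mid v^{(i)})$, where $v^{(i)} \in \zpsk{t}$ records the $t$ punctured coordinates. Any nontrivial combination $c = \sum_i a_i c^{(i)}$ is a nonzero codeword whose information part is $(a_1, \dots, a_k)$, of Lee weight $\sum_i \LW(a_i)$, and whose punctured part is $\sum_i a_i v^{(i)}$, of Lee weight at most $Mt$. The goal is to choose $(a_1, \dots, a_k) \neq 0$ so that $\LW(c) \leq Mt$, contradicting the distance hypothesis.

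The crude choice $c = c^{(1)}$ gives $\LW(c^{(1)}) = 1 + \LW(v^{(1)}) \leq 1 + Mt$, which overshoots by exactly one unit, so the heart of the argument is to recover that unit. If some coordinate of $v^{(1)}$ has Lee weight strictly below $M$, then $\LW(v^{(1)}) \leq Mt - 1$ and we are done. Otherwise every entry of $v^{(1)}$ is maximal (equal to $\pm M$), and I would multiply by $2$: for even $p^s$ one has $2\cdot(\pm M) \equiv 0$, so $2c^{(1)} = (2e_1 \mid 0)$ has Lee weight $2 \leq M \leq Mt$; for odd $p^s$ one has $2\cdot(\pm M) \equiv \mp 1$, so $\LW(2c^{(1)}) = 2 + t$, which is at most $Mt$ whenever $M \geq 3$ or $t \geq 2$. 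The sole residual case $p^s = 5,\ t = 1$ is where the hypothesis $k \geq 2$ finally enters: here $v^{(1)}, v^{(2)} \in \{2,3\} = \{\pm M\}$ are single maximal entries, and a combination $a_1 c^{(1)} + a_2 c^{(2)}$ with coefficients in $\{\pm 1\}$ chosen so that $a_1 v^{(1)} + a_2 v^{(2)} = 0$ cancels the punctured coordinate and leaves information weight $\LW(a_1) + \LW(a_2) = 2 = M(n-k)$.

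I expect this last step to be the main obstacle: a single pulled-back basis vector always overshoots by one, and reclaiming that unit forces the ``multiply-by-$2$'' collapse of maximal coordinates and, in the smallest modulus, a genuine use of $k \geq 2$ to cancel coordinates between two independent information directions. I would also flag that the statement tacitly assumes $M \geq 2$, i.e.\ $p^s \geq 4$: when $p^s \in \{2, 3\}$ the Lee and Hamming metrics coincide with $M = 1$, and the claimed inequality $\LD(\code) \leq n-k$ becomes a false strengthening of the Hamming Singleton bound, as the $[n, n-1]$ parity-check code over $\F_2$ shows. Assembling the cases then produces a nonzero codeword of Lee weight at most $M(n-k)$, contradicting $\LD(\code) \geq M(n-k)+1$ and establishing the bound.
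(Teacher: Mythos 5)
There is nothing in the paper to compare against: Theorem \ref{alderson} is quoted from \cite{alderson} without proof, so your argument stands or falls on its own. It stands, for $p^s \geq 4$. The puncturing step is sound: under the hypothesis $\LD(\code) \geq M(n-k)+1$, a nonzero codeword supported on the $t = n-k$ punctured positions would have Lee weight at most $Mt$, so the projection onto the remaining $k$ coordinates is injective; integrality of $k$ gives $\card{\code} = (p^s)^k$, hence the projection is onto $\zpsk{k}$ and the pullbacks $c^{(i)} = (e_i \mid v^{(i)})$ exist. The weight analysis is also correct: if some entry of $v^{(1)}$ has Lee weight less than $M$, then $\LW(c^{(1)}) \leq Mt$ already contradicts the hypothesis; otherwise every entry of $v^{(1)}$ has Lee weight exactly $M$, and doubling collapses these entries to $0$ when $p^s$ is even and to $\mp 1$ when $p^s$ is odd, giving $\LW(2c^{(1)}) = 2$ (note $2e_1 \neq 0$ precisely because $p^s \geq 4$), respectively $\LW(2c^{(1)}) = 2 + t$; both are at most $Mt$ except when $p^s = 5$ and $t = 1$. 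In that residual case your cancellation $a_1 v^{(1)} + a_2 v^{(2)} = 0$ with $a_1, a_2 \in \set{1,-1}$ is always possible, since either $v^{(1)}, v^{(2)} \in \set{2,3}$ or one of $c^{(1)}, c^{(2)}$ already lands us in the first case, and it produces a nonzero codeword of Lee weight $2 = Mt$; this is indeed the one place where $k > 1$ is used.

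Two remarks. First, your caveat about small moduli is correct and is a defect of the statement as transcribed, not of your proof: for $p^s = 2$ the binary $[n, n-1]$ parity-check code has $\LD(\code) = 2 > 1 = M(n-k)$, and for $p^s = 3$ the ternary parity-check code (or the $[4,2]$ tetracode) violates the bound likewise, so a hypothesis excluding $p^s \in \set{2,3}$ is genuinely necessary and must be understood as implicit in the citation. Second, the result in \cite{alderson} is stated for codes that need not be linear, whereas your argument uses linearity in an essential way (pulling back a generating set and forming linear combinations); since this paper's conventions make $\code$ a $\zps$-submodule, that is adequate for the statement as given here, but be aware that you have proved the linear special case of the cited theorem, not the theorem in its original generality.
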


However, in \cite{byrne2023bounds}, the authors characterized all codes attaining the above Lee-metric Singleton bounds, with the result that their optimal codes are sparse in both cases, i.e., when $n, p$ or $s \to \infty$.

 Using the newly introduced parameter for the code, i.e., the support subtype, one can easily derive an improved Lee-metric Singleton bound from the puncturing argument. 
     
     \begin{theorem}\label{thm:SBpunct}
     Let $\code \subseteq \left(\mathbb{Z}/p^s\mathbb{Z}\right)^n$ be a linear code of rank $K$ and support subtype $(n_0, \ldots, n_{s-1},0).$ Define for all $i \in \{0, \ldots, s\}$ \begin{align*}
         M_i = \floor{\frac{p^{s-i}}{2}}p^i, \ \
         B_j  = \sum_{i=j}^{s-1} n_{i}, \ \
         A_j  = \sum_{i=j}^{s-1} n_{i}M_{i}.
     \end{align*}
     Let $j \in \{1, \ldots, s-1\}$ be the smallest positive integer such that $A_j < \LD(\code)$, then
     $$K \leq n-B_j- \left\lfloor \frac{\LD(\code)-A_j-1}{M_{j-1}}\right\rfloor.$$
     \end{theorem}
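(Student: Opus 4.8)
The plan is to sharpen the puncturing argument behind Theorem \ref{shir} by exploiting that the maximal Lee weight a coordinate can carry depends on which ideal its projection generates. First I would record the elementary fact underlying everything: if a coordinate $j$ has type $i$, meaning $\langle \pi_j(\code)\rangle = \langle p^i\rangle$, then every codeword $c\in\code$ satisfies $\pi_j(c)\in\langle p^i\rangle$ and hence $\LW(\pi_j(c))\le M_i$, because $M_i = \floor{p^{s-i}/2}\,p^i$ is precisely the largest Lee weight occurring in $\langle p^i\rangle$ (the elements of $\langle p^i\rangle$ are $ap^i$ with $0\le a<p^{s-i}$, of Lee weight $p^i\min\set{a,p^{s-i}-a}$, maximized at $a=\floor{p^{s-i}/2}$). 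Summing over a coordinate set $S\subseteq\set{1,\dots,n}$, the Lee weight carried by the restriction of any codeword to $S$ is at most $\sum_{j\in S}M_{i(j)}$, where $i(j)$ denotes the type of coordinate $j$.

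Next I would set up the puncturing step. Suppose $S$ is chosen so that $\sum_{j\in S}M_{i(j)}<\LD(\code)$. Then no nonzero codeword can be supported entirely inside $S$, for otherwise its Lee weight would be at most $\sum_{j\in S}M_{i(j)}<\LD(\code)$, contradicting the minimum Lee distance. Equivalently, the projection $\pi_{\cset{S}}\colon\code\to(\zps)^{n-\card{S}}$ onto the complement of $S$ is injective, identifying $\code$ with a submodule of $(\zps)^{n-\card{S}}$ of rank $K$. Since the rank (the number of invariant factors) of a submodule of $(\zps)^N$ is at most $N$ — as is visible from the systematic form in Proposition \ref{prop:systematicG}, whose $K$ rows occupy distinct pivot columns — we conclude $K\le n-\card{S}$. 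The whole problem thus reduces to maximizing $\card{S}$ subject to the budget $\sum_{j\in S}M_{i(j)}\le\LD(\code)-1$.

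I would then solve this optimization greedily. One checks that $M_0\ge M_1\ge\cdots\ge M_{s-1}$, so the cheapest coordinates to include are those of the highest type. I would therefore first take all $B_j$ coordinates of type at least $j$, which costs exactly $A_j<\LD(\code)$; I then add coordinates of type $j-1$, each costing $M_{j-1}$, and the number that still fits is precisely $\floor{(\LD(\code)-A_j-1)/M_{j-1}}$. This gives $\card{S}=B_j+\floor{(\LD(\code)-A_j-1)/M_{j-1}}$, and combined with $K\le n-\card{S}$ this is exactly the claimed bound. It remains to confirm that the leftover budget admits no further coordinate, which is automatic: it is strictly less than $M_{j-1}\le M_{i}$ for every $i\le j-1$, and all coordinates of type $\ge j$ are already used.

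The main obstacle I anticipate is a feasibility check rather than the greedy logic itself: I must verify that there genuinely exist at least $\floor{(\LD(\code)-A_j-1)/M_{j-1}}$ coordinates of type $j-1$, i.e. that this quantity does not exceed $n_{j-1}$. For $j\ge2$ this follows from the minimality of $j$, which forces $A_{j-1}=A_j+n_{j-1}M_{j-1}\ge\LD(\code)$ and hence $n_{j-1}M_{j-1}\ge\LD(\code)-A_j$, giving $\floor{(\LD(\code)-A_j-1)/M_{j-1}}<n_{j-1}$. For the boundary case $j=1$ the same inequality follows instead from $A_0\ge\LD(\code)$, which holds because $A_0=\sum_{i=0}^{s-1}n_iM_i$ upper-bounds the Lee weight of every codeword (using $n_s=0$). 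Once this feasibility is in place the greedy choice is optimal and the budget is saturated, completing the argument.
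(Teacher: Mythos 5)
Your proof is correct and follows essentially the same puncturing argument as the paper: puncture the coordinates of highest type first (smallest maximal per-coordinate Lee weight $M_i$), spend the budget $\LD(\code)-1$, and conclude that the rank of the punctured (isomorphic) code is bounded by its length. Your write-up is in fact more careful than the paper's, since you make the injectivity of the projection onto the complement explicit and verify the feasibility condition $\floor{(\LD(\code)-A_j-1)/M_{j-1}} < n_{j-1}$ (via the minimality of $j$, resp.\ $A_0 \geq \LD(\code)$ when $j=1$), which the paper's proof takes for granted.
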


     \begin{proof}
        We start by puncturing the code in the positions of smallest possible Lee weight. To identify these positions, we use the support subtype. Clearly, in the ideal $\langle p^{i} \rangle$, we have as largest possible Lee weight $M_i= \lfloor \frac{p^{s-1}}{2} \rfloor p^i$, and thus we would start puncturing in the positions, where all codewords live in $\langle p^{s-1} \rangle$, i.e., in the positions belonging to the support subtype $n_{s-1}.$
        We hence assume that the minimum distance between two distinct tuples decreased by $A_{s-1}=n_{s-1}M_{s-1}$. If this is still smaller than the minimum Lee distance, we can continue puncturing in the next ideal, namely $\langle p^{s-2} \rangle.$
        We continue in this fashion, every time puncturing in $n_iM_i$ positions, until $A_j=\sum_{i=j}^{s-1} n_{i}M_{i}$ has reached the minimum Lee distance. We are left with codewords that are at least $\LD(\code)-A_j$ apart, thus we can continue puncturing in $\left\lfloor \frac{\LD(\code)-A_j-1}{M_{j-1}}\right\rfloor$ positions living in $\langle p^{j-1}\rangle$, i.e., belonging to the support subtype $n_{j-1}$, and still be sure that the punctured code has the same size as the original code. 
         In this case, we have the new length of the punctured code, being $n-B_j -\left\lfloor \frac{\LD(\code)-A_j-1}{M_{j-1}}\right\rfloor$, for $B_j= \sum_{i=j}^{s-1} n_{i}.$  
     \end{proof}
     \begin{example}
        Let  us consider $\mathcal{C} \subseteq \left(\mathbb{Z}/9\mathbb{Z}\right)^4$  generated by 
        $$G = \begin{pmatrix}
            1 & 0& 2 & 3 \\  0 & 3 & 6 & 0 \\ 0 & 0 & 3 & 6 
        \end{pmatrix}.$$
        The $\mathbb{Z}/9\mathbb{Z}$-dimension of this code is $k=2$.
        The minimum Lee distance of this code is $\LD(\code)=6$. The code also has support subtype $(2,2,0).$ Thus, we would identify $j=s=2$, as we cannot puncture in both positions belonging to $n_1=2$, namely the second and the last column, as we would get $n_1 M_1=6 \not< \LD(\code).$ However, we can puncture in one of these two columns. In fact, $\left\lfloor \frac{\LD(\code)-0-1}{3}\right\rfloor = 1$. That is, the bound in Theorem \ref{thm:SBpunct} is attained as $$K=3 =4-0-\left\lfloor \frac{6-0-1}{3}\right\rfloor = n-B_j- \left\lfloor \frac{\LD(\code)-A_j-1}{M_{j-1}}\right\rfloor.$$
        The bound from Theorem \ref{shir} would instead give
        $$\left\lfloor \frac{\LD(\code)-1}{M} \right\rfloor = \left\lfloor \frac{6-1}{4}\right\rfloor =1 < 2 = n- k.$$
     \end{example}
 Since we are also in the case where $k$ is an integer strictly larger than 1, we can also apply the bound from Theorem \ref{alderson}, and get
 $$\LD(\code) =6 < 8= (4-2) \cdot 4  = (n-k)M.$$

 We can rewrite the bound from Theorem \ref{thm:SBpunct} as upper bound on the minimum Lee distance as, for $j$ the smallest positive integer with $A_j < \LD(\code)$ we have
 $$\LD(\code) \leq M_{j-1} \left(\sum_{i=0}^{j-1} n_i -K\right) +\sum_{i=j}^{s-1} n_i M_i + \alpha$$ for some $\alpha \in \{1, \ldots, M_{j-1}\}.$
However, the condition to find the smallest $j$ such that $A_j < \LD(\code)$, renders the bound impractical, as usually one does not know the minimum Lee distance of the code and thus wants to bound it.
    
\section{Generalized Hamming Weights 
}\label{sec:genWeight_subcodes}
In this section we propose a new definition of generalized Lee weights. We base the definition on the Hamming weight counterparts. Generalized Hamming weights have originally been introduced in \cite{gen} and were then rediscovered by Wei in \cite{wei1991generalized}. Generalized weights in the Hamming metric have been studied in various areas \cite{cardell2020generalized, dougherty2002generalized, gorla2022generalizedcolumn, gorla2023generalized, ravagnani2016generalized}. In \cite{gorla2022generalized} the authors defined and the generalized Hamming weights of ring-linear codes by considering the join-Hamming support of a code.
        
    Let us recap the definition of a Hamming support of a vector and a code. For this, consider a finite field $\field_q$ of $q$ elements and a positive integer $n$. The \textit{Hamming support} of $x \in \mathbb{F}_q^n$ is defined to be the set of indices where $x$ is nonzero, i.e.,
    \begin{align*}
        \suppH (x) := \set{ i \in \set{1, \ldots, n} \st x_i \neq 0 }.
    \end{align*}
    Note here, that the cardinality of the support of $x$ corresponds to the Hamming weight of $x$, that is 
    \begin{align}\label{Hamming_weight_vec}
        \card{ \suppH(x) } = \HW(x).
    \end{align}
    Let us consider now a linear code $\code \subset \Fq^n$ of length $n$ and dimension $K$. For the definition of the Hamming support of $\code$ we take into account every codeword $c\in \code$ and  for each index, we figure out whether a codeword exists which is nonzero in this position, i.e.,
    \begin{align}\label{Hamming_support_code}
        \suppH(\code)= \{i \in \{1, \ldots, n \} \mid \exists c \in \code, c_i \neq 0\}.
    \end{align}
    Analogously to the definition of the weight of a vector $x$ in \eqref{Hamming_weight_vec}, we can define the weight of a code $\code$ to be the size of its support, i.e.,
    \begin{align}\label{Hamming_weight_code}
        \wt(\code)= \card{ \supp(\code) }.
    \end{align}

    The goal is to generalize these notions to other metrics and ambient spaces. In particular, we are interested in the Lee metric defined over rings. In order to do so, we will follow the approach of \cite{gorla2022generalized}.
    Let $\mathcal{R}$ denote a finite unitary ring.
    Let us consider a weight function 
    $$\wt: \mathcal{R} \to \mathbb{N},$$ which  is such that 
    \begin{enumerate}
        \item $\wt(0)=0$ and $\wt(x)>0$ for all $x \neq 0$, \label{item1}
        \item $\wt(x)=\wt(-x)$,
        \item $\wt(x+y) \leq \wt(x)+\wt(y).$ \label{item3}
    \end{enumerate}
    Note the difference to the definition used in \cite[Definition 2.4]{gorla2022generalized} which corresponds to a  norm, as they also require the absolute homogeneity property. That is, for any $\lambda \in \mathcal{R}\setminus\{0\}$, we have that 
    \begin{itemize} 
        \item[4.] $\wt(\lambda x)= \wt(x).$
    \end{itemize}
    
    This property holds for the Hamming weight and many other weights, however it is not a requirement for a weight function. The properties \ref{item1}.-\ref{item3}. are enough to induce a distance. In fact, for a weight function $\wt$ with properties \ref{item1}.-\ref{item3}. we can define a distance as 
    \begin{align*}
        \dist: \mathcal{R}\times \mathcal{R} &\to \mathbb{N}, \\
        (x,y) & \mapsto \wt(x-y).
    \end{align*}

    By abuse of notation, we will also denote their coordinate-wise extension by $\wt$ and $d$, that is for $x,y \in \mathcal{R}^n$ we have
    $$\wt(x)= \sum_{i=1}^n \wt(x_i) \text{ and } \dist(x,y)= \sum_{i=1}^n \dist(x_i,y_i).$$ We call such weight functions \textit{additive weights}.
    Given a weight function, one can then define the \textit{support} of $x \in \mathcal{R}^n$ as an $n$-tuple
    \begin{align*} 
        \supp (x) := \left( \wt(x_1), \ldots, \wt(x_n) \right).
    \end{align*}
    
    As we are now dealing with an $n$-tuple instead of a subset of $\{1, \ldots, n\}$, we require some additional definitions. 
    Let $s ,t\in \mathbb{N}^n$ be  $n$-tuples. 
    The size of $s$ is given by the sum of its entries, that is $\card{s} = \sum_{i=1}^n s_i.$
    The \textit{join} of $s$ and $t$, denoted by $s \vee t$ is given by taking the maximum in each position, that is 
    $$s \vee t= ( \max\{s_1,t_1\}, \ldots, \max\{s_n,t_n\}).$$
    The \textit{meet} of $s$ and $t$, denoted by $s \wedge t$ is given by taking the minimum in each position, that is
    $$s \wedge t = (\min\{s_1, t_1\}, \ldots, \min\{s_n,t_n\}).$$
    Since the weight is additive, we have that 
    $$\card{ \supp(x) } =\sum_{i=1}^n \wt(x_i)= \wt(x).$$
    The Hamming support is thus such an example, where instead of considering the support as subset of $\{1, \ldots, n\}$, the support is considered as the $n$-tuple
    \begin{align*}
    \suppH (x) = \left( \HW(x_1), \ldots, \HW(x_n) \right).
    \end{align*}
    
    In order to extend this to the support of codes, we have several options. One of those, is the join-support, as considered in \cite{gorla2022generalized}:
    for $\code \subseteq \mathcal{R}^n$ a linear code, we define its \textit{join-support} as
     \begin{align*}
        \suppjoin (\code) := \left( \max_{c \in \code} \wt(c_1), \ldots , \max_{c \in \code} \wt(c_n) \right)= \bigvee\limits_{c \in \code} \supp(c).
    \end{align*}
    Note that another possibility would be to define the \textit{meet-support}, as follows
    \begin{align*}
        \suppmeet(\code) :=& \left( \min_{c \in \code} \{\max\{\wt(c_1),0\}\}, \ldots, \min_{c \in \code} \{\max\{\wt(c_n),0\}\}\right)\\  =& \bigwedge_{c \in \code} (\supp(c) \vee 0).
    \end{align*}
    As the Hamming weight of nonzero elements equals one, we observe that the join-support coincides with the meet-support of a code $\code$ in the Hamming metric, i.e., 
    \begin{align*}
        \suppHjoin(\code) = \suppHmeet(\code).
    \end{align*}
    
    \begin{example}
        Let us consider the code over $\mathbb{F}_3$ generated by $$G= \begin{pmatrix} 1 & 0 & 0 & 1 & 0 \\ 0 & 1 & 0 & 1 & 0 \\ 0 & 0 & 1 & 0 &0 \end{pmatrix}.$$ 
        With the usual definition of the Hamming support in \eqref{Hamming_support_code}, we have that 
        $$\suppH(\code)= \{1,2,3,4\}. $$
        With the join-support, we are considering the maximal value of the weight of the entries of a codeword in each position, that is 
        $$\suppHjoin(\code)=(1,1,1,1,0).$$
        For the meet-support, we take the minimum nonzero value of the weight of the entries of a codeword in each position, which also gives $(1,1,1,1,0).$\\
        By applying definition of the weight of a code \eqref{Hamming_weight_code} we observe that all the three support definitions of $\code$ yield the same weight 
        \begin{align*}
            \HW(\code) &= \card{\suppH(\code)} = 4, \\
            \HWjoin(\code) &= \HWmeet(\code) = 4.
        \end{align*}
    \end{example}

In the classical case, one defines the $r$-th generalized weights as follows.
\begin{definition}
    Let $\code \subseteq \mathbb{F}_q^n$ be a linear code of  dimension $k$. Then for any $r \in \{1, \ldots, k\}$ the $r$-th generalized weight is given by
    $$\dist^r(\code)= \min\{\text{wt}(\mathcal{D}) \mid \mathcal{D} \leq\code, \text{dim}(\mathcal{D}) = r\}.$$
\end{definition}
    
For the generalized weights, we want the following properties to hold. 
    Let $\code$ be a linear code of dimension $k$. Then we have
    \begin{enumerate}
        \item $\dist(\code) = \dist^1(\code),$
        \item $\dist^r(\code) < \dist^{r+1}(\code)$ for every $1 \leq r < k$,
        \item $\dist^k(\code) = \wt(\code)$.
    \end{enumerate}
    For the Hamming support and the rank support these properties have been showed in \cite{wei1991generalized, RelativeRankweight, rankGenWeight}. In \cite{gorla2022generalized} they get similar properties, with the exception of $\dist^r(\code) \leq \dist^{r+1}(\code),$ instead of the strict inequality.
    
   The strict inequality is important to us, however, as it then leads to neat Singleton bounds, that is as
    $$\dist(\code)=\dist^1(\code) < \dist^2(\code) < \cdots < \dist^{k-1}(\code) <\dist^k(\code)  = \wt(\code), $$
we get $$\dist(\code) \leq \wt(\code)-k+1. $$
Note that for non-degenerate codes we have that $\HW(\code)=n$, and thus we retrieve the classical Singleton bound $$\HD(\code)\leq n-k+1.$$

Since we move from the classical case of finite fields to rings, we have to exchange the fixed dimension of the subcodes with a ring-analogue parameter. A natural choice would be the $\mathbb{Z}/p^s\mathbb{Z}$-dimension, but as this value is not necessarily an integer and there might not exist subcodes of $\mathcal{C}$ of certain fixed smaller rational number as the $\mathbb{Z}/p^s\mathbb{Z}$-dimension, we choose to discard this option. 

In \cite{dougherty2002generalized}, the authors chose to exchange the dimension with the subtype. 
In fact, in the same paper the authors defined generalized Lee weights for $\mathbb{Z}/4\mathbb{Z}.$ This particular case is, however, not of interest for us, as the Lee-metric Singleton bound over $\mathbb{Z}/4\mathbb{Z}$ directly follows from the Gray isometry.

Following the idea of \cite{dougherty2002generalized},  a first attempt on defining generalized weights over $\mathbb{Z}/p^s\mathbb{Z}$ would be the following. 
\begin{definition}
        Let $\code \subseteq \left(\mathbb{Z}/p^s\mathbb{Z}\right)^n$ be a linear code of subtype $(k_0, \ldots, k_{s-1})$. Then for any $(r_0, \ldots, r_{s-1})$ with $r_i \leq k_i$ for all $i \in \{1, \ldots, s-1\}$ the  $(r_0, \ldots, r_{s-1})$-th generalized weight is given by
    $$\dist^{(r_0, \ldots, r_{s-1})}(\code)= \min\{\text{wt}(\mathcal{D}) \mid \mathcal{D} \leq\code,  \mathcal{D} \text{ has subtype } (r_0, \ldots, r_{s-1})\}.$$
\end{definition}



Note that this definition is not considering all possible subcodes or all possible subtypes of subcodes. 

To allow for a comparison between two different subtypes $(r_0, \ldots, r_{s-1})$ and $(r_0', \ldots, r_{s-1}')$ which might have $r_i<r_i'$ for some $i$ but $r_j > r_j'$ for some $j$, a natural choice is 
to impose a lexicographical order, i.e., we consider the order $$(k_0,\ldots, k_{s-1}) >(k_0-1,\ldots, k_{s-1}) > \cdots >  (0,k_1, \ldots, k_{s-1}) >    \cdots > (0,\ldots, 0, 1).$$ However, then the property $\dist(\code)=\dist^{(0, \ldots, 0,1)}(\code)$ is not guaranteed. 

In fact, a minimum Lee weight codeword will live in a subcode having subtype one of the standard vectors $e_i$.  Thus, we have $\dist(\code)=   \dist^{e_i}(\code)$ for some $i$. 
%
Observing that this just means to fix the rank of the subcode as 1,  we choose to directly fix the rank instead. 

\begin{definition}
        Let $\code \subseteq \left(\mathbb{Z}/p^s\mathbb{Z}\right)^n$ be a linear code of of rank $K$. Then for any $r \in \{1, \ldots, K\}$ the  $r$-th generalized weight is given by
    $$\dist^{(r}(\code)= \min\{\text{wt}(\mathcal{D}) \mid \mathcal{D} \leq\code,  \text{rk}(\mathcal{D})=r \}.$$
\end{definition}

\section{Join-Support in the Lee Metric}\label{sec:joinsupport}
Now we turn our focus on exchanging the Hamming weight with the Lee weight.
We want to define the Lee support and hence the generalized Lee weights in a similar fashion. 
For the Lee support of a vector $x \in (\zps)^n$ we view the Lee support as an $n$-tuple and define it analogous to the Hamming support, i.e.,
\begin{align*}
    \suppL (x) := (\lweight(x_1), \ldots ,\lweight(x_n)).
\end{align*}
We now want to define the Lee support and the generalized Lee weights of a code $\code \subseteq \mathbb{Z}/p^s\mathbb{Z}^n$, according to the two possibilities: the join-support and the meet-support. However, in the Lee metric, the meet-support is not practical to derive bounds on the minimum distance for the code. Let us quickly argue why.
 \begin{definition}\label{def:meetLeeSupp}
    For a code $\code \subset (\zps)^n$ we define the \textit{Lee meet-support} as the minimal (if possible) nonzero Lee weight in each position among all codewords, meaning that
    \begin{align*}
        \suppLmeet (\code) := \left(\min_{c \in \code} \{ \max \{ \lweight(c_1), 0\}\} , \ldots , \min_{c \in \code} \{ \max \{ \lweight(c_n), 0 \}\} \right).
    \end{align*}
\end{definition}
   \begin{proposition}
    For $\code \subseteq (\zps)^n$ of support subtype $(n_0, \ldots, n_s)$, we have that
    \begin{align*}
        \card{ \suppLmeet(\code)} &= \LW(\code)= \sum_{i=0}^s n_ip^i.
    \end{align*}
\end{proposition}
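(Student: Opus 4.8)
The plan is to reduce the computation of $\card{\suppLmeet(\code)}$ to a coordinate-by-coordinate analysis and then regroup the coordinates according to the support subtype. First I would use that the size of an $n$-tuple is the sum of its entries, so that $\card{\suppLmeet(\code)} = \sum_{j=1}^n \suppLmeet(\code)_j$. By Definition \ref{def:meetLeeSupp}, the entry in coordinate $j$ is the minimal nonzero Lee weight occurring in that coordinate among all codewords (and $0$ if no codeword is nonzero there). The key structural observation is that $\pi_j(\code) = \set{c_j \mid c \in \code}$ is a submodule of $\zps$, hence an ideal $\langle p^{i}\rangle$ for a unique $i \in \set{0, \ldots, s}$, so that $\suppLmeet(\code)_j$ depends only on this ideal; moreover, by Definition \ref{def:supp_subtype}, exactly $n_i$ of the $n$ coordinates satisfy $\langle \pi_j(\code)\rangle = \langle p^i\rangle$.

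Next I would compute the common value attached to each ideal. For a fixed $i \le s-1$, the nonzero elements of $\langle p^i\rangle$ are $a p^i$ with $a \in \set{1, \ldots, p^{s-i}-1}$, so the relevant value is $\min_a \LW(ap^i) = \min_a \min\set{ap^i,\, p^s - ap^i}$. I would argue this minimum equals $p^i$: the quantity $\min\set{ap^i, p^s - ap^i}$ is largest when $ap^i \approx p^s/2$ and smallest at the two admissible extremes $a = 1$ and $a = p^{s-i}-1$, both of which give $\min\set{p^i,\, p^s - p^i} = p^i$, using $2p^i \le p^s$ for $i \le s-1$ (valid since $p \ge 2$). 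Hence every coordinate whose projection is $\langle p^i\rangle$ with $i \le s-1$ contributes exactly $p^i$, while a coordinate with $\pi_j(\code) = \langle p^s\rangle = \set{0}$ carries no nonzero entry and contributes $0$.

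Summing these contributions grouped by the support subtype then yields $\card{\suppLmeet(\code)} = \sum_{i=0}^{s-1} n_i p^i$; the coordinates counted by $n_s$ are precisely those on which all codewords vanish and add nothing, which is why the claimed sum may be written over $0 \le i \le s$ (the $i=s$ term being absent, in particular it is genuinely the full sum for non-degenerate codes, where $n_s = 0$). The equality with $\LW(\code)$ is then just the definition of the weight of a code as the size of its support. The only real obstacle, though elementary, is the minimization in the second step: one must confirm that over the cyclic submodule $\langle p^i\rangle$ the Lee weight attains its nonzero minimum at the generator $p^i$ rather than at some interior multiple, which rests on the inequality $2p^i \le p^s$ for $i \le s-1$; the rest is routine bookkeeping of sorting the coordinates by their projection ideals.
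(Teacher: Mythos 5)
Your proposal is correct and follows essentially the same route as the paper's proof: sum the meet-support coordinatewise, note that each coordinate with $\langle \pi_j(\code)\rangle = \langle p^i \rangle$ contributes the minimal nonzero Lee weight $p^i$ of that ideal, and group the $n$ coordinates by the support subtype. The only difference is that you explicitly verify the minimization $\min_{1 \leq a \leq p^{s-i}-1} \LW(ap^i) = p^i$ via the inequality $2p^i \leq p^s$ for $i \leq s-1$, a detail the paper simply asserts.
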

    
\begin{proof}
    The Lee meet-support asks to take the smallest nonzero Lee weight in position $j$ and then to sum over all entries $j \in \{1, \ldots, n\}.$ Since any position belonging to the support subtype $n_i$ is living in the ideal $\langle p^i \rangle$, this position has as smallest nonzero Lee weight $p^i$.
\end{proof}
    We can then define the $r$-th generalized meet-Lee weights.
    
        \begin{definition}
        Let $\code \subseteq \left(\mathbb{Z}/p^s\mathbb{Z}\right)^n$ be a linear code of rank $K$.
        For $i \in \{1, \ldots, K\}$ define the $i$th generalized  meet-Lee weight as
        $$\LDmeet^i(\code)= \min \{ \card{\suppLmeet(\subcode)} \st \subcode \leq \code,  \rk(\mathcal{D})= i\}.$$
    \end{definition}
    
Unfortunately, the meet-support in the Lee metric does not generally fulfill the property    $$\LD(\code) \leq \LDmeet^1(\code).$$
     
As an easy example for $\LD(\code) > \LDmeet^1(\code)$, consider $\code=\langle (1,2) \rangle \subseteq \mathbb{Z}/9\mathbb{Z}^n$. The minimum Lee distance of this code is 3, however, the first generalized meet-Lee weight is 2, as $\suppLmeet \langle(1,2)\rangle =(1,1)$ is the minimal meet-Lee support. This will then not lead to a Singleton bound and is thus discarded.\\
     
Instead we will now focus on the join-Lee support, as also promoted in \cite{gorla2022generalized}.
 
   \begin{definition}\label{def:joinLeeSupp}
    For a code $\code \subset (\zps)^n$ its \textit{join-Lee support} is defined as the maximal possible Lee weight in each position among all codewords, i.e.,
    \begin{align*}
        \suppLjoin (\code) := \left(\max_{c \in \code} \lweight(c_1), \ldots , \max_{c \in \code} \lweight(c_n)\right).
    \end{align*}
\end{definition}

\begin{proposition}
    For $\code \subseteq (\zps)^n$ of support subtype $(n_0, \ldots, n_s)$, we have that
    \begin{align*}
        \card{ \suppLjoin(\code)} &= \LWjoin(\code)= \sum_{i=0}^s n_iM_i.
    \end{align*}
\end{proposition}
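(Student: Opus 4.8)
The plan is to compute the size of the join-Lee support $\suppLjoin(\code)$ by determining, for each coordinate $j \in \{1, \ldots, n\}$, the value of $\max_{c \in \code} \lweight(c_j)$ and then summing these values. Since the join-Lee support is an $n$-tuple, its size $\card{\suppLjoin(\code)}$ is by definition the sum of its entries, so the claim reduces to showing that each coordinate contributes exactly $M_i$ whenever that coordinate belongs to the support subtype $n_i$, and then collecting the $n_i$ coordinates of each type.

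First I would recall that the support subtype records, for each coordinate $j$, the ideal $\langle \pi_j(\code) \rangle = \langle p^i \rangle$ generated by the projection of the code onto that coordinate. The key observation is that $\pi_j(\code)$, being a submodule of $\zps$, is precisely the ideal $\langle p^i \rangle = p^i \zps$ when $j$ contributes to $n_i$. Therefore the set of values $\{\lweight(c_j) \st c \in \code\}$ equals the set of Lee weights $\{\lweight(a) \st a \in \langle p^i \rangle\}$ of elements of that ideal. Taking the maximum over the coordinate thus amounts to finding the largest Lee weight attained by an element of $\langle p^i \rangle$.

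Next I would compute $\max_{a \in \langle p^i \rangle} \lweight(a)$. The ideal $\langle p^i \rangle$ consists of the multiples $\{0, p^i, 2p^i, \ldots, (p^{s-i}-1)p^i\}$, interpreted as integers in $\{0, \ldots, p^s - 1\}$. By definition $\lweight(a) = \min\{a, p^s - a\}$, and on the set of multiples of $p^i$ this is maximized at the multiple closest to the midpoint $p^s/2$. Writing $a = t\,p^i$, one maximizes $\min\{t\,p^i, (p^{s-i}-t)p^i\} = p^i \min\{t, p^{s-i}-t\}$ over $t \in \{0, \ldots, p^{s-i}-1\}$, whose maximum is $p^i \floor{p^{s-i}/2} = M_i$. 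This identifies the per-coordinate maximum as exactly $M_i$, matching the definition of $M_i$ given in Theorem~\ref{thm:SBpunct}.

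Finally I would assemble the total: summing over all coordinates and grouping by support subtype gives $\card{\suppLjoin(\code)} = \sum_{j=1}^n \max_{c \in \code} \lweight(c_j) = \sum_{i=0}^s n_i M_i$, which is the asserted value of $\LWjoin(\code)$. The main obstacle, though minor, is the careful verification that $\max_{a \in \langle p^i \rangle} \lweight(a) = M_i$; one must confirm that the nearest multiple of $p^i$ to the midpoint $p^s/2$ yields exactly $\floor{p^{s-i}/2}\,p^i$, which requires a short case analysis on the parity of $p^{s-i}$ but introduces no real difficulty. All other steps are bookkeeping: invoking the definition of support subtype to know that coordinate $j$ ranges over the full ideal $\langle p^i \rangle$, and using additivity of the Lee weight (equivalently, that $\card{\cdot}$ of an $n$-tuple sums its entries) to pass from per-coordinate maxima to the global sum.
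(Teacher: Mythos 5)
Your proposal is correct and follows the same route as the paper's proof: identify, via the support subtype, the ideal $\langle p^i \rangle$ in which each coordinate of the code lives, observe that the join takes the maximal Lee weight $M_i$ of that ideal, and sum over the $n_i$ coordinates of each type. The only difference is that you explicitly verify $\max_{a \in \langle p^i \rangle} \LW(a) = \floor{p^{s-i}/2}\,p^i$ and that $\pi_j(\code)$ equals the full ideal, details the paper asserts without computation.
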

\begin{proof}
    In each index $j \in \{1, \ldots, n\}$, we can check in which ideal this coordinate of the code lives. Let us assume that this is $\langle p^i \rangle$, for some $i \in \{0, \ldots, s\}$. Since the support of the code takes the maximum over all codewords in the code, we will reach in this entry the maximal Lee weight of the ideal $\langle p^i \rangle,$ which is given by $M_i= \lfloor \frac{p^{s-i}}{2}\rfloor p^i$. Since we know the support subtype of the code, we have $n_i$ many of this entries.
\end{proof}

The $r$-th generalized join-Lee weight is then defined as follows.

  \begin{definition}
        Let $\code \subseteq \left(\mathbb{Z}/p^s\mathbb{Z}\right)^n$ be a linear code of rank $K$.
        For $i \in \{1, \ldots, K\}$ we define the $i$th generalized  join-Lee weight as
        $$\LDjoin^r(\code)= \min\{ \LWjoin(\mathcal{D}) \mid \mathcal{D} \leq \code, \text{rk}(\mathcal{D})=r\}.$$
    \end{definition}

Let us consider an example, which also perfectly shows the differences between the meet-Lee support and the join-Lee support. 

\begin{example}
Let us consider the code $\code \subseteq \mathbb{Z}/9\mathbb{Z}^4$ generated by 
$$G= \begin{pmatrix} 1 & 0 & 3 & 2 \\ 0 & 1 & 2 & 0 \\ 0 & 0 & 3 & 3 \end{pmatrix},$$ which has support subtype $(4,0,0)$ and minimum Lee distance 2, for example $(1,0,0,8)$ is a minimal Lee weight codeword. 
For the generalized meet-Lee weights we have that
$$\LD(\code) \geq \LDmeet^1(\code)  \leq \LDmeet^2(\code)=\LDmeet^3(\code) =  \LWmeet(\code).$$
Since \begin{align*}  \LDmeet^1(\code) & =\LWmeet(\langle (0,1,2,0)\rangle= 2  \\
\LDmeet^2(\code)& = \LWmeet(\langle (1,0,3,2),(0,1,2,0)\rangle)= 4 \\ 
\LDmeet^3(\code) & = \LWmeet(\langle G \rangle)= 4 = \LWmeet(\code).
\end{align*}

For the generalized join-Lee weights we have that 
$$\LD(\code) \leq \LDjoin^1(\code)  < \LDjoin^2(\code)  < \LDjoin^3(\code) \leq \LWjoin(\code).$$

Since \begin{align*}
  \LDjoin^1(\code)& =\LWjoin(\langle (0,0,3,3) \rangle)= 6 \\
  \LDjoin^2(\code) & = \LWjoin(\langle (0,0,3,3),(3,0,0,6) \rangle )= 9 \\
  \LDjoin^3(\code) & = \LWjoin(\code \cap \langle 3\rangle)=12 \\
   \LWjoin(\code) &= 16.
    \end{align*}
\end{example}

    \begin{proposition}
        The subcodes which attain the $r$-th generalized join-Lee weights all live in the socle.
    \end{proposition}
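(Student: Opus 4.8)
The plan is to prove the \emph{contrapositive}: I will show that any rank-$r$ subcode $\mathcal{D}\le\code$ that is \emph{not} contained in the socle $\code\cap\langle p^{s-1}\rangle$ cannot attain $\LDjoin^r(\code)$, because it is strictly beaten by a subcode of the same rank that \emph{does} live in the socle. Since the minimizers are precisely the rank-$r$ subcodes realizing the minimum, this forces every one of them into the socle, which is exactly the universal claim. The competitor I propose for a given $\mathcal{D}$ is its $p$-torsion $\mathcal{D}[p]:=\{x\in\mathcal{D}\st px=0\}$.

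First I would record that $\mathcal{D}[p]=\mathcal{D}\cap\langle p^{s-1}\rangle$: an element $x$ satisfies $px=0$ in $\zpsn$ exactly when each coordinate lies in the annihilator of $p$, namely $\langle p^{s-1}\rangle$. Hence $\mathcal{D}[p]$ is automatically a subcode of $\code$ lying in the socle. Next, writing $\mathcal{D}\cong\bigoplus_{i=1}^{r}\Z/p^{\mu_i}\Z$ by the structure theorem, the socle of each cyclic summand is one-dimensional over $\F_p$; therefore $\dim_{\F_p}\mathcal{D}[p]=r$, and since $\mathcal{D}[p]$ is annihilated by $p$ its rank as a $\zps$-module equals this dimension $r$. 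Thus $\mathcal{D}[p]$ is a genuine rank-$r$ competitor inside the socle.

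It then remains to compare join-Lee weights. Because $\mathcal{D}[p]\subseteq\mathcal{D}$, in every coordinate $j$ we have $\max_{c\in\mathcal{D}[p]}\lweight(c_j)\le\max_{c\in\mathcal{D}}\lweight(c_j)$, and summing gives $\LWjoin(\mathcal{D}[p])\le\LWjoin(\mathcal{D})$ for free. For the \emph{strict} drop I would exploit that $\mathcal{D}\not\subseteq\langle p^{s-1}\rangle$ forces a coordinate $j_0$ in which the projection of $\mathcal{D}$ is an ideal $\langle p^{i_0}\rangle$ with $i_0\le s-2$; by the computation of $\suppLjoin$ in the proposition preceding this one, that coordinate contributes $M_{i_0}$ to $\LWjoin(\mathcal{D})$, whereas the projection of $\mathcal{D}[p]$ in position $j_0$ lies inside $\langle p^{s-1}\rangle$ and so contributes at most $M_{s-1}$.

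The heart of the argument, and the step I expect to be the main obstacle, is the strict inequality $M_{s-1}<M_{i_0}$, i.e. the strict monotonicity of $M_i=\floor{p^{s-i}/2}\,p^i$ in $i$. For odd $p$ one has $M_i=(p^s-p^i)/2$, which is strictly decreasing, so $M_{i_0}>M_{s-1}$ and hence $\LWjoin(\mathcal{D}[p])<\LWjoin(\mathcal{D})$, completing the contrapositive. The delicate point is that for $p=2$ the quantities $M_0,\dots,M_{s-1}$ all collapse to $2^{s-1}$, so the strict drop in a single coordinate fails; there the strict monotonicity—and with it the proof as stated—relies on $p$ being odd, and the boundary case $p=2$ would have to be excluded or argued separately (e.g. by tracking a coordinate in which the socle projection of $\mathcal{D}$ genuinely vanishes rather than merely shrinks).
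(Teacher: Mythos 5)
Your proposal is essentially the paper's own argument: the paper also proceeds by contradiction, replaces a putative non-socle minimizer $\mathcal{D}$ by $\mathcal{D}\cap\langle p^{s-1}\rangle$ (which is exactly your $p$-torsion $\mathcal{D}[p]$, since $px=0$ in $(\zps)^n$ means every coordinate lies in $\langle p^{s-1}\rangle$), and derives the strict drop in join-Lee weight from the strict inequality $M_{s-1}<M_i$ for $i<s-1$. Two of your refinements are worth noting. First, you verify via the structure theorem that $\mathcal{D}[p]$ still has rank $r$; the paper needs this for the contradiction against minimality over rank-$r$ subcodes but does not check it. Second, your caveat about $p=2$ identifies a genuine gap in the paper's proof as well: for $p=2$ one has $M_i=2^{s-1}$ for every $i\leq s-1$, so $M_{s-1}<M_i$ fails, and the statement as literally phrased (\emph{all} minimizers live in the socle) is false there --- e.g.\ for $\code=\langle 1\rangle\subseteq\Zp{4}$ the non-socle subcode $\langle 1\rangle$ and the socle subcode $\langle 2\rangle$ both attain $\LDjoin^1(\code)=2$. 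Since the coordinatewise inequality $\LWjoin(\mathcal{D}[p])\leq\LWjoin(\mathcal{D})$ is non-strict without any hypothesis on $p$, the existence of a socle minimizer, and hence Corollary \ref{cor:socle} and the downstream bounds, survive for $p=2$; only the uniqueness-type claim of this proposition requires $p$ odd, exactly as you say.
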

    \begin{proof}
        By contradiction, assume that $\subcode \leq \code$ of rank $r$ achieves the $r$-th generalized Lee weight $\LDjoin^r(\code)$ and $\mathcal{D}$ does not live in the socle. That is, if  $\subcode$ has support subtype $(n_0, \ldots, n_s)$, then for some $i < s-1$ we have $n_i \neq 0.$ Thus, 
        \begin{align*}
            \LDjoin^r(\code)= \card{ \suppLjoin(\subcode)} \leq \sum_{i=1}^{s-1} n_iM_i.
        \end{align*}
        By considering the subcode $\subcode_0 = \subcode \cap \langle p^{s-1} \rangle$, we observe that its support subtype is $(0, \ldots, 0, n_0 + \cdots + n_{s-1}, n_s).$ Furthermore,
        \begin{align*}
            \LWjoin(\subcode_0) = M_{s-1}(n_0 + \cdots +n_{s-1}) < \sum_{i=1}^{s-1} n_iM_i,
        \end{align*}
        since $M_{s-1}<M_i$ for all $i < s-1.$ This gives a contradiction to the minimality of the subcode $\subcode$.
    \end{proof}

Thus, it is enough to only consider the generalized join-Lee weights of the socle $\code \cap \langle p^{s-1} \rangle.$

\begin{corollary}\label{cor:socle}
    Let $\code \subseteq \mathbb{Z}/p^s\mathbb{Z}^n$ be a linear code of rank $K.$ Then for all $r \in \{1, \ldots, K\}$ we have 
    $$\LDjoin^r(\code) = \LDjoin^r(\code \cap \langle p^{s-1} \rangle).$$
\end{corollary}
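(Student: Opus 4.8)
The plan is to prove the equality by establishing the two opposing inequalities, relying on the preceding proposition which guarantees that a subcode minimizing the $r$-th generalized join-Lee weight can be taken inside the socle.

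First I would record a rank count for the socle so that both sides of the identity are defined over the same index range. Observe that $\langle p^{s-1}\rangle = p^{s-1}(\mathbb{Z}/p^s\mathbb{Z})^n$ coincides with the $p$-torsion $\{x \in (\mathbb{Z}/p^s\mathbb{Z})^n : px = 0\}$, so $\code \cap \langle p^{s-1}\rangle$ is the socle of $\code$ in the module-theoretic sense. For a code of subtype $(k_0, \ldots, k_{s-1})$ this socle is an $\mathbb{F}_p$-vector space whose dimension equals the number of generators $K = \sum_i k_i$: concretely, multiplying each of the $K$ generators in the systematic form of Proposition \ref{prop:systematicG} by the appropriate power of $p$ produces $K$ independent elements of order $p$. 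Hence $\code \cap \langle p^{s-1}\rangle$ has rank $K$, and rank-$r$ subcodes exist for every $r \in \{1, \ldots, K\}$.

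For the inequality $\LDjoin^r(\code) \le \LDjoin^r(\code \cap \langle p^{s-1}\rangle)$, note that every rank-$r$ subcode of $\code \cap \langle p^{s-1}\rangle$ is in particular a rank-$r$ subcode of $\code$. The minimum defining $\LDjoin^r(\code)$ therefore ranges over a superset of the subcodes defining $\LDjoin^r(\code \cap \langle p^{s-1}\rangle)$, and a minimum over a larger family can only decrease. For the reverse inequality I would invoke the preceding proposition directly: let $\subcode \le \code$ be a rank-$r$ subcode attaining $\LDjoin^r(\code) = \LWjoin(\subcode)$, which by that proposition we may take to satisfy $\subcode \le \code \cap \langle p^{s-1}\rangle$. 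Then $\subcode$ is an admissible rank-$r$ subcode in the minimization defining $\LDjoin^r(\code \cap \langle p^{s-1}\rangle)$, whence $\LDjoin^r(\code \cap \langle p^{s-1}\rangle) \le \LWjoin(\subcode) = \LDjoin^r(\code)$. Combining the two inequalities yields the claim.

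I expect the only genuine subtlety to be the bookkeeping that the socle has rank exactly $K$, so that the generalized join-Lee weights on the right-hand side are defined over the same range $\{1, \ldots, K\}$ as on the left; everything else is an immediate consequence of the proposition on minimizing subcodes together with the monotonicity of minima under set inclusion.
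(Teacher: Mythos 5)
Your proof is correct and takes essentially the same approach as the paper: the paper states this corollary as an immediate consequence of the preceding proposition (that every subcode attaining $\LDjoin^r(\code)$ lives in the socle), and your two-inequality formalization --- the minimum over the larger family of rank-$r$ subcodes of $\code$ can only be smaller, while the proposition supplies a minimizer inside $\code \cap \langle p^{s-1} \rangle$ for the reverse direction --- is exactly the implicit argument. Your additional bookkeeping that the socle has rank $K$, so that rank-$r$ subcodes exist for every $r \in \{1, \ldots, K\}$ and both sides are defined over the same range, is a correct detail the paper leaves unstated.
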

This property gives us an immediate relation to the generalized Hamming weights.
In fact, the socle can be considered as a code over $\mathbb{F}_p$ and the subcodes which attain the minimal join-Lee support are then those which attain the minimal Hamming support.

\begin{corollary}\label{cor:dLtodH}
    Let $\code \subseteq \mathbb{Z}/p^s\mathbb{Z}^n$ be a linear code of rank $K.$ Then for all $r \in \{1, \ldots, K\}$ we have 
    $$\LDjoin^r(\code) = \HD^r(\code) M_{s-1}.$$ 
\end{corollary}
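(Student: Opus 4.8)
The plan is to reduce everything to the socle and then read off the answer from the structure of $\langle p^{s-1}\rangle$ as an $\field_p$-vector space. By Corollary \ref{cor:socle} we already have $\LDjoin^r(\code) = \LDjoin^r(\code \cap \langle p^{s-1}\rangle)$, so it suffices to establish the identity for the socle $\code_0 := \code \cap \langle p^{s-1}\rangle$. Every codeword of $\code_0$ lies in $\langle p^{s-1}\rangle^n$, and since $\langle p^{s-1}\rangle$ is annihilated by $p$ it is a one-dimensional $\field_p$-vector space. I would therefore use the $\Zp{p^s}$-linear isomorphism $\langle p^{s-1}\rangle \cong \field_p$, $p^{s-1}a \mapsto a$, extended coordinatewise, to identify $\code_0$ with an $\field_p$-linear code $\overline{\code_0} \subseteq \field_p^n$. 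Because $p^{s-1}a = 0$ if and only if $a=0$, this identification preserves the set of nonzero coordinates, and it is a bijection between rank-$r$ submodules of $\code_0$ and $r$-dimensional subspaces of $\overline{\code_0}$. Here $\HD^r(\code)$ is understood as the $r$-th generalized Hamming weight of $\overline{\code_0}$, and the socle has $\field_p$-dimension $K$, so $r$ ranges consistently over $\{1,\ldots,K\}$.

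First I would compute $\LWjoin(\subcode)$ for an arbitrary rank-$r$ subcode $\subcode \leq \code_0$. Fixing a coordinate $j$, the projection $\pi_j(\subcode)$ is a $\Zp{p^s}$-submodule of $\langle p^{s-1}\rangle$; since $\langle p^{s-1}\rangle$ is a simple module, $\pi_j(\subcode)$ equals either $\{0\}$ or all of $\langle p^{s-1}\rangle$. In the latter case some codeword of $\subcode$ has $j$-th entry $\floor{p/2}\,p^{s-1}$, and $\lweight\!\left(\floor{p/2}\,p^{s-1}\right) = \floor{p/2}\,p^{s-1} = M_{s-1}$ is precisely the maximal Lee weight attained in $\langle p^{s-1}\rangle$; in the former case the contribution is $0$. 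Summing over $j$ gives $\LWjoin(\subcode) = M_{s-1}\,\card{\suppH(\overline{\subcode})}$, where $\overline{\subcode}$ is the image of $\subcode$ under the identification above.

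Finally I would take minima over rank-$r$ subcodes. Because $\subcode \mapsto \overline{\subcode}$ preserves both rank/dimension and Hamming support, minimizing $\LWjoin$ over rank-$r$ submodules of $\code_0$ equals $M_{s-1}$ times the minimum of $\card{\suppH(\cdot)}$ over $r$-dimensional subspaces of $\overline{\code_0}$, that is $M_{s-1}\,\HD^r(\code)$. Combining this with Corollary \ref{cor:socle} yields $\LDjoin^r(\code) = M_{s-1}\,\HD^r(\code)$, as claimed.

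The step I expect to require the most care is the claim that every nonzero coordinate contributes exactly $M_{s-1}$. This rests on two observations: that a coordinate projection of a socle subcode is all-or-nothing, so the join genuinely attains the largest Lee weight rather than an intermediate value, and that $M_{s-1} = \floor{p/2}\,p^{s-1}$ is indeed the maximal Lee weight of an element of $\langle p^{s-1}\rangle$, which follows from $\lweight(ap^{s-1}) = p^{s-1}\min\{a,\,p-a\}$ being maximized at $a = \floor{p/2}$.
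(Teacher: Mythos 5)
Your proof is correct and follows exactly the route the paper intends: it reduces to the socle via Corollary \ref{cor:socle} and then identifies the socle with an $\mathbb{F}_p$-code, the paper itself stating the corollary as immediate from precisely this observation. Your write-up merely supplies the details the paper leaves implicit (the all-or-nothing coordinate projections of socle subcodes and the computation $\max_{a}\LW(ap^{s-1}) = \lfloor p/2\rfloor p^{s-1} = M_{s-1}$), and these details are accurate.
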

Thus, we can use the properties of the generalized Hamming weights to show the following. 

    \begin{proposition}\label{cor:dLprop}
        Let $\code \subset (\zps)^n$ be a linear code of rank $K$. Then we have
        \begin{enumerate}
            \item $\LD(\code) \leq \LDjoin^1(\code)$.
            \item $\LDjoin^r(\code) < \LDjoin^{r+1}(\code)$ for every $1 \leq r < K$.
            \item $\LDjoin^K(\code) \leq \LWjoin(\code)$.
        \end{enumerate}
    \end{proposition}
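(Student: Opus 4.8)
The plan is to establish the three items separately. Items (1) and (3) are essentially immediate from the definition of $\LDjoin^r$ as a minimum of join-Lee weights over rank-$r$ subcodes, so I would dispatch them first and reserve the real work for the strict monotonicity in item (2), which I would reduce to the corresponding property of generalized Hamming weights over $\Fp$ by way of Corollaries \ref{cor:socle} and \ref{cor:dLtodH}.

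For item (3), I would observe that $\code$ is a rank-$K$ subcode of itself and is therefore an admissible competitor in the minimization defining $\LDjoin^K(\code)$; hence $\LDjoin^K(\code) \le \LWjoin(\code)$ with no further work. For item (1), let $\subcode \le \code$ be a rank-$1$ subcode attaining $\LDjoin^1(\code) = \LWjoin(\subcode)$ (the minimum is attained since $\code$ is finite), and pick any nonzero $c \in \subcode$. Coordinatewise one has $\LW(c_j) \le \max_{c' \in \subcode} \LW(c'_j)$, so summing over $j$ gives $\LW(c) \le \LWjoin(\subcode) = \LDjoin^1(\code)$. Since $c \in \code \setminus \{0\}$, the definition of minimum Lee distance yields $\LD(\code) \le \LW(c)$, and chaining the two inequalities gives the claim. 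I would stress that this argument needs neither the socle reduction nor the positivity of $M_{s-1}$.

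For item (2), which is the substantive part, I would invoke Corollary \ref{cor:dLtodH} to write $\LDjoin^r(\code) = \HD^r(\code)\, M_{s-1}$, where $\HD^r$ denotes the $r$-th generalized Hamming weight of the socle $\code \cap \langle p^{s-1} \rangle$ regarded as a code over $\Fp$ of dimension equal to $\rk(\code) = K$. The strict monotonicity of generalized Hamming weights over a field (property (2) recalled in Section \ref{sec:genWeight_subcodes}) then gives $\HD^r(\code) < \HD^{r+1}(\code)$ for all $1 \le r < K$. Multiplying through by $M_{s-1} = \lfloor p/2 \rfloor p^{s-1} > 0$ preserves the strict inequality and yields $\LDjoin^r(\code) < \LDjoin^{r+1}(\code)$.

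The main obstacle is the reduction underpinning item (2), namely ensuring that the socle, viewed as an $\Fp$-vector space, has dimension exactly $K$, so that the weights $\HD^1, \ldots, \HD^K$ are defined over precisely the range needed, and that the identification in Corollary \ref{cor:dLtodH} genuinely matches rank-$r$ join-Lee minimizers with dimension-$r$ Hamming minimizers. Granting Corollaries \ref{cor:socle} and \ref{cor:dLtodH}, the strict field-case monotonicity together with the positivity of $M_{s-1}$ completes the argument, and items (1) and (3) require only the feasibility of the obvious competitor subcodes.
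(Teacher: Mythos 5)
Your proposal is correct and takes essentially the same approach as the paper: items (1) and (3) are dispatched directly from the definition of the join-Lee support, and the strict inequality in item (2) is obtained exactly as in the paper, by invoking Corollary \ref{cor:dLtodH} to reduce to the generalized Hamming weights of the socle over $\Fp$ and multiplying the strict field-case inequalities by $M_{s-1}>0$. You merely make explicit some details the paper leaves implicit, such as exhibiting the nonzero codeword in item (1) and noting that the socle has $\Fp$-dimension $K$.
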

    \begin{proof}
        The first property follows easily from the definition of the join- Lee support of a vector $x$. It can be tight, whenever the minimal Lee weight codeword is in the socle, which is not necessary. 
        For the second property we simply use Corollary \ref{cor:dLtodH} and the third property also simply follows from the definition of join-Lee support.
%
    \end{proof}
In fact, we do not recover the exact properties of the generalized Hamming weight codes. We do not have $\LD(\code)= \LDjoin^1(\code)$ and $\LWjoin(\code)= \LDjoin^K(\code).$ This seems to be  the price we have to pay in order to drop the absolute homogeneity property and to be able to consider the Lee metric. 
However, unlike the meet-Lee support we get a nice chain of inequalities:
$$\LD(\code) \leq \LDjoin^1(\code) < \LDjoin^2(\code) < \cdots < \LDjoin^K(\code) \leq \LWjoin(\code).$$ 
This gives us a new Lee-metric Singleton bound.
 
\begin{theorem}\label{thm:SB_joinsupp}
    Let $\code \subset (\zps)^n$ be a (non-degenerate) linear code of rank $K$. Then we have
    $$\LD(\code) \leq M_{s-1}(n-K+1)= \floor{ \frac{p}{2}} p^{s-1} (n-K+1).$$
\end{theorem}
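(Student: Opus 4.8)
The plan is to derive the bound directly from the chain of inequalities established in Proposition \ref{cor:dLprop}, which is the ring-linear analogue of the classical argument that a Singleton bound follows from the strict monotonicity of generalized weights. The whole statement should fall out of combining three facts already at our disposal: the properties in Proposition \ref{cor:dLprop}, Corollary \ref{cor:dLtodH} relating the generalized join-Lee weights to the generalized Hamming weights of the socle, and the explicit evaluation $\card{\suppLjoin(\code)} = \sum_{i=0}^s n_i M_i$ computed just above.

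First I would write down the strict chain
$$\LD(\code) \leq \LDjoin^1(\code) < \LDjoin^2(\code) < \cdots < \LDjoin^K(\code) \leq \LWjoin(\code).$$
Since each $\LDjoin^r(\code)$ is a positive integer and the inequalities in the middle are strict, each step increases the value by at least one. Hence $\LDjoin^K(\code) \geq \LDjoin^1(\code) + (K-1)$, and combining the two outer inequalities gives
$$\LD(\code) \leq \LWjoin(\code) - (K-1) = \LWjoin(\code) - K + 1.$$
This is the telescoping step and is completely routine; it is the exact analogue of the finite-field computation $\dist(\code) \leq \wt(\code) - k + 1$ sketched in Section \ref{sec:genWeight_subcodes}.

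Next I would evaluate $\LWjoin(\code)$ for a non-degenerate code. By the proposition computing $\card{\suppLjoin(\code)}$, we have $\LWjoin(\code) = \sum_{i=0}^{s} n_i M_i$, and non-degeneracy means $n_s = 0$. The key observation is that $M_i = \floor{p^{s-i}/2}\,p^i$ is increasing in… no, decreasing in $i$, with maximum $M_0 = \floor{p^s/2} = M$ at $i=0$; but what we actually need is the crude bound $M_i \leq M_0 = M$ together with $\sum_{i} n_i \leq n$. Here, however, the target bound has coefficient $M_{s-1} = \floor{p/2}\,p^{s-1}$, not $M = M_0$, so I must be more careful: the cleanest route is to invoke Corollary \ref{cor:dLtodH}, which gives $\LDjoin^K(\code) = \HD^K(\code)\,M_{s-1}$ and $\LD(\code) \leq \LDjoin^1(\code) = \HD^1(\code)\,M_{s-1} = \dist_{\mathsf{H}}(\code)\,M_{s-1}$, reducing everything to the generalized Hamming weights of the socle viewed as an $\field_p$-code of dimension $K$.

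Using the Hamming-metric facts, the socle is an $\field_p$-linear code of dimension $K$ whose top generalized Hamming weight $\HD^K$ equals its Hamming support size, which for a non-degenerate code is at most $n$. Applying the classical Singleton-type inequality $\HD^1 \leq \HD^K - K + 1 \leq n - K + 1$ to the socle and then multiplying through by the common factor $M_{s-1}$ yields
$$\LD(\code) \leq \HD^1(\code)\,M_{s-1} \leq M_{s-1}(n - K + 1) = \floor{\frac{p}{2}}\,p^{s-1}(n-K+1),$$
which is exactly the claim. The main obstacle, and the one point deserving care, is justifying that the socle genuinely behaves like a full $\field_p$-linear code of dimension $K$ with Hamming support of size at most $n$ in the non-degenerate case; once the identification of Corollary \ref{cor:socle} is invoked this is immediate, since the rank $K$ of $\code$ equals the $\field_p$-dimension of its socle and non-degeneracy guarantees every coordinate is nonzero on some socle element, so the Hamming support has size $n$. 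I would therefore present the short telescoping argument first, then reduce to the socle via Corollary \ref{cor:dLtodH}, and finally apply the known Hamming-metric generalized-weight Singleton inequality.
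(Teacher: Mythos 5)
Your proposal is correct and matches the paper's own argument: after discarding the too-weak step-size-one telescoping, you settle on exactly the route the paper records at the end of its proof, namely $\LD(\code) \leq \LDjoin^1(\code) = \HD^1(\code)\,M_{s-1} \leq (n-K+1)M_{s-1}$ via Corollaries \ref{cor:socle} and \ref{cor:dLtodH} and the Hamming Singleton bound for the socle. The paper's primary derivation instead telescopes the chain with step size $M_{s-1}$ (using $\LDjoin^{i}(\code)-\LDjoin^{i-1}(\code)\geq M_{s-1}$ and $\LDjoin^K(\code)=nM_{s-1}$ for non-degenerate codes), but it explicitly notes your direct route as an equivalent alternative, so the two are essentially the same.
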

\begin{proof}
    Using the properties 1.-3. from Proposition \ref{cor:dLprop} we know that
    $$\LD(\code) \leq \LDjoin^K(\code) -\sum_{i=1}^{K-1} x_i,$$ 
    where $$x_i =  \LDjoin^i(\code) - \LDjoin^{i-1}(\code).$$
    From Corollary \ref{cor:dLtodH} we know that 
    $$x_i=\LDjoin^i(\code) - \LDjoin^{i-1}(\code) \geq M_{s-1}.$$ We get the claim using that  $$\LDjoin^K(\code)= \sum_{i=0}^{s-1} n_i M_{s-1} = nM_{s-1}, $$ where we have assumed that the code is non-degenerate.
    Note that  
    we could have gotten this bound also by directly using 
    $$\LD(\code) \leq\LDjoin^1(\code) = \HD^1(\code) M_{s-1} = \HD(\code) M_{s-1} \leq (n-K+1) M_{s-1}.$$ 
\end{proof}
 
 This new Singleton bound is clearly much sharper than the previously known Lee-metric Singleton bounds, for example the bound from Theorem \ref{shir}. 

\subsection{Density of Optimal Codes with respect to the Join-Lee Support}\hfill\\
Clearly, any code $\code \in (\zps)^n$ of rank $K$ attaining this bound can be characterized by the following two properties:
\begin{enumerate}
    \item The socle $\code_{s-1}=\code \cap \langle p^{s-1} \rangle$ is an MDS code over $\mathbb{F}_p. $
    \item There exists a  minimum Lee weight codeword in the socle. 
\end{enumerate}
The first property already implies sparsity as $n$ tends to infinity and triviality for $p=2.$ Even the second property is problematic: $\LD(\code_{s-1})=(n-K+1)M_{s-1}$, implies that all nonzero entries of a minimal Hamming weight codeword in the socle must be of maximal Lee weight. Using the systematic form of the socle,
\begin{align*}
    G_{s-1} = \begin{pmatrix} p^{s-1} \mathbb{I}_K & p^{s-1}A\end{pmatrix},
\end{align*}
we can immediately see that any row $g$ of $G_{s-1}$ is also of minimal Hamming weight $n-K+1$. Thus, for $g$ to have only nonzero entries of maximal Lee weight implies $p^{s-1}=M_{s-1}$, which will restrict optimal codes with respect to this bound to $p\in \set{2,3}$ and any positive integer $s$. Because of the MDS property over $\field_3$, we must have a block length $n\leq 4$. \\
We can drop the second condition, i.e., there exists a minimal Lee weight codeword in the socle, if we manage to estimate the difference 
$$\LDjoin^1(\code) - \LD(\code).$$
This task is, however, equally hard as bounding $\LD(\code)$ itself.\\

Knowing hence, that only codes over $\zps$ for $p = 2, 3$ with length $n = 4$ can attain this bound, 
the socles of the codes $\code \subseteq \zpsn$ with $p = 3$ and $n \leq 4$ attaining the bound in Theorem \ref{thm:SB_joinsupp} are hence generated by a matrix of the form 
\begin{align}\label{equ:MLD_join}
    \begin{pmatrix}
        3^{s-1} & 0 & 0 & A \\ 0 & 3^{s-1} & 0 & B \\ 0 & 0 & 3^{s-1} & C
    \end{pmatrix},
\end{align}
where $A, B, C \in \zps$ are such that their Lee weight is $M_{s-1}$.\\

\begin{example}
For instance, the code $\code\subseteq(\Zp{9})^4$ of rank $K = 3$ generated by $$\begin{pmatrix} 3 & 0 & 0 & 3 \\ 0 & 3 & 0 & 6 \\ 0 & 0 & 3 & 6 \end{pmatrix}.$$
This code has $\LD(\code)=6$ and thus attains the join-Lee metric Singleton bound as $\LD(\code)=6= 3 \cdot (4-3+1)= M_{s-1}(n-K+1).$
\end{example}

Note that for MDS codes, we actually know all $r$-th generalized Hamming weights: let $\mathcal{C} \subseteq \mathbb{F}_q^n$ be a linear code of dimension $k$, then
$$\HD^r(\code)=n-k+r.$$
Thus, a natural question that arises is whether the optimal codes with respect to the newly defined Lee-metric Singleton bound have a similar behaviour. That is, we  are interested  in an expression for the $r$-th generalized join-Lee weight $\LDjoin^r(\code)$ for every $r \in \{1, \ldots , K\}$.

\begin{proposition}
    Let $\code \subseteq (\zps)^n$ be code of rank $K$ attaining the join-support bound in Theorem \ref{thm:SB_joinsupp}. Then, for each $r \in \{1, \ldots , K\}$, the $r$-th generalized join-Lee weight is given by
    \begin{align*}
        \LDjoin^r(\code) = p^{s-1}(n-K+r).
    \end{align*}
\end{proposition}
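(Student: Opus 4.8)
The plan is to reduce the statement to a known fact about generalized Hamming weights of MDS codes via the two results already established for the join-Lee support. By Corollary~\ref{cor:socle}, the generalized join-Lee weights of $\code$ coincide with those of its socle $\code\cap\langle p^{s-1}\rangle$, and by Corollary~\ref{cor:dLtodH} we have the exact identity
$$\LDjoin^r(\code) = \HD^r(\code)\, M_{s-1} = \HD^r(\code)\, p^{s-1}$$
for each $r\in\{1,\ldots,K\}$. Here the last equality uses that a code attaining the bound in Theorem~\ref{thm:SB_joinsupp} forces $p^{s-1}=M_{s-1}$, as observed in the density discussion following that theorem. So the entire statement will follow once I identify the generalized Hamming weights $\HD^r(\code)$.

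First I would invoke the characterization of optimal codes from the density discussion: a code attaining the join-support bound has a socle that is (identifiable with) an MDS code over $\mathbb{F}_p$ of length $n$ and dimension $K$ (note that for such codes the rank equals the $\mathbb{F}_p$-dimension of the socle). Then I would apply the known formula for generalized Hamming weights of MDS codes recalled in the excerpt, namely that for a linear MDS code of length $n$ and dimension $k$ one has $\HD^r = n-k+r$. Applying this with $k=K$ to the socle yields
$$\HD^r(\code) = \HD^r(\code\cap\langle p^{s-1}\rangle) = n-K+r.$$
Combining with the identity from the previous paragraph gives exactly $\LDjoin^r(\code) = p^{s-1}(n-K+r)$, which is the claim.

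The main point requiring care, rather than any deep obstacle, is to justify cleanly that attaining the bound in Theorem~\ref{thm:SB_joinsupp} indeed forces the socle to be MDS \emph{and} forces $p^{s-1}=M_{s-1}$, so that both Corollary~\ref{cor:dLtodH} and the MDS weight formula can be applied simultaneously. Both facts are already spelled out in the density paragraph following Theorem~\ref{thm:SB_joinsupp}: the first property listed there is precisely that the socle is MDS over $\mathbb{F}_p$, and the argument using the systematic generator $G_{s-1}=(p^{s-1}\mathbb{I}_K \mid p^{s-1}A)$ shows $p^{s-1}=M_{s-1}$ (restricting to $p\in\{2,3\}$). I would simply cite those observations rather than re-derive them. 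With the MDS structure of the socle in hand, everything else is a direct substitution into the two corollaries, so no genuinely hard step remains; the only thing to double-check is the boundary case $r=K$, where the formula should recover $\LDjoin^K(\code)=p^{s-1}\,n=\LWjoin(\code)$ for the non-degenerate code, consistent with Proposition~\ref{cor:dLprop}.
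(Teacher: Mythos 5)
Your proof is correct, but it takes a genuinely different route from the paper's. The paper argues structurally: it writes down the shape of a minimum Lee weight codeword of an optimal code and then asserts that every rank-$r$ subcode attaining $\LDjoin^r(\code)$ lies in the socle and admits a generator matrix permutation equivalent to $\begin{pmatrix} \mathbf{0} & p^{s-1}\mathbb{I}_r & \pm p^{s-1}\mathbb{J}\end{pmatrix}$, from which the weight $p^{s-1}(n-K+r)$ is read off column by column. You instead pass through Corollaries \ref{cor:socle} and \ref{cor:dLtodH} to reduce everything to the generalized Hamming weights of the socle, and then invoke the MDS formula $\HD^r = n-k+r$ --- which the paper quotes as motivation immediately before the proposition but never actually uses in its proof. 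Your reduction is arguably the cleaner of the two: the paper's claim about the explicit form of the optimal $G_r$ is stated without justification (it is essentially equivalent to knowing the supports of the subcodes attaining the generalized weights of an MDS code), whereas you replace it with a citation of the established formula. Both routes ultimately rest on the same two facts from the density discussion, which you correctly isolate: the socle is MDS over $\F_p$ of dimension $K$, and optimality forces $p\in\{2,3\}$, hence $M_{s-1}=p^{s-1}$. One small caveat on your final consistency check: you claim $\LDjoin^K(\code)=p^{s-1}n=\LWjoin(\code)$, but Proposition \ref{cor:dLprop} only guarantees $\LDjoin^K(\code)\leq\LWjoin(\code)$, and $\LWjoin(\code)$ can exceed $np^{s-1}$ whenever some coordinate of the full code contains elements of Lee weight larger than $M_{s-1}$; equality $np^{s-1}$ is guaranteed only for the join-Lee weight of the socle. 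Since that remark is not load-bearing, it does not affect the validity of your argument.
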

\begin{proof}
    Recall that $\code$ is optimal with respect to the join-support Lee-metric Singleton bound, if its minimum Lee weight code word has the form 
    \begin{align*}
        c_{\min} = 
        \left(
        \begin{array}{ccccccc|ccc}
            0 & \cdots & 0 & p^{s-1} & 0 & \cdots & 0 & \pm p^{s-1} & \cdots & \pm p^{s-1}
        \end{array}    
        \right).
    \end{align*}
    Thus, it holds $\LDjoin^1(\code) = p^{s-1}(n - K + 1)$.

    For arbitrary $r\in \set{1, \ldots , K}$, the $r$-dimensional subcodes of $\code$ attaining the $r$-th generalized join-Lee weight are contained in the socle of the code as well. Hence, they admit a generator matrix $G_r$ which is permutation equivalent to a matrix of the form
    \begin{align*}
        G_r =
        \left(
        \begin{array}{c|c|c}
            \mathbf{0} & p^{s-1} \mathbb{I}_r & \pm p^{s-1} \mathbb{J}
        \end{array} 
        \right),
    \end{align*}
    where $\mathbf{0}$ is the all-zero matrix of size $r\times (K-r)$ and $\mathbb{J}$ is an all-one matrix of size $r\times (n-K)$. Thus, the desired result follows.
\end{proof}

\subsection{Invariance under Isometry in the Lee Metric}\hfill\\
For the generalized Hamming weights of a linear  $k$-dimensional code $\mathcal{C} \subseteq \mathbb{F}_q^n$, we also know that $\HD^r(\code) = \HD^r(\code'),$ for any equivalent code $\code'$ and any $r \in \{1, \ldots, k\}.$ 
Also for the generalized join-Lee weights we have the same behaviour.
\begin{proposition}
    Let $\mathcal{C} \subseteq \left( \mathbb{Z}/ p^s\mathbb{Z}\right)^n$ be a linear code of rank $K$, then 
    $\LDjoin^r(\code)= \LDjoin^r(\code'),$ for all $r \in \{1, \ldots, K\}$ and all $\code'$ which are equivalent to $\code,$ under the Lee-metric isometries.
\end{proposition}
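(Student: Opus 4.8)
The plan is to exploit the fact that every Lee isometry is a $\zps$-linear automorphism of the ambient module, so that it induces a rank-preserving, weight-preserving bijection on subcodes. Recall that the Lee-metric isometries are $\{-1,1\}\rtimes S_n$; concretely, such an isometry is a map $\varphi=(\sigma,\epsilon)$ with $\sigma\in S_n$ a coordinate permutation and $\epsilon\in\{-1,1\}^n$ a tuple of signs, acting by $\varphi(x)_i=\epsilon_i\,x_{\sigma(i)}$. The first point I would record is that each such $\varphi$ is a $\zps$-linear automorphism of $\zpsn$: coordinate permutation is linear, and coordinate-wise multiplication by the unit $\pm 1$ is a diagonal linear bijection. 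Consequently $\code'=\varphi(\code)$, and the assignment $\subcode\mapsto\varphi(\subcode)$ is a bijection between the rank-$r$ submodules of $\code$ and those of $\code'$ that preserves rank.

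Next I would show that $\LWjoin$ is preserved by $\varphi$ on every subcode. Fix a subcode $\subcode\leq\code$ and an index $i\in\{1,\ldots,n\}$. Using the sign-invariance of the Lee weight, $\lweight(\pm x)=\lweight(x)$, we have
$$
\max_{c\in\varphi(\subcode)}\lweight(c_i)
=\max_{d\in\subcode}\lweight(\epsilon_i\,d_{\sigma(i)})
=\max_{d\in\subcode}\lweight(d_{\sigma(i)}),
$$
which is exactly the $\sigma(i)$-th entry of $\suppLjoin(\subcode)$. Hence $\suppLjoin(\varphi(\subcode))$ is obtained from $\suppLjoin(\subcode)$ by permuting its entries according to $\sigma$. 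Since the size of an $n$-tuple is the sum of its entries, and this sum is invariant under permutation, it follows that
$$
\LWjoin(\varphi(\subcode))=\card{\suppLjoin(\varphi(\subcode))}
=\card{\suppLjoin(\subcode)}=\LWjoin(\subcode).
$$

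Finally I would conclude by passing to the minimum. Because $\varphi$ gives a rank-preserving bijection on rank-$r$ subcodes and leaves $\LWjoin$ unchanged on each of them,
$$
\LDjoin^r(\code')
=\min_{\substack{\subcode'\leq\code'\\ \rk(\subcode')=r}}\LWjoin(\subcode')
=\min_{\substack{\subcode\leq\code\\ \rk(\subcode)=r}}\LWjoin(\varphi(\subcode))
=\min_{\substack{\subcode\leq\code\\ \rk(\subcode)=r}}\LWjoin(\subcode)
=\LDjoin^r(\code),
$$
as claimed. There is no serious obstacle here; the argument is essentially bookkeeping. The two points that genuinely need verifying, and on which I would be careful, are that the isometries are $\zps$-linear automorphisms (so that ranks and the submodule lattice are preserved, which is what lets us transport the minimization faithfully) and that sign flips do not alter the coordinate-wise Lee weight (property $\lweight(x)=\lweight(-x)$), since it is precisely this that makes $\suppLjoin$ invariant up to a $\sigma$-relabeling of its coordinates.
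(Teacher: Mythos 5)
Your proof is correct and follows essentially the same route as the paper's: both exploit that a Lee isometry is a permutation composed with coordinatewise sign flips, that $\lweight(x)=\lweight(-x)$ makes the join-support of a subcode transform by a mere permutation of its entries, and that the entry-sum is permutation-invariant, before minimizing over rank-$r$ subcodes. In fact your writeup is more careful than the paper's rather terse proof, since you explicitly verify that the isometry is a $\zps$-linear automorphism inducing a rank-preserving bijection on subcodes, a point the paper leaves implicit.
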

\begin{proof}
    Recall that the Lee-metric isometries only consist of permuting the positions and multiplying any position by $1$ or $-1.$ Thus,   all codewords of $\code'$ can be written as $c' =\sigma(c) \star v$, for some permutation $\sigma$ and $v \in \{1,-1\}^n$, where $\star$ denotes the coordinatewise multiplication and $c \in \code$. Now the claim follows immediately as 
    \begin{align*}
    \LDjoin^r(\code) &= \min\{ |( \max\limits_{c \in \mathcal{C}} \{ \lweight(c_1) \}, \ldots, \max\limits_{c \in \mathcal{C}} \{ \lweight(c_n) \} )| \mid c \in \mathcal{D} \leq \mathcal{C}, \text{rk}(\mathcal{D})=r\} \\
    &= \min\{ |\sigma( \max\limits_{c \in \mathcal{C}} \{ \lweight(c_1) \}, \ldots, \max\limits_{c \in \mathcal{C}} \{ \lweight(c_n\})| \mid c \in \mathcal{D} \leq \mathcal{C}, \text{rk}(\mathcal{D})=r\} \\ & = \LDjoin^r(\code').
    \end{align*}
     
\end{proof}

\section{Column Support for the Lee Metric}\label{sec:colsupport}
We observe that in order to compute the $r$-th generalized Hamming weight of a code $\code$, all we do is considering a generator matrix $G$ and count the number of nonzero columns, i.e., the column weight. In fact, for any $r$-th generalized Hamming weight one can choose $r$ rows of $G$ which attain the minimal column weight.\\

 For a matrix $A \in \mathcal{R}^{K \times n}$ we will denote by $S_r(A) \in \mathcal{R}^{r \times n}$ all the submatrices of $A$ of size $r \times n.$

\begin{definition}
Consider a matrix $A = (a_1^\top \cdots a_n^\top) \in \mathcal{R}^{K \times n}$. We define the \textit{column weight}, $\wt_{\mathrm{C}}(A)$, of $A$ by the number of nonzero columns of $A$, i.e.,
\begin{align*}
    \wtC(A) := \card{\set{ i \in \set{1, \ldots , n} \st a_i \neq 0 \in \mathcal{R}^K}}.
\end{align*}
The \textit{column support}, $\supp_{\mathrm{C}}(A)$, of $A$ is given by
$$\supp_\mathrm{C}(A) := (\max(\supp(a_1)), \ldots, \max(\supp(a_n))).$$ 
\end{definition}
Again we have the nice property that $\card{\supp_\mathrm{C}(A)}= \wtC(A).$
In fact,
$$\wtC(A) = \card{\supp_\mathrm{C}(A) }= \sum_{i=1}^n \max(\supp(a_i)). $$
Thus, we can define the column support, column weight  and the generalized column weights of a code.
\begin{definition}
    Let $\code \subseteq \mathcal{R}^n$ be a linear code of rank $K$. The \textit{column support} of $\code$ is given by the minimal column support of any generator matrix, i.e.,
    \begin{align}\label{def:colsupport_code}
        \supp_\mathrm{C}(\code)=\min_{G: \langle G \rangle = \code} \supp_\mathrm{C}(G).
    \end{align}
    The \textit{column weight} of a code is then given by the size of the column support, i.e.,
    $$\wtC(\code)= \card{\supp_\mathrm{C}(\code)}. $$ 
    Finally, the \textit{$r$-th generalized column weight} of $\code$ is defined as 
    \begin{align}\label{def:gen_colWeight}
        \distC^r(\code)= \min\{\wtC(\subcode) \mid   \subcode \leq \code, \rk(\subcode)=r\}.
    \end{align}
\end{definition}

Note that the definition of the $r$-th generalized column weight of a linear code $\code \subset \mathcal{R}^n$ of rank $K$ is equivalent to 
\begin{align*}
    \distC^r(\code)= \min \{\wtC(S_r(G)) \st \rk(\langle S_r(G) \rangle )=r,  \langle G \rangle = \code\}.
\end{align*}
The difficulty of this new definition lies in the choice of the generator matrix instead of the choice of the subcode. This is the only difference to the usual definition of join support and weight. 


The difficulty of finding the correct generator matrix to read of the minimal column weights, or the subcode with minimal weight is equivalent.\\

Let us show the dependency on the choice of generator matrix in the following example. 
\begin{example}
Let us consider $\code \subseteq \mathbb{F}_2^5$ generated by
\begin{align*}
    G = \begin{pmatrix}
            1 & 0 & 0 & 1 & 1 \\ 
            0 & 1 & 0 & 1 & 1 \\
            0 & 0 & 1& 1 & 1
        \end{pmatrix}.
\end{align*}
If we were to compute the column (Hamming) weights of $S_r(G)$, we would get for $S_1(G)$
$$\wtC\left(\begin{pmatrix} 1 &0 & 0 & 1 & 1 \end{pmatrix}\right)= 3.$$ 
However, this is not the first generalized Hamming weight of the code. 
There exists a generator matrix $G'$,  such that $S_r(G')$  attains the $r$-th generalized Hamming weights as column weights, for each $r \in \{1, \ldots, k\}$:
$$G'= \begin{pmatrix} 
1 & 1 & 0 & 0 & 0 \\
0 &1 & 1 & 0 & 0 \\
0 & 1 & 0 & 1 & 1 
\end{pmatrix}.$$
Now we can read of the $r$-th generalized Hamming weights easily:
\begin{align*}
    \distC^1(\code) & = \wtC\left(\begin{pmatrix} 1 &  1&  0&  0&  0 \end{pmatrix}\right)=2, \\ 
    \distC^2(\code) & = \wtC\left(\begin{pmatrix} 1 &  1&  0&  0&  0 \\ 0 & 1 & 1 & 0 & 0 \end{pmatrix} \right)=3, \\ 
    \distC^3(\code) & = \wtC\left(\begin{pmatrix} 1 &  1&  0&  0&  0 \\ 0 & 1 & 1 & 0 & 0  \\ 0 & 1 & 0 & 1 & 1 \end{pmatrix} \right)=5.
\end{align*}
\end{example}
Thus, the definition  is not independent on the choice of generator matrix.
Let us now adapt the definitions to the Lee weight.

\begin{definition}
Consider a matrix $A = \begin{pmatrix} a_1^\top & \cdots & a_n^\top \end{pmatrix} \in \mathcal{R}^{K \times n}$. Its \textit{column Lee support} is given by the $n$-tuple 
 $$\suppLC(A) = (\max(\suppL(a_1), \ldots , \max(\suppL(a_n)).$$
The \textit{column Lee weight} of $A$ is given by $$\wtLC(A) = \card{\suppLC(A) }= \sum_{i=1}^n \max(\suppL(a_i)). $$
\end{definition}
Note that this definition asks us to choose in each column the entry of maximal Lee weight.

\begin{example}
Let us consider the matrix  
$$G= \begin{pmatrix} 1 & 0 & 3 & 2 \\ 0 & 1 & 2 & 0 \\ 0 & 0 & 3 & 3 \end{pmatrix} \in  \mathbb{Z}/9\mathbb{Z}^{3 \times 4}.$$ 
Then, the column Lee support and the column Lee weight of $G$ are given by
$$\suppLC(G)= (1,1,3,3) \text{ and } \wtLC(G) = 8.$$
\end{example}

We are now able to extend the definitions of column Lee support and column Lee weight to a linear code $\code \subseteq \left(\mathbb{Z}/p^s\mathbb{Z}\right)^n$ of rank $K$.
\begin{definition}\label{def:columnweight}
    Consider a linear code $\code \subseteq \left(\mathbb{Z}/p^s\mathbb{Z}\right)^n$ of rank $K$. We define its \textit{column Lee support} by the minimal column Lee weight of any generator matrix of $\code$, i.e.,
    $$\suppLC(\code)=\min_{G: \langle G \rangle = \code} \suppLC(G).$$
    The \textit{column Lee weight} of $\code$ is then given by the size of its column Lee support, i.e.,
    $$\wtLC(\code)= \card{\suppLC(\code)}.$$
\end{definition}

As in the case for the Hamming metric, also in this case the definition is not independent on the choice of generator matrix. For this, we introduce the following matrix, called \emph{reduced systematic} generator matrix.
\begin{definition}\label{def:reducedSystematicFormG}
    Consider a matrix $G \in \left( \zps \right)^{K \times n}$ as given in \eqref{systematicformG}. We call $G$ to be in \textit{reduced systematic form} if for every entry $a$ of $A_{i, j} \in (\mathbb{Z} / p^{s+1-i} \mathbb{Z})^{k_i \times k_j}$ with $i < j \leq s$ it holds that $\LW(a) \leq p^{j-1}$.
\end{definition}
We will denote a matrix $G$ in reduced systematic form by $G_{\mathsf{rsys}}$. Let us give an example to clarify Definition \ref{def:reducedSystematicFormG}.
\begin{example}
    Consider  $G \in \Zp{27}^{3\times 4}$ 
    \begin{align*}
        G = \begin{pmatrix}
                1 & 14 & 11 & 0 \\ 0 & 9 & 18 & 0 \\ 0 & 0 & 9 & 18
            \end{pmatrix}.
    \end{align*}
    Note that $G$ is in systematic form as defined in \eqref{systematicformG}. By elementary row reduction, i.e., by subtracting suitable multiples of the rows $r_j$ from row $r_i$ with $1 \leq i < j \leq 3$, we obtain a matrix $G_{\mathsf{rsys}}$ in reduced systematic form
    \begin{align*}
        G_{\mathsf{rsys}} = \begin{pmatrix}
                1 & 5 & -7 & 0 \\ 0 & 9 & 9 & -18 \\ 0 & 0 & 9 & 18
            \end{pmatrix}.
    \end{align*}
\end{example}

By a similar argument used to prove Proposition \ref{prop:systematicG} we observe the following.
\begin{proposition}\label{prop:redsystematicG}
    Consider a linear code $\code \subseteq \left(\zps\right)^n$ of rank $K$ and subtype $(k_0, \ldots , k_{s-1})$. The code $\code$ is permutation equivalent to a code having a generator matrix in reduced systematic form.
\end{proposition}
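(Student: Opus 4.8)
The plan is to establish Proposition~\ref{prop:redsystematicG} by first invoking Proposition~\ref{prop:systematicG} to reduce to the case where $\code$ already has a generator matrix $\Gsys$ in the systematic form \eqref{systematicformG}, and then showing that elementary row operations transform $\Gsys$ into a matrix in reduced systematic form without destroying the systematic shape. Since permutation equivalence is already granted by Proposition~\ref{prop:systematicG}, the only new content is that, \emph{within} the systematic form, one can further reduce each off-diagonal entry so that its Lee weight is at most $p^{j-1}$ in column-block $j$. Thus I would phrase the argument as a constructive row-reduction procedure applied to $\Gsys$.

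First I would fix attention on a single entry $a$ of a block $A_{i,j}$ with $i < j \le s$. The pivot of row-block $j$ is $p^{j-1}\mathbb{I}_{k_{j-1}}$, so the entries of that pivot block are the scalar $p^{j-1}$. The key observation is that subtracting an integer multiple $t\cdot p^{j-1}$ of the appropriate pivot row from the row containing $a$ changes $a$ to $a - t p^{j-1}$ while leaving all already-processed columns untouched, because the pivot row is zero in every column-block left of $j$. Working in $\mathbb{Z}/p^s\mathbb{Z}$, the residues of $a$ modulo $p^{j-1}\mathbb{Z}/p^s\mathbb{Z}$ form a coset, and I can choose the representative of smallest Lee weight. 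Concretely, reducing $a$ modulo $p^{j-1}$ into the symmetric range guarantees a representative $a'$ with $\card{a'}\le \lfloor p^{j-1}/2\rfloor \le p^{j-1}$, hence $\LW(a')\le p^{j-1}$, which is exactly the condition in Definition~\ref{def:reducedSystematicFormG}.

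The main step is to verify that these reductions can be carried out consistently across all blocks without a later reduction spoiling an earlier one. I would process the row-blocks from bottom to top (largest pivot index first) and the column-blocks within each row from right to left: when reducing the entries of $A_{i,j}$ using the pivot row of block $j$, that pivot row has support only in column-block $j$ and to its right, so the operation is confined to columns that have not yet been finalized in the left-to-right sense, and in particular it does not alter the identity pivot blocks $p^{i-1}\mathbb{I}_{k_{i-1}}$ on the diagonal. This ordering ensures the systematic shape \eqref{systematicformG} is preserved at every stage, so the end result is a matrix that is simultaneously in systematic form and satisfies the reduced condition.

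The step I expect to be the main obstacle is bookkeeping the interaction between blocks: one must check that reducing an entry of $A_{i,j}$ by a multiple of the block-$j$ pivot does not increase the Lee weight of an entry of $A_{i,j'}$ for $j' > j$ beyond the already-established bound. Since the block-$j$ pivot row is zero in column-blocks $j' > j$ precisely when those have the $p^{j-1}\mathbb{I}$ structure, the right-to-left ordering within each row makes these interactions vanish; nonetheless, making this precise requires carefully tracking which pivot rows have nonzero entries in which column-blocks, and confirming that the reduction modulus $p^{j-1}$ is compatible with the ambient modulus $p^{s+1-i}$ in which $A_{i,j}$ lives. Once this compatibility is spelled out, the claim follows exactly as in the proof of Proposition~\ref{prop:systematicG}, with the added final normalization of off-diagonal entries.
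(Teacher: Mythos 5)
Your overall strategy is exactly the one the paper intends: Proposition~\ref{prop:systematicG} supplies the permutation equivalence and the systematic shape, and the only new content is a normalization of the off-diagonal blocks by elementary row operations using the pivot rows of the blocks $p^{j-1}\mathbb{I}_{k_{j-1}}$. The paper itself offers no more detail than ``by a similar argument used to prove Proposition~\ref{prop:systematicG}'' together with the worked example over $\Zp{27}$, so spelling out the row reduction is the right move. Your modular arithmetic is also essentially sound, and it resolves the ``compatibility of moduli'' point you flagged but left open: the actual matrix entry in row-block $i$, column-block $j$ is $p^{i-1}a$, and subtracting $t$ times a block-$j$ pivot row replaces it by $p^{i-1}\left(a - tp^{j-i}\right)$, so $a$ can be brought into the symmetric range modulo $p^{j-i}$, giving $\LW(a) \leq \floor{p^{j-i}/2} \leq p^{j-1}$ as required by Definition~\ref{def:reducedSystematicFormG}.

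However, the processing order you prescribe is wrong, and as literally stated the procedure would fail. You say you sweep ``the column-blocks within each row from right to left,'' but a block-$j$ pivot row has support in column-block $j$ \emph{and in every block to its right}: its entries there are rows of $p^{j-1}A_{j,j'}$ for $j' > j$, which need not vanish. Hence reducing an entry of $A_{i,j}$ perturbs the entries of $A_{i,j'}$ for all $j' > j$ by multiples of $p^{j-1}$; under a right-to-left sweep those blocks are already finalized, and such a perturbation (for instance adding $tp^{j-1}a'$ to an entry that was previously $0$) can push their Lee weight far above the bound $p^{j'-1}$. Your own justification --- that the operation is ``confined to columns that have not yet been finalized in the left-to-right sense'' --- is the correct statement, but it describes the opposite order from the one you announce. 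The fix is immediate: within each row, process the column-blocks in \emph{increasing} $j$. Then every pivot row used at stage $j$ vanishes on all column-blocks strictly left of $j$, so finalized entries are never revisited, and the final block $A_{i,s+1}$, on which the definition imposes no condition, absorbs whatever changes remain. (The bottom-to-top order on row-blocks is immaterial, since each operation modifies only the single row being processed.) With that single correction your proof is complete and coincides with the paper's implicit argument.
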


This new systematic form yields a natural upper bound on the column Lee weight of a code $\code$.
For this let us now consider the support subtype outside an information set of size $K$ of the code. Since we can always find a permutation equivalent code, which has an information set in the first $K$ positions, we can assume that we only consider the  last $n-K$ columns of a generator matrix in reduced systematic form.  In order not to confuse it with the support subtype $(n_0, \ldots, n_s)$ of the entire generator matrix, we will denote it by $(\mu_0, \ldots , \mu_s)$. 
\begin{proposition}\label{prop:bound_weightC}
       Let $\code \subseteq (\zps)^n$ be a linear code of rank $K$ and subtype $(k_0, \ldots, k_{s-1})$ and let $(\mu_0, \ldots, \mu_{s-1})$ be the support subtypes in the last $n-K$ positions. 
       Then the column Lee weight of $\code$ is upper bounded by
       \begin{align*}
           \wtLC (\code) \leq \sum_{i = 0}^{s-1} p^ik_i + \sum_{i = 0}^{s}\mu_iM_i.
       \end{align*}
\end{proposition}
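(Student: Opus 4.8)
The plan is to bound the column Lee weight by exhibiting one particular generator matrix and computing the column Lee weight of it, since by Definition \ref{def:columnweight} the column Lee weight of $\code$ is the minimum over all generator matrices, hence at most the value achieved by any single choice. The natural candidate is a generator matrix in reduced systematic form $\Grsys$, which exists up to permutation equivalence by Proposition \ref{prop:redsystematicG}; since the column Lee weight is unchanged by permuting columns, I may assume the information set sits in the first $K$ positions.

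The key observation is that $\wtLC(G) = \sum_{j=1}^n \max(\suppL(a_j))$ splits into the contribution of the first $K$ columns (the information set) and that of the last $n-K$ columns. First I would handle the information set. In reduced systematic form the block structure of \eqref{systematicformG} shows that the $j$-th column in the information set contains the pivot entry $p^i$ (for the appropriate $i$ determined by which $k_i$-block that column lies in) together with entries lying strictly above it coming from the blocks $A_{\ell,j}$ with $\ell < j$. By Definition \ref{def:reducedSystematicFormG}, each such above-pivot entry $a$ in the block belonging to column $j$ satisfies $\LW(a) \leq p^{i-1} < p^i$, so the pivot entry $p^i$ itself is the entry of maximal Lee weight in that column. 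Summing over all information-set columns, grouping them by the block index $i$ (of which there are $k_i$ many), gives the contribution $\sum_{i=0}^{s-1} p^i k_i$.

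For the last $n-K$ columns I would use the support subtype $(\mu_0,\ldots,\mu_s)$ restricted to those positions. A column belonging to $\mu_i$ lives in the ideal $\langle p^i\rangle$, so every one of its entries has Lee weight at most $M_i = \floor{p^{s-i}/2}p^i$, the maximal Lee weight attainable in $\langle p^i\rangle$. Hence $\max(\suppL(a_j)) \leq M_i$ for such a column, and summing over all $n-K$ columns grouped by $i$ yields the bound $\sum_{i=0}^s \mu_i M_i$. Adding the two contributions gives $\wtLC(\code) \leq \wtLC(\Grsys) \leq \sum_{i=0}^{s-1} p^i k_i + \sum_{i=0}^s \mu_i M_i$, as claimed.

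The main obstacle I anticipate is the bookkeeping in the information-set block: one must verify carefully that in reduced systematic form the pivot $p^i$ genuinely dominates all entries appearing above it within the same column, across all $s$ block-layers, and that no larger Lee weight sneaks in from the reduction. This rests entirely on the reduced-systematic inequality $\LW(a)\leq p^{j-1}$ from Definition \ref{def:reducedSystematicFormG}, so the argument is really about translating that per-entry bound into the statement that each information column contributes exactly its pivot Lee weight $p^i$ rather than something larger. Everything else is a routine grouping-and-summation over the subtype and support subtype.
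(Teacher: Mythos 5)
Your proposal is correct and follows essentially the same route as the paper's own proof: bound $\wtLC(\code)$ by $\wtLC(\Grsys)$ using Proposition \ref{prop:redsystematicG}, observe that in reduced systematic form the pivot $p^i$ is a maximal-Lee-weight entry of each information column (contributing $\sum_{i=0}^{s-1}p^ik_i$), and bound each of the last $n-K$ columns by the maximal Lee weight $M_i$ of the ideal $\langle p^i\rangle$ it lies in. The only nit is that Definition \ref{def:reducedSystematicFormG} gives $\LW(a)\leq p^{j-1}$, i.e.\ at most (not strictly below) the pivot value, but this changes nothing since the pivot remains an entry of maximal Lee weight in its column.
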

\begin{proof}
    By Definition \ref{def:columnweight} we have 
    \begin{align*}
        \wtLC (\code) = \card{ \min_{G: \langle G \rangle = \code} \suppLC(G) }.
    \end{align*}
    Furthermore, by Proposition \ref{prop:redsystematicG}, $\code$ admits a generator matrix $\Grsys \in (\zps)^{K \times n}$ in reduced systematic form. Hence, the column Lee weight of $\Grsys$ is a natural upper bound to the column Lee weight of the code, i.e.,
    \begin{align*}
        \wtLC (\code) \leq \wtLC (\Grsys).
    \end{align*}
    Thanks to the form of $\Grsys$, we observe that the maximum Lee weight in the first $K$ columns is given by the entry $(\Grsys)_{i,i}$ for $i \in \{1, \ldots, K\}$. For the last $n-K$ columns we have to assume the maximal Lee weight. The support subtype $(\mu_0. \ldots, \mu_{s})$ in these positions immediately tells us, how many columns are contained in which  ideal. Hence, for each column lying in $\langle p^i  \rangle$ (where $i$ is maximal for this column) the maximal Lee weight is $M_i$. This yields the desired result.
\end{proof}

Let us now introduce the $r$-th generalized column Lee weights of a code $\code$. 
\begin{definition}
    Given a linear code $\code \subseteq (\zps)^n$ of rank $K$ and subtype $(k_0, \ldots , k_{s-1})$. The \textit{$r$-th generalized column Lee weight} of $\code$ is defined as
    \begin{align}\label{def:genLee_colWeight}
        \distLC^r(\code)= \min\{\wtLC(\subcode) \st \subcode \leq \code, \rk(\subcode)=r\}.
    \end{align}
\end{definition}

Similarly to Definition \eqref{def:gen_colWeight}, the $r$-th generalized column Lee weight is equivalent to
$$\distLC^r(\code)= \min \{\wtLC(S_r(G)) \mid  \rk(\langle S_r(G) \rangle )=r, \langle G \rangle = \code\}.$$

As in the Hamming-metric case, the difficulty lies now in finding a generator matrix attaining the $r$-th generalized column Lee weights. 
To visualize this, let us return to our previous example for the Lee-metric support.
\begin{example}\label{ex:favExample}
    Let us consider the code $\code \subseteq \mathbb{Z}/9\mathbb{Z}^4$ generated by 
    $$G= \begin{pmatrix} 1 & 0 & 3 & 2 \\ 0 & 1 & 2 & 0 \\ 0 & 0 & 3 & 3 \end{pmatrix},$$ which has support subtype $(4,0,0)$ and minimum Lee distance $2$.
    
    If we compute the minimal column weights of submatrices of  $G$ we get 
    \begin{align*} 
        \wtLC(\begin{pmatrix} 0 & 1 & 2 & 0 \end{pmatrix}) =3, \\ 
        \wtLC\left(\begin{pmatrix} 0 & 1 & 2 & 0 \\ 0 & 0 & 3 & 3  \end{pmatrix}\right) =7, \\ 
        \wtLC(G)=8.
     \end{align*}
     
     However, there is a generator matrix of the code which is not in systematic form and which attains smaller column Lee weights: 
     $$G'=\begin{pmatrix}
     8 & 0 & 0 & 1 \\
     0 & 1 & 2 & 0 \\
     0  & 8 & 1 & 3 \end{pmatrix}.$$
     The $r$-th generalized Lee weights are then
     \begin{align*} 
        \distLC^1(\code) &= \wtLC(\begin{pmatrix} 8 & 0 &0 & 1 \end{pmatrix})=2 = \LD(\code), \\ 
         \distLC^2(\code) &= \wtLC\left(\begin{pmatrix} 8 & 0 & 0 & 1  \\ 0  & 1 & 8 & 0  \end{pmatrix}\right)=4, \\ 
         \distLC^3(\code) &= \wtLC \left(\begin{pmatrix} 8 & 0 & 0 & 1  \\ 0  & 1 & 8 & 0\\ 0  & 0 & 3 & 0  \end{pmatrix}\right) = 6 = \wtLC(\code).
     \end{align*}
     Note that both matrices within this example are of reduced systematic form.
\end{example}

\begin{lemma}\label{lem:it}
    Let $\code \in \zpsn$ be code of rank $K$. 
    Let $G^{(i)} \in \left(\mathbb{Z}/p^s\mathbb{Z}\right)^{i \times n}$ of a rank $i \in \{1, \ldots, K-1\}$ be a generator matrix of a subcode of $\code$ attaining $\distLC^i(\code)$. Let $c \in \code$ such that $\begin{pmatrix} G^{(i)} \\ c \end{pmatrix}$ is of a generator matrix of a subcode of rank $i+1$. Then it holds, 
    $$\wtLC\left(  \begin{pmatrix} G^{(i)} \\ c \end{pmatrix} \right) > \wtLC (G^{(i)}). $$
\end{lemma}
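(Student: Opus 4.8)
The plan is to separate the trivial monotone part from the genuinely strict part. Appending the row $c$ to $G^{(i)}$ can only enlarge the maximal Lee weight occurring in each column, so
\[
\wtLC\begin{pmatrix} G^{(i)} \\ c \end{pmatrix} = \sum_{j=1}^n \max\bigl(m_j,\lweight(c_j)\bigr) \ \geq\ \sum_{j=1}^n m_j = \wtLC\bigl(G^{(i)}\bigr),
\]
where $m_j$ denotes the maximal Lee weight occurring in the $j$-th column of $G^{(i)}$. Equality holds precisely when $\lweight(c_j)\leq m_j$ for every coordinate $j$; I therefore have to show that this \emph{columnwise domination} is incompatible with $c$ being independent of $G^{(i)}$, given that $G^{(i)}$ attains $\distLC^i(\code)$.

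I would argue by contradiction and assume columnwise domination. Since $\begin{pmatrix} G^{(i)} \\ c\end{pmatrix}$ has rank $i+1$ and consists of exactly $i+1$ rows, these rows form a minimal generating set (their reductions modulo $p$ are $\mathbb{F}_p$-independent by Nakayama), so deleting any single row $r$ leaves a subcode $\subcode_r\leq\code$ of rank exactly $i$. Its generator matrix has column maxima $m_j^{(r)}\leq m_j$, whence $\distLC^i(\code)\leq \wtLC(\subcode_r)\leq \sum_j m_j^{(r)}\leq\sum_j m_j=\distLC^i(\code)$. Minimality forces equality throughout, so $m_j^{(r)}=m_j$ for every $r$ and $j$: in each nonzero column the maximum $m_j$ is attained by at least two of the $i+1$ rows. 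Moreover the identity $\lweight(x)=\lweight(y)\Rightarrow x=\pm y$ in $\zps$ pins any two such maximal entries to be equal up to sign.

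Finally I would use this rigidity to exhibit a rank-$i$ subcode of column Lee weight strictly below $\distLC^i(\code)$, the desired contradiction. The mechanism is cleanest for $i=1$: here $\distLC^1(\code)=\LD(\code)$ and $\wt_{\mathsf{L}}(g_1)=\LD(\code)$, domination gives $\lweight(c_j)=\lweight(g_{1,j})$, hence $g_{1,j}=\pm c_j$ coordinatewise, and the signs cannot all coincide (else $c=\pm g_1$ is dependent). Letting $S_+$ be the coordinates with $g_{1,j}=c_j$, the codeword $2^{-1}(g_1+c)$ is nonzero, is supported on $S_+\subsetneq\supp(g_1)$, and has Lee weight $\sum_{j\in S_+}\lweight(g_{1,j})<\LD(\code)=\distLC^1(\code)$, a contradiction. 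For general $i$ the same idea applies: cancel a maximal column by combining $c$ with the rows realising that column's maximum, again producing a strictly lighter rank-$i$ subcode.

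The hard part will be precisely this general-$i$ cancellation. The Lee weight is not additive under row operations, so reducing one column may inflate another, and the column Lee weight depends on the chosen generator matrix; I expect to control the net change by passing to the reduced systematic form of Proposition \ref{prop:redsystematicG} and exploiting the $\pm$-rigidity above, while treating the prime $p=2$ (where $2$ is not invertible, so the halving step must be replaced) as a separate case.
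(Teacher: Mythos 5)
Your opening steps are sound: the monotonicity $\wtLC\bigl(\begin{smallmatrix} G^{(i)} \\ c \end{smallmatrix}\bigr) \geq \wtLC(G^{(i)})$ with equality iff columnwise domination, the Nakayama argument over the local ring $\zps$ showing that deleting any of the $i+1$ rows leaves a subcode of rank exactly $i$, and the resulting sandwich $\distLC^i(\code)\leq \wtLC(\subcode_r)\leq \sum_j m_j^{(r)}\leq \sum_j m_j = \distLC^i(\code)$ forcing each column maximum to be attained by at least two rows, are all correct; so is the $i=1$ case for odd $p$. But the lemma asserts strictness for \emph{every} $i\in\{1,\ldots,K-1\}$ and every prime, and exactly at general $i$ your text stops being a proof: ``cancel a maximal column by combining $c$ with the rows realising that column's maximum'' is a plan, not an argument, and you yourself name the obstruction that defeats it as stated. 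The pair of rows attaining the maximum changes from column to column, your $\pm$-rigidity is only column-local (there is no coherent global sign pattern), and the Lee weight is merely subadditive, so a combination that annihilates one column can inflate the maxima of others; nothing in the proposal controls this net change, the appeal to the reduced systematic form is not developed, and $p=2$ (where $2^{-1}$ does not exist and $2^{s-1}=-2^{s-1}$) is deferred even in the one case you complete. This is a genuine gap at the core of the lemma, not a routine detail.

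For comparison, the paper closes this hole differently. After the same equality reduction it derives a \emph{stronger} rigidity than yours: minimality of $G^{(i)}$ forces $\lweight(c_j)=A_j^{(i)}$ for all $j$, i.e.\ $c$ itself attains the column maximum in every coordinate (your two-rows-per-column statement does not imply this, since both maximal entries could sit in $G^{(i)}$). With $c$ a dominating row, the paper then performs row surgery with one \emph{global} sign per row: for each row $g_\ell$ it defines the index sets $I_\ell^-$ and $I_\ell^+$ of positions where subtracting, respectively adding, $c$ does not increase the entry weight, compares $\sum_{j\in I_\ell^-}\lweight(c_j)$ with $\sum_{j\in I_\ell^+}\lweight(c_j)$ to decide whether to replace $g_\ell$ by $c+g_\ell$ or $c-g_\ell$, and shows each replacement has Lee weight strictly below $\lweight(c)=\wtLC(G^{(i)})$; doing this to every row yields a rank-$i$ generator matrix of strictly smaller column Lee weight, contradicting the minimality of $G^{(i)}$. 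That per-row weighted sign choice is precisely the accounting device missing from your per-column cancellation; to repair your proposal you would either have to prove the paper's stronger domination statement and adopt its row-wise surgery, or supply an explicit bookkeeping argument showing your column-by-column cancellations have negative total effect, which your own remarks indicate you do not yet have.
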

\begin{proof}
    Let us define for all columns $j \in \{1, \ldots, n\}$ the maximal Lee weight of the $j$-th column in $G^{(i)}$ as $A_j^{(i)}$. 
    We clearly have $\wtLC(G^{(i)}) \leq \wtLC\left( \begin{pmatrix} G^{(i)} \\ c \end{pmatrix}\right).$ 
    Thus, let us assume that $\wtLC(G^{(i)}) = \wtLC\left( \begin{pmatrix} G^{(i)} \\ c \end{pmatrix}\right)$.
    Then 
    $$\sum_{j=1}^n A_j^{(i)} = \sum_{j=1}^n \max\{ A_j^{(i)}, \lweight(c_j) \}$$
    and so for all $j \in \{1, \ldots, n\}$ we have $\lweight(c_j) \leq A_j^{(i)}$.
    However, as $G^{(i)}$ attains $\distLC^i(\code)$, the sum $\sum_{j = 1}^n A_j^{(i)}$ is minimal among all rank $i$ subcodes of $\code$. Hence, there is no index $j\in\set{1, \ldots, n}$ for which $\lweight(c_j) < A_j^{(i)}$ and thus for all $j$ we have $\lweight(c_j)=A_j^{(i)}$. This implies $c_j = \pm A_j^{(i)}.$
    
    This means that $c$ has in every position the maximal Lee weight over all rows of $G^{(i)}$.
    Thus, for every row $g_\ell$ of $G^{(i)}$ with $\ell \in \set{1, \ldots, i}$  for which $\LW(g_{\ell}) > \LW(c)$, we can add and/or subtract $c$ to decrease its weight.
    For each row $\ell \in \set{1, \ldots , i}$ let us therefore define the sets 
    \begin{align*} 
        I_\ell^- &= \{ j \in \{1, \ldots, n \} \mid  \lweight(c_j - g_{\ell j}) < \lweight(c_j)\},  \\
        I_\ell^+ &= \{j \in \{1, \ldots, n\} \mid \lweight(c_j + g_{\ell j}) \leq \lweight(c_j)\}.
    \end{align*}

    For a fixed row $\ell \in \{1, \ldots, i\}$, if 
    $$\sum_{j \in I_\ell^-} \lweight(c_j)< \sum_{j \in I_\ell^+} \lweight(c_j), $$ 
    then we add $c$ to the row $g_\ell$. If however, 
    $$\sum_{j \in I_\ell^+} \lweight(c_j) \leq \sum_{j \in I_\ell^-} \lweight(c_j),$$ 
    we subtract $c$ from that row $g_\ell.$
    
    We consider now the new row $g'_{\ell} := c \pm g_\ell $ which has a strictly smaller Lee weight than $c$. 
    Since the cases are similar, assume that for $g_{\ell}$ the first case is true, i.e., $\sum_{j \in I_{\ell}^-} \lweight(c_j) < \sum_{j \in I_{\ell}^+} \lweight(c_j)$ and thus we add the row $c$, getting $g'_{\ell}:= c+g_{\ell}$. 
    Clearly for each position $j$ in $I_{\ell}^-$ we at most added a Lee weight of $A_j^{(i)}$, while in each position $j$ in $I_{\ell}^+$ we subtracted a Lee weight of at most $A_j^{(i)}$,  thus 
    \begin{align*} 
        \lweight(g'_{\ell}) 
        &=\sum_{j \in I_{\ell}^+} \lweight(g_{\ell j} + c_j) + \sum_{j \in I_{\ell}^-} \lweight(g_{\ell j}+c_j) \\   
        &< \sum_{j \in I_{\ell}^+} \lweight(g_{\ell j} + c_j) + \sum_{j \in I_{\ell}^-} A_j^{(i)} +  \sum_{j \in I_{\ell}^-} \lweight((c_j) \\ 
        & <\sum_{j \in I_{\ell}^+} \lweight(c_j) - \sum_{j \in I_{\ell}^+} A_j^{(i)} 
        + \sum_{j \in I_{\ell}^+} A_j^{(i)} +  \sum_{j \in I_{\ell}^-} \lweight((c_j)\\
        &= \lweight(c).
    \end{align*}
    Doing this procedure for every row of the matrix $G^{(i)}$,  obtaining the new matrix $G'^{(i)}$ of rank $i$, we have $$\wtLC(G'^{(i)}) < \wtLC(G^{(i)}), $$ since in every row we now reduced the Lee weight, but this is a contradiction to $G$ attaining $\distLC^i(\code)$.
\end{proof}

Finally, we are able to prove the desired properties for the generalized column Lee weights.

\begin{proposition}\label{prop:colWeight_properties}
    Let $\code \subseteq (\zps)^n$ be a linear code of rank $K$. Then
    \begin{enumerate}
        \item $\distLC^1(\code)=\LD(\code).$
        \item $\distLC^r(\code)< \distLC^{r+1}(\code)$ for all $r<K.$
        \item $\distLC^K(\code)=\wtLC(\code).$
    \end{enumerate}
\end{proposition}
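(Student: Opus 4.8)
The plan is to establish the three properties of $\distLC^r(\code)$ in turn, leaning heavily on Lemma~\ref{lem:it} for the crucial strict-monotonicity claim.

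\medskip

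\noindent\textbf{Property (1): $\distLC^1(\code)=\LD(\code)$.} First I would show both inequalities. For $\distLC^1(\code) \geq \LD(\code)$, note that a rank-$1$ subcode $\subcode$ is generated by a single nonzero codeword $c$, and by definition $\wtLC(\subcode) \leq \wtLC\bigl((c)\bigr) = \sum_{j=1}^n \max(\suppL(c_j)) = \sum_{j=1}^n \lweight(c_j) = \lweight(c)$, since for a single-row matrix the column-maximum is just the Lee weight of that entry. Taking the minimum over all nonzero $c$ gives $\distLC^1(\code) \leq \LD(\code)$; but I must be careful about the direction, since $\distLC^1$ minimizes over \emph{all} generator matrices of the rank-$1$ subcode, and different generators $\lambda c$ of the same subcode can have smaller column Lee weight. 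So the right statement is: for every rank-$1$ subcode, $\wtLC(\subcode) = \min_{\lambda} \lweight(\lambda c)$, and the global minimum of this over all rank-$1$ subcodes equals $\min_{c \neq 0}\lweight(c) = \LD(\code)$, because every nonzero codeword $c$ itself generates a rank-$1$ subcode whose column Lee weight is at most $\lweight(c)$, while conversely $\wtLC(\subcode) \geq \LD(\code)$ as the minimizing generator is still a nonzero codeword of Lee weight at least $\LD(\code)$. This pins down equality.

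\medskip

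\noindent\textbf{Property (2): strict monotonicity.} This is essentially immediate from Lemma~\ref{lem:it}. Let $G^{(r)}$ be a generator matrix of a rank-$r$ subcode attaining $\distLC^r(\code)$, so $\wtLC(G^{(r)}) = \distLC^r(\code)$. Any rank-$(r+1)$ subcode attaining $\distLC^{r+1}(\code)$ contains, after choosing a suitable generator matrix and deleting one row, a rank-$r$ submatrix whose column Lee weight is at least $\distLC^r(\code)$; adding back the deleted row only increases the column Lee weight, and Lemma~\ref{lem:it} guarantees this increase is \emph{strict}. More carefully, I would take a generator matrix $G^{(r+1)}$ of the optimal rank-$(r+1)$ subcode, select $r$ of its rows forming a rank-$r$ submatrix $G^{(r)}$ with $\wtLC(G^{(r)}) \geq \distLC^r(\code)$, and apply Lemma~\ref{lem:it} to the remaining row $c$ to conclude $\distLC^{r+1}(\code) = \wtLC(G^{(r+1)}) \geq \wtLC\bigl(\begin{smallmatrix} G^{(r)} \\ c \end{smallmatrix}\bigr) > \wtLC(G^{(r)}) \geq \distLC^r(\code)$.

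\medskip

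\noindent\textbf{Property (3): $\distLC^K(\code)=\wtLC(\code)$.} The only rank-$K$ subcode of $\code$ is $\code$ itself, so the minimum in the definition of $\distLC^K(\code)$ ranges over generator matrices of $\code$, which is precisely the minimum defining $\wtLC(\code)$ in Definition~\ref{def:columnweight}. Hence equality is definitional.

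\medskip

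\noindent\emph{The main obstacle} I anticipate is Property (1), specifically the bookkeeping over the choice of generator of a rank-$1$ subcode: because $\wtLC$ minimizes over all generator matrices, I must verify that scaling a codeword $c$ by a unit $\lambda$ cannot push the column Lee weight below $\LD(\code)$, i.e.\ that $\min_\lambda \lweight(\lambda c) \geq \LD(\code)$, which holds because $\lambda c$ is itself a nonzero codeword. The argument for Property (2) is clean modulo correctly invoking Lemma~\ref{lem:it} on a rank-$r$ restriction of the optimal rank-$(r+1)$ generator matrix; the subtlety there is ensuring the chosen $r$ rows actually have rank $r$ and that the deleted row together with them spans the full rank-$(r+1)$ subcode, so that the lemma's hypotheses are met.
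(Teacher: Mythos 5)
Your handling of properties (1) and (3) is correct, and your treatment of (1) is actually more careful than the paper's own one-line argument: you rightly note that $\wtLC$ of a rank-$1$ subcode minimizes over all generators of that subcode, and resolve the worry by observing that every such generator is itself a nonzero codeword, so the minimum cannot drop below $\LD(\code)$. Property (2), however, contains a genuine misstep in how you invoke Lemma~\ref{lem:it}. The lemma's hypothesis is that the base matrix $G^{(r)}$ \emph{attains} $\distLC^r(\code)$, i.e.\ $\wtLC(G^{(r)}) = \distLC^r(\code)$, and its proof depends essentially on this minimality (the row-reduction argument derives a contradiction to optimality). Your $G^{(r)}$, obtained by deleting a row from an optimal rank-$(r+1)$ generator matrix, is only known to satisfy $\wtLC(G^{(r)}) \geq \distLC^r(\code)$, so the strict inequality $\wtLC\left(\begin{smallmatrix} G^{(r)} \\ c \end{smallmatrix}\right) > \wtLC(G^{(r)})$ in your chain is not licensed by the lemma --- and it is genuinely false for non-optimal base matrices: over $\Zp{9}$, with $\code = (\Zp{9})^2$, take $G^{(1)} = (4 \ \ 4)$ and $c = (1 \ \ 2)$; the stacked matrix has rank $2$ (the determinant $4$ is a unit), yet its column Lee weight is $4 + 4 = 8 = \wtLC(G^{(1)})$. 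At the end you flag the rank conditions as the subtlety needing verification, but the hypothesis that actually requires attention is the optimality of the base matrix, which you never check.

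The gap is local and the repair is precisely the two-case split the paper performs. Write the optimal $G^{(r+1)}$ as $G^{(r)}$ stacked with a row $c$. If $\wtLC(G^{(r)}) = \distLC^r(\code)$, then Lemma~\ref{lem:it} applies verbatim and gives $\distLC^{r+1}(\code) = \wtLC(G^{(r+1)}) > \wtLC(G^{(r)}) = \distLC^r(\code)$. If instead $\wtLC(G^{(r)}) > \distLC^r(\code)$, then the trivial monotonicity $\wtLC(G^{(r+1)}) \geq \wtLC(G^{(r)})$ already yields $\distLC^{r+1}(\code) > \distLC^r(\code)$, with no appeal to the lemma at all. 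With this case distinction inserted, your argument coincides with the paper's proof; without it, the single chain of inequalities you wrote conflates two situations, in one of which the middle strict inequality can fail.
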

\begin{proof} 
For the first property, note that the column Lee weight of a $1 \times n$ matrix is equal to the Lee weight of that $n$-tuple. Since a minimal Lee-weight codeword $c$ is   the rank 1 subcode of $\code$ with the smallest column Lee weight, it  attains  $\wt_{L,C}(c)=\distLC^1(\code)$.

The second property follows from Lemma \ref{lem:it}.
Since then for any matrix $G^{(i+1)} \in \left( \mathbb{Z}/p^s\mathbb{Z}\right)^{(i+1) \times n}$, which attains $\distLC^{i+1}(\code)$, which we can write as $G^{(i+1)}= \begin{pmatrix} G^{(i)} \\ g' \end{pmatrix}$, we either have that $G'_{i}$ already attained $\distLC^i(\code)$ and hence
$$\distLC^i(\code) = \wtLC(G^{(i)}) < \wtLC(G^{(i+1)})=\distLC^{i+1}(\code),$$ or if $G^{(i)}$ did not attain $\distLC^i(\code)$, then 
$$\distLC^i(\code) < \wtLC(G^{(i)}) \leq \wtLC(G^{(i+1)}).$$
In either case, we get that 
$\distLC^i(\code) < \distLC^{i+1}(\code).$



Lastly, the third property follows immediately from the definition of the column Lee weight of a code $\code$.
\end{proof}

The properties in Proposition \ref{prop:colWeight_properties} allow us to deduce a  natural Singleton-like bound for the Lee metric.
\begin{theorem}\label{thm:SB_wt-k+1}
    Given a linear code $\code \in (\zps)^n$ of rank $K$. The minimum distance of $\code$ is upper bounded by
    \begin{align*}
        \LD(\code) \leq \wtLC(\code) - K + 1.
    \end{align*}
\end{theorem}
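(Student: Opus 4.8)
The plan is to derive the bound as an immediate consequence of the three properties established in Proposition \ref{prop:colWeight_properties}, exactly mirroring the classical Singleton argument. The key observation is that properties (1)--(3) together yield a strictly increasing chain of generalized column Lee weights indexed by $r = 1, \ldots, K$, whose first term is the minimum Lee distance and whose last term is the total column Lee weight of the code.

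First I would write down the chain of strict inequalities coming from property (2), namely
\begin{align*}
    \distLC^1(\code) < \distLC^2(\code) < \cdots < \distLC^{K-1}(\code) < \distLC^K(\code).
\end{align*}
Since each $\distLC^r(\code)$ is a nonnegative integer (being a sum of Lee weights of columns), a strict increase means each successive term exceeds the previous by at least $1$. Hence by an easy induction, or by summing the $K-1$ gaps, we obtain
\begin{align*}
    \distLC^K(\code) \geq \distLC^1(\code) + (K-1).
\end{align*}

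Next I would substitute the two endpoint identifications supplied by properties (1) and (3): the left endpoint satisfies $\distLC^1(\code) = \LD(\code)$, while the right endpoint satisfies $\distLC^K(\code) = \wtLC(\code)$. Plugging these into the displayed inequality gives $\wtLC(\code) \geq \LD(\code) + K - 1$, which rearranges directly to the claimed bound $\LD(\code) \leq \wtLC(\code) - K + 1$.

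I do not expect any genuine obstacle here, since all the substantive work was carried out in Lemma \ref{lem:it} and Proposition \ref{prop:colWeight_properties}; the only point demanding a word of care is the \emph{integrality} of the generalized column Lee weights, which is what upgrades each strict inequality into a gap of at least one and thereby makes the telescoping argument valid. This is immediate because $\wtLC$ is defined as a finite sum $\sum_{i=1}^n \max(\suppL(a_i))$ of integer Lee weights, so every $\distLC^r(\code)$ lies in $\mathbb{Z}_{\geq 0}$. If one wished to emphasize tightness, one could additionally remark that Example \ref{ex:favExample} already exhibits the full chain $\distLC^1 < \distLC^2 < \distLC^3 = \wtLC$ realized with unit gaps, but this is not needed for the proof of the bound itself.
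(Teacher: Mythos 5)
Your proof is correct and follows essentially the same route as the paper: both arguments invoke the three properties of Proposition \ref{prop:colWeight_properties}, use integrality to convert each strict inequality $\distLC^{i-1}(\code) < \distLC^{i}(\code)$ into a gap of at least $1$, and sum the $K-1$ gaps (the paper phrases this as a telescoping sum) to conclude $\LD(\code) = \distLC^1(\code) \leq \distLC^K(\code) - (K-1) = \wtLC(\code) - K + 1$. Your explicit remark on integrality is a point the paper leaves implicit, but it is the same argument.
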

\begin{proof}
    Using the properties given in Proposition \ref{prop:colWeight_properties} we note that 
    \begin{align}\label{equ:upperbound_d1}
        \LD(\code) = \distLC^1(\code) \leq \distLC^K(\code) - \sum_{i = 2}^{K} \left( \distLC^i(\code) - \distLC^{i-1}(\code) \right).
    \end{align}
    By the strict inequality between the generalized column Lee weights, we have a difference of at least one, i.e., $$\distLC^i(\code) - \distLC^{i-1}(\code) \geq 1.$$
    Since $\distLC^K(\code) = \wtLC(\code)$, the desired bound follows.
\end{proof}
As for increasing parameters of a linear code $\code \subseteq (\zps)^n$ of rank $K$ and subtype $(k_0, \ldots , k_{s-1})$ it becomes harder to compute $\wtLC(\code)$, applying Proposition \ref{prop:bound_weightC} we obtain a direct consequence to Theorem \ref{thm:SB_wt-k+1}, which requires no computational effort.
\begin{corollary}\label{cor:SB_wt-k+1}
    Given a linear code $\code \in (\zps)^n$ of rank $K$. The minimum distance of $\code$ is upper bounded by
    \begin{align*}
        \LD(\code) \leq\sum_{i = 0}^{s-1} p^ik_i + \sum_{i = 0}^{s}\mu_iM_i - K + 1.
    \end{align*}
\end{corollary}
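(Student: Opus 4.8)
The plan is to combine the two results the corollary explicitly names: the Singleton-like bound of Theorem \ref{thm:SB_wt-k+1}, namely $\LD(\code) \leq \wtLC(\code) - K + 1$, together with the upper bound on the column Lee weight from Proposition \ref{prop:bound_weightC}, namely $\wtLC(\code) \leq \sum_{i=0}^{s-1} p^i k_i + \sum_{i=0}^{s} \mu_i M_i$. The entire argument is a one-line substitution, so there is essentially no obstacle here; the corollary is a routine consequence of chaining the two inequalities.

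Concretely, I would first invoke Theorem \ref{thm:SB_wt-k+1} to write
\begin{align*}
    \LD(\code) \leq \wtLC(\code) - K + 1.
\end{align*}
Then I would apply Proposition \ref{prop:bound_weightC}, which bounds $\wtLC(\code)$ from above in terms of the subtype $(k_0, \ldots, k_{s-1})$ of the code and the support subtype $(\mu_0, \ldots, \mu_s)$ of the last $n-K$ coordinates (assuming, as in that proposition, that an information set has been permuted to the first $K$ positions). Substituting this bound into the right-hand side of the Singleton-like inequality yields
\begin{align*}
    \LD(\code) \leq \sum_{i=0}^{s-1} p^i k_i + \sum_{i=0}^{s} \mu_i M_i - K + 1,
\end{align*}
which is exactly the claimed statement.

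The only point worth flagging — though it is not a genuine difficulty — is that Proposition \ref{prop:bound_weightC} is stated after fixing a permutation-equivalent representative whose information set lies in the first $K$ coordinates, so that $(\mu_0, \ldots, \mu_s)$ is well defined. Since both the minimum Lee distance $\LD(\code)$ and the column Lee weight $\wtLC(\code)$ are preserved under the permutation part of the Lee-metric isometry group, passing to such a representative costs nothing, and the bound transfers back to the original code. Thus the corollary follows immediately, and no separate estimate or case analysis is needed.
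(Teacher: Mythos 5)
Your proof is correct and is exactly the paper's argument: the corollary is stated there as a direct consequence of chaining Theorem \ref{thm:SB_wt-k+1} with Proposition \ref{prop:bound_weightC}, precisely the one-line substitution you perform. Your additional remark that $\LD(\code)$ and $\wtLC(\code)$ are invariant under the permutation used to place an information set in the first $K$ coordinates is a reasonable point of care that the paper leaves implicit.
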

The bounds given in Theorem \ref{thm:SB_wt-k+1} and Corollary \ref{cor:SB_wt-k+1} improve the Singleton bound by Shiromoto \cite{shiromoto} and the one by Alderson-Huntemann \cite{alderson}. In the proof of Theorem \ref{thm:SB_wt-k+1} we bounded the differences $\distLC^i(\code) - \distLC^{i-1}(\code)$ by one for every $i = 2, \ldots , K$. However, for a relatively small rank $K$ this bound is not very tight. The sum in Equation \eqref{equ:upperbound_d1} is a telescoping sum, meaning that 
\begin{align*}
    \sum_{i = 2}^{K} \left( \distLC^i(\code) - \distLC^{i-1}(\code) \right) = \distLC^K(\code) - \distLC^1(\code) = \wtLC(\code) - \LD(\code).
\end{align*}
Hence, the goal is now to derive a lower bound on the difference  $\wtLC(\code) - \LD(\code)$ allowing us to further tighten the Singleton-like bound.\\ 

In the following let $\code \subseteq (\zps)^n$  be a linear code of rank $K$ and subtype $(k_0, \ldots , k_{s-1})$. Let us introduce the maximal subtype $i \in \set{0, \ldots, s-1}$ for which $k_i$ is nonzero, that is
\begin{align*}
    \sigma := \max\set{ i \in \set{0, \ldots, s-1} \st k_i \neq 0}.
\end{align*}

\begin{proposition}\label{prop:lowerbound_difference}
    Let $p$ be an odd prime. For a linear code $\code \subseteq (\zps)^n$ of rank $K$ and subtype $(k_0, \ldots , k_{s-1})$ and maximal subtype $k_\sigma$, we get the following lower bound
    \begin{align*}
        \wtLC(\code) - \LD(\code) \geq \sum_{i = 0}^{\sigma - 1} \left( \sum_{j = 0}^{i} k_j \right)\floor{p/2}p^{i} + (k_{\sigma}-1)p^{\sigma}.
    \end{align*}
\end{proposition}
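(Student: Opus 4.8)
The plan is to lower-bound the two quantities separately: I want to show that $\wtLC(\code)$ is at least some explicit value and that $\LD(\code)$ is at most a related value, so that their difference is at least the stated expression. The key structural object is a generator matrix $\Grsys$ in reduced systematic form, whose existence is guaranteed by Proposition \ref{prop:redsystematicG}. Writing $\code$ in reduced systematic form organizes the rows into $s$ blocks: the $i$-th block (for $i = 0, \ldots, \sigma$) consists of the $k_i$ rows whose pivot entry is $p^i$, and the reduction condition $\LW(a) \le p^{i}$ on the off-diagonal entries of block $i$ is exactly what keeps the Lee weights of individual rows under control.

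First I would estimate $\LD(\code)$ from above by exhibiting a single low-weight codeword. The natural candidate is a row from the block of maximal subtype $\sigma$: such a row has pivot $p^{\sigma}$ and its remaining nonzero entries are all bounded in Lee weight by $p^{\sigma}$ (by the reduced form), so one can bound $\LD(\code)$ by roughly the Lee weight of this row, which is on the order of $p^{\sigma}$ times the number of its nonzero positions. The subtracted term $(k_\sigma - 1)p^\sigma$ in the statement suggests that the comparison is really between the full column Lee weight contributed by the $\sigma$-block pivots, namely $k_\sigma p^\sigma$, and the weight of one such representative codeword, leaving a surplus of $(k_\sigma - 1)p^\sigma$.

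Next I would lower-bound $\wtLC(\code)$ using the column Lee support. Because $\distLC^1(\code) = \LD(\code)$ and $\wtLC(\code) = \distLC^K(\code)$ by Proposition \ref{prop:colWeight_properties}, and more directly because the column Lee weight counts, column by column, the \emph{maximal} Lee weight appearing, each pivot column of block $i$ contributes at least $p^i$ (the Lee weight of the pivot $p^i$ itself) to $\wtLC(\code)$. Summing the pivot contributions over all blocks gives $\sum_{i=0}^{\sigma} k_i p^i$ as a crude lower bound, but the stated bound has the heavier weight $\floor{p/2}p^i$ attached to a cumulative count $\sum_{j=0}^{i} k_j$. This indicates that the right accounting is not column-by-column over pivots alone, but rather exploits that in the reduced form the columns indexed by block $i$ also receive contributions from the $k_0 + \cdots + k_i$ rows that can be nonzero there, and that after optimizing the generator matrix these columns can be forced to carry near-maximal Lee weight $\floor{p/2}p^i = M_i$.

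The main obstacle, as I see it, is the interplay between the two optimizations hidden in $\wtLC(\code)$ and $\LD(\code)$: the column Lee weight is defined as a \emph{minimum over all generator matrices}, so a lower bound on $\wtLC(\code)$ must hold for \emph{every} choice of generator matrix, whereas the upper bound on $\LD(\code)$ only needs \emph{one} good codeword. I expect the delicate step is establishing that no generator matrix can make all the columns in block $i$ cheaper than the claimed $\floor{p/2}p^i$ weighting while still spanning the code — essentially a rank/pivot argument showing that the cumulative $\sum_{j=0}^i k_j$ rows forced to be active on those columns obstruct any simultaneous reduction below $M_i$. The oddness of $p$ should enter precisely here, since $\floor{p/2} = (p-1)/2$ behaves uniformly and $p^{s-i}$ is odd, making $M_i = \floor{p/2}p^i$ the exact maximal Lee weight in $\langle p^i\rangle$ without the parity corrections that would appear for $p = 2$. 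I would therefore organize the proof around tracking, for each block $i$, a lower bound on the summed column Lee weights contributed there, and then subtract off the weight of the single $\sigma$-block codeword used to bound $\LD(\code)$.
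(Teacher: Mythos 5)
Your overall strategy---bounding $\wtLC(\code)$ from below and $\LD(\code)$ from above by \emph{decoupled} estimates and then subtracting---cannot prove this statement, and this is the central gap. The paper's proof is intrinsically coupled: by Proposition \ref{prop:colWeight_properties} and Lemma \ref{lem:it} one has the strictly increasing chain $\LD(\code)=\distLC^1(\code)<\distLC^2(\code)<\cdots<\distLC^K(\code)=\wtLC(\code)$, obtained by iteratively adjoining rows to a \emph{minimum-weight} codeword $c_1$, and the proof lower-bounds each increment using the block structure of the systematic form. Your plan replaces the anchor $\distLC^1(\code)=\LD(\code)$ by the Lee weight of a row of $\Grsys$, and that quantity need not even be smaller than $\wtLC(\code)$. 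Concretely, take $\code=\langle (1,0,4),(0,1,4)\rangle\subseteq(\Zp{9})^3$, a free code of rank $K=2$, so the claim is $\wtLC(\code)-\LD(\code)\geq k_0-1=1$. The matrix
\begin{align*}
    \begin{pmatrix} 1 & 0 & 4 \\ 0 & 1 & 4 \end{pmatrix}
\end{align*}
is already in reduced systematic form and both its rows have Lee weight $5$, so your upper bound is $B=5$. However, $\LD(\code)=2$ (attained by $(1,8,0)$) and $\wtLC(\code)=3$, attained by the generator matrix
\begin{align*}
    \begin{pmatrix} 1 & 8 & 0 \\ 1 & 1 & 8 \end{pmatrix},
\end{align*}
all of whose column maxima equal $1$. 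The proposition holds here ($3-2=1$), but any valid lower bound $A$ on $\wtLC(\code)$ satisfies $A\leq 3<B+1=6$, so no column-accounting argument, however clever, can close your gap. Whenever the slack in the codeword-based bound on $\LD(\code)$ exceeds $\wtLC(\code)$ itself, the decoupled scheme is dead on arrival; the only upper bound on $\LD(\code)$ tight enough in general is $\LD(\code)=\distLC^1(\code)$, which puts you back in the coupled setting.

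Two of your supporting claims are also false. First, the reduced systematic form (Definition \ref{def:reducedSystematicFormG}) constrains only the entries of the blocks $A_{i,j}$ with $j\leq s$, i.e.\ inside the information set; the last $n-K$ columns are unconstrained, so a row of the $\sigma$-block can have entries of Lee weight up to $M_\sigma=\floor{p^{s-\sigma}/2}p^\sigma$, not $p^\sigma$ (this is exactly why Proposition \ref{prop:genLD_properties} only yields $p^{r-1}+(n-k)M_{r-1}$). Second, your proposed ``delicate step''---that in every generator matrix the columns attached to block $i$ are forced to carry near-maximal weight $M_i$---fails in the example above, where a generator matrix exists with every column of weight $1$. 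In the paper, the terms $\floor{p/2}p^{i}$ do not arise as column weights at all: they are the differences $M_{i}-M_{i+1}$ of the maximal Lee weights of consecutive ideals, which lower-bound the drop in column Lee weight when a row belonging to a given block is removed from the iteratively built matrix $G_K$, while the term $(k_\sigma-1)p^\sigma$ comes from removals within the $\sigma$-block. Any repair of your argument essentially amounts to reinstating that iterative, telescoping structure.
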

\begin{proof}
    Let us start by focusing on the generalized column Lee weights. Assume that $c_1 \in \code$ is such that $\distLC^1(\code) = \wtLC(\langle c_1 \rangle)$. By Lemma \ref{lem:it}, we know that the generalized column Lee weights can be obtained in an iterative fashion. Hence, to find a subcode $\subcode_2$ of rank $2$ we are looking for a codeword $c_2 \in \code$ such that $\begin{pmatrix}c_1 \\c_2\end{pmatrix}$ is of rank $2$ and such that it minimizes $\wtLC\left(\begin{pmatrix} c_1 \\c_2 \end{pmatrix}\right)$. We continue with this process until we obtain a matrix
    \begin{align*}
        G_K := \begin{pmatrix} c_1 \\ \vdots \\ c_K \end{pmatrix}
    \end{align*}
    of rank $K$ such that $\wtLC(G_K) = \wtLC(\code) = \distLC^{K}(\code)$.

    Since the code $\code$ is of subtype $(k_0, \ldots , k_{\sigma}, 0,\ldots, 0)$, the rows of the matrix $G_K$ each correspond to one of the $\sigma$ blocks formed by the systematic form $\Gsys$ of $G_K$. To understand the difference of $\wtLC(\code)$ and the first generalized column Lee weight $\distLC^1(\code)$ we can think of successively removing rows from $G_K$ until we are only left with the minimum weight codeword $c_1$. Thinking in the block-wise structure of a generator matrix in systematic form, at some point we will have cancelled $k_i$ rows corresponding to the $i$-th block of $\Gsys$. Hence, the minimal difference subtracted is $$M_{i-1} - M_{i} = \floor{p/2}p^{i-1}.$$
    Doing this successively for every $k_i$, with $i \in \{ 0, \ldots, \sigma \}$, gives
    $$\sum_{i = 0}^{\sigma - 1}\left(\sum_{j = 0}^i k_j\right) \floor{p/2} p^{i}.$$
    At this point we are left only with a block corresponding to the rows belonging to the maximal subtype $k_{\sigma}$. The minimal difference between the rows of the same block is given by $p^\sigma$. Hence, cancelling $(k_\sigma - 1)$ rows yields to a difference of $p^{\sigma}(k_{\sigma} -1)$ and the desired result follows.
\end{proof}
A natural consequence (combining Propositions \ref{prop:bound_weightC} and \ref{prop:lowerbound_difference}) is the next bound on the minimum Lee distance $\LD(\code)$ of a code $\code$ of given rank and subtype.
\begin{corollary}\label{cor:SB_column_cancelling}
    Consider a linear code $\code \subseteq (\zps)^n$, where $p$ is an odd prime. Let $\code$ be of rank $K$ and subtype $(k_0, \ldots , k_{s-1})$ with maximal subtype $k_\sigma$ and having support subtype $(\mu_0, \ldots, \mu_{s-1})$ in the last $n-K$ positions. Then the following upper bound on the minimum Lee distance of $\code$ holds
    \begin{align*}
        \LD(\code) \leq \sum_{i = 0}^{s-1} p^ik_i + \sum_{i = 0}^{s}\mu_iM_i - \left[  \sum_{i = 0}^{\sigma - 1} \left( \sum_{j = 0}^{i} k_j \right)\floor{p/2}p^{i} + (k_{\sigma}-1)p^{\sigma} \right].
    \end{align*}
\end{corollary}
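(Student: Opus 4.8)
The plan is to obtain the stated bound as a direct composition of the two results immediately preceding it, exactly as the phrase ``combining Propositions \ref{prop:bound_weightC} and \ref{prop:lowerbound_difference}'' suggests. The strategy is to use Proposition \ref{prop:lowerbound_difference} to isolate $\LD(\code)$ from the gap $\wtLC(\code)-\LD(\code)$, and then to replace the resulting occurrence of $\wtLC(\code)$ by its upper bound from Proposition \ref{prop:bound_weightC}.

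First I would rewrite the lower bound on the gap from Proposition \ref{prop:lowerbound_difference} in the equivalent rearranged form
$$\LD(\code) \leq \wtLC(\code) - \left[ \sum_{i=0}^{\sigma-1}\left(\sum_{j=0}^{i}k_j\right)\floor{p/2}p^{i} + (k_\sigma-1)p^\sigma \right],$$
which holds for every odd prime $p$ simply by subtracting both $\LD(\code)$ and the bracketed quantity from the two sides of the inequality in that proposition. This trades the minimum Lee distance for the (generally larger) column Lee weight of the code, corrected by an explicit subtractum that depends only on the subtype and on $\sigma$. Next I would invoke Proposition \ref{prop:bound_weightC}, which bounds $\wtLC(\code) \leq \sum_{i=0}^{s-1}p^ik_i + \sum_{i=0}^{s}\mu_iM_i$ in terms of the subtype and the support subtype $(\mu_0,\ldots,\mu_{s-1})$ in the last $n-K$ positions. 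Since the bracketed correction term does not involve $\wtLC(\code)$, I can substitute this upper estimate into the previous inequality without altering the subtracted quantity, yielding
$$\LD(\code) \leq \sum_{i=0}^{s-1}p^ik_i + \sum_{i=0}^{s}\mu_iM_i - \left[ \sum_{i=0}^{\sigma-1}\left(\sum_{j=0}^{i}k_j\right)\floor{p/2}p^{i} + (k_\sigma-1)p^\sigma \right],$$
which is precisely the claim.

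I do not expect any genuine obstacle here, since both ingredient propositions are already proved; the only point requiring care is the compatibility of the two inequality directions. Concretely, Proposition \ref{prop:lowerbound_difference} must be applied as a \emph{lower} bound on the gap (so that subtracting it produces a valid \emph{upper} bound on $\LD(\code)$), while Proposition \ref{prop:bound_weightC} must be applied as an \emph{upper} bound on $\wtLC(\code)$. Chaining $\LD(\code) \leq \wtLC(\code) - L \leq U - L$, with $L$ the bracketed term and $U$ the column-weight bound, is then immediate and completes the argument.
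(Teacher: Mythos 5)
Your proposal is correct and is exactly the argument the paper intends: the corollary is stated as ``a natural consequence (combining Propositions \ref{prop:bound_weightC} and \ref{prop:lowerbound_difference})'', and your chaining $\LD(\code) \leq \wtLC(\code) - L \leq U - L$, with $L$ the bracketed subtype term from Proposition \ref{prop:lowerbound_difference} and $U$ the column-weight bound from Proposition \ref{prop:bound_weightC}, spells out precisely that combination. Your attention to the directions of the two inequalities is the only point of substance, and you handle it correctly.
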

Let us give an example over $\Zp{9}$.

\begin{example}\label{ex:favExample_Z9}
    Consider again the code $\code$ generated by 
    $$G= \begin{pmatrix} 1 & 0 & 3 & 2 \\ 0 & 1 & 2 & 0 \\ 0 & 0 & 3 & 3 \end{pmatrix},$$ over $\Zp{9}$. In the last column, the code $\code$ has support subtype $(1, 0, 0)$ and minimum Lee distance $2$. Furthermore, we observe that $\sigma = 1$ and support subtype $(1,0)$. Hence, by Corollary \ref{cor:SB_column_cancelling}
    \begin{align*}
        \LD(\code) \leq 2 + 3 + 1\cdot4 - \left[ 2\cdot 1 + (1 - 1)3 \right] = 7.
    \end{align*}
\end{example}

Similarly to the join-support, examples of codes attaining this bound are codes generated by matrices $G = \begin{pmatrix} p^{s-1}\mathbb{I}_K & p^{s-1}A \end{pmatrix}$ for $A \in (\zps)^{K \times (n-K)}$, where $p = 3$. In fact, for any odd $p$ these codes have a minimum Lee distance $d = p^{s-1}(n-K+1)$. Furthermore we note that in the last $n-K$ positions we have support subtype  $(0, \ldots, 0, n-K)$ and $M_{s-1} = \floor{p/2} p^{s-1}$. Hence, inserting these values in the bound given in Corollary \ref{cor:SB_column_cancelling} gives
$$\LD(\code) \leq p^{s-1}(1 + (n-K) \floor{p/2}).$$
This is equal to $\LD(\code)$ exactly if $p = 3$.\\
For instance, consider again the code $\code \subseteq (\Zp{9}^4)$ of rank $K = 3$ with generator matrix
\begin{align*}
    G = 
    \begin{pmatrix}
        3 & 0 & 0 & 3 \\
        0 & 3 & 0 & 6 \\
        0 & 0 & 3 & 6
    \end{pmatrix}.
\end{align*}
This code has minimum Lee distance $\LD(\code) = 6$ and subtype $(k_0, k_1) = (0, 3)$. Hence, we also have $\sigma = 1$. The support subtype in the last $n-K = 1$ positions is $ (0, 1, 0)$ and $M_1 = 3$. Computing the bound in Corollary \ref{cor:SB_column_cancelling} gives then
\begin{align*}
    \sum_{i = 0}^{1} p^ik_i + \sum_{i = 0}^{2}\mu_iM_i -  (k_{1}-1)p
        =
    3 \cdot 3 + 1\cdot 3 - (3-1)3 = 2\cdot 3 = 6
\end{align*}
and we conclude that this code is optimal with respect to Lee-metric Singleton bound \ref{cor:SB_column_cancelling}.

\subsection{Density of Optimal Codes with respect to the Column-Lee Support}  
Let us discuss the density of the codes attaining the bound in Corollary \ref{cor:SB_column_cancelling}. Recall that the bound is derived by $\LD(\code) \leq \wtLC(\code) - (\wtLC (\code) - \LD(\code))$ where we upper bounded the column weight of the code by $\wtLC(\code) \leq \sum_{i = 0}^{s-1} p^ik_i + \sum_{i = 0}^{s}\mu_iM_i$. Hence, in order to have codes attaining the bound on the minimum Lee distance, they must attain the bound on the column Lee weight too. That is, their generator matrix $G$ must be in reduced systematic form. Furthermore, the support subtype of the last $n-K$ positions is $(\mu_0, \ldots, \mu_s)$ where in each of the $\mu_i$ positions the maximum Lee weight $M_i$ is attained. For instance, a generator matrix may look as follows:
\begin{align*}
    \Grsys = 
    \begin{bNiceArray}{cw{c}{1cm}c|cccc}[margin, last-row = 4]
        \Block{3-3}{U} & & & & & & \\
            & & & \hspace*{1.2cm} & \hspace*{0.4cm} & \ldots & \hspace*{0.6cm} \\
            & & & \hspace*{1.2cm} & & & \\
            & & & {\color{RoyalBlue}\mu_0} & {\color{YellowOrange}\mu_1} & & {\color{myred}\mu_s}
        \CodeAfter
            \color{RoyalBlue}\SubMatrix[{1-4}{3-4}]
            \color{YellowOrange}\SubMatrix[{1-5}{3-5}]
            \color{myred}\SubMatrix[{1-7}{3-7}]
    \end{bNiceArray}
\end{align*}
There are two options to attain a Lee weight $M_i$. Hence, the probability that a generator matrix is of this form is given by the number of such matrices divided by the number of all matrices, i.e.,
\begin{align*}
    \prod_{i = 0}^{s-1} \left(\frac{2 (p^{s-i})^{(k-1)} }{(p^{s-i})^{(k-1)}(p^{s-i}-p^{s-i-1})} \right)^{\mu_i} 
    &= \prod_{i = 0}^{s-1} \left(\frac{2 }{p^{s-i}-p^{s-i-1}} \right)^{\mu_i}.\\
    &= 2^{n-K}\prod_{i = 0}^{s-1} \left( \frac{1}{p^{s-i}(1- 1/p)} \right)^{\mu_i} \\
    &= 2^{n-K}\prod_{i = 0}^{s-1} \left( \frac{p^{i+1}}{p^s(p-1)} \right)^{\mu_i}.
\end{align*}
Note that $p^s(p-1) > p^{i+1}$ for every $i \in \{0, \ldots, s-1\}$. Hence, the fraction in the product is smaller than $1$. Therefore, for $p \tendsto \infty$ the product tends to $0$. The same argument holds if we let $s$ tend to infinity. Similarly, as $\mu_i$ depends on $n$, we note  that $\frac{2}{p^{s-i}-p^{s-i-1}} < 1$. This implies that if $n \longrightarrow\infty$ the product tends to zero as well. Thus, codes attaining the bound in Corollary \ref{cor:SB_column_cancelling} are sparse with respect to $p$, $s$ and $n$.

Given an optimal code with respect to the Lee-metric Singleton bound \ref{cor:SB_column_cancelling}, one could also ask if the $r$-th generalized column Lee weights are then also fixed. Since the main problem of the column Lee weight of a code is the computational difficulty, we leave this as an open question.

\subsection{Invariance under Isometry in the Lee Metric}\hfill\\
Finally, we ask if the $r$-th generalized column Lee weights are fixed under isometries. 
\begin{proposition}
    Let $\mathcal{C} \subseteq \left( \mathbb{Z}/p^s\mathbb{Z}\right)^n$ be a linear code of rank $K$, then any equivalent code $\code'$, under the linear Lee-metric isometries is such that
     $$ \distLC^r(\code)= \distLC^r(\code'),$$ for every $r \in \{1, \ldots, K\}$.
\end{proposition}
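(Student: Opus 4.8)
The plan is to show that any Lee-metric isometry induces a weight-preserving bijection on subcodes of each fixed rank, so that the minimization defining $\distLC^r$ is unchanged. Recall that the linear Lee-metric isometries are exactly the maps of the form $c \mapsto \sigma(c) \star v$, where $\sigma \in S_n$ permutes coordinates and $v \in \{1,-1\}^n$ acts by coordinatewise $\pm 1$ multiplication (denoted $\star$). Since such a map is a linear bijection of $(\zps)^n$, it sends a subcode $\subcode \leq \code$ of rank $r$ to a subcode $\phi(\subcode) \leq \code'$ of the same rank $r$, and this correspondence is a bijection between the rank-$r$ subcodes of $\code$ and those of $\code'$. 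Thus it suffices to verify that $\wtLC$ is invariant under $\phi$, i.e. $\wtLC(\phi(\subcode)) = \wtLC(\subcode)$ for every subcode $\subcode$.

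The key observation driving the weight invariance is that $\LW(\pm a) = \LW(a)$ for every $a \in \zps$, so the Lee weight of any single entry is unaffected by the sign factor $v$, and it is only permuted around by $\sigma$. First I would note that if $G$ is a generator matrix of $\subcode$, then $G' := \sigma(G) \star v$ (applying $\sigma$ and $v$ columnwise to every row) is a generator matrix of $\phi(\subcode)$, and moreover every generator matrix of $\phi(\subcode)$ arises this way from a generator matrix of $\subcode$. Since $\max(\suppL(a_j)) = \max_{\text{rows}} \LW(a_{\ell j})$ is computed entrywise and the sign change does not alter $\LW$ of any entry, the column Lee support of $G'$ is exactly the $\sigma$-permutation of the column Lee support of $G$, namely $\suppLC(G') = \sigma(\suppLC(G))$. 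Because $\wtLC$ is the sum of the entries of $\suppLC$, and permuting the entries of an $n$-tuple leaves its sum unchanged, we get $\wtLC(G') = \wtLC(G)$.

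Finally I would combine these two facts at the level of the code. The column Lee weight $\wtLC(\subcode)$ is the minimum of $\wtLC(G)$ over all generator matrices $G$ of $\subcode$; the correspondence $G \mapsto G' = \sigma(G)\star v$ is a bijection between generator matrices of $\subcode$ and those of $\phi(\subcode)$ that preserves $\wtLC$, so the two minima agree and $\wtLC(\phi(\subcode)) = \wtLC(\subcode)$. Plugging this into the definition
\begin{align*}
    \distLC^r(\code) = \min\{\wtLC(\subcode) \st \subcode \leq \code,\ \rk(\subcode) = r\}
\end{align*}
and using that $\phi$ is a rank-preserving bijection on rank-$r$ subcodes, the minimizing set for $\code'$ is the $\phi$-image of the minimizing set for $\code$ with identical $\wtLC$-values, whence $\distLC^r(\code) = \distLC^r(\code')$.

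I do not expect a genuine obstacle here, as the argument closely parallels the invariance proof already given for the join-Lee weights. The only point requiring mild care is the bookkeeping that the sign vector $v$ truly has no effect on $\suppLC$ (as opposed to only no effect on the total $\wtLC$): one must confirm the invariance entry-by-entry before summing, which is immediate from $\LW(\pm a) = \LW(a)$.
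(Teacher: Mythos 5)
Your proposal is correct and takes essentially the same approach as the paper: the paper likewise observes that every generator matrix of a rank-$r$ subcode of $\code'$ has the form $G^{(r)}P\operatorname{diag}(v)$ for a permutation matrix $P$, a sign vector $v\in\{1,-1\}^n$, and a generator matrix $G^{(r)}$ of a rank-$r$ subcode of $\code$, that this preserves the column Lee weight since $\LW(\pm a)=\LW(a)$, and that the minima defining $\distLC^r$ therefore coincide. Your write-up only makes the two-level minimization (first over generator matrices of a fixed subcode, then over rank-$r$ subcodes) more explicit than the paper's terser argument.
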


    \begin{proof}
   Recall that any generator matrix $G'^{(i)}$ of a subcode of rank $i$  of a equivalent $\code'$ can be written as $G'^{(i)} =G^{(i)}P \text{diag}(v)  $, for some permutation matrix $P$, $v \in \{1,-1\}^n$ and some generator matrix $G^{(i)}$ of a subcode of rank $i$ of $\code$. Both, $G'^{(i)}$ and $G^{(i)}$ have the same column weight. Now the claim follows immediately as 
    \begin{align*}
    \distLC^r(\code) &= \min\{ \wtLC(G^{(r)}) \mid \langle G^{(r)} \rangle \leq \mathcal{C}, \text{rk}(\langle G^{(r)} \rangle)=r\} \\
    &= \min\{ \wtLC(G^{(r)}P\text{diag}(v)) \mid \langle G^{(r)}  \rangle \leq \mathcal{C}, \text{rk}(\langle G^{(r)}  \rangle)=r\} = \distLC^r(\code').
    \end{align*}
\end{proof}

\section{Generalized Lee Weights from Filtration}\label{sec:genWeight_filtration}

The resulting Lee-metric Singleton bounds in Theorem \ref{thm:SB_joinsupp} and Corollary \ref{cor:SB_column_cancelling} are improving the previously known bounds, however their optimal codes are sparse and the column Lee weight of a code is computationally difficult to compute. 
We thus ask if fixing the rank of the subcode is the correct direction. In fact, a ring-linear code $\mathcal{C} \subseteq \left(\mathbb{Z}/p^s\mathbb{Z}\right)^n$ of rank $K$ has very natural subcodes to consider, which are all of rank $K.$

\begin{definition}
    For each $i \in \set{0, \ldots, s-1}$ we define the \textit{$i$-th filtration} subcode $\code_i$ of $\code$ as the intersection of $\code$ with the ideal $\langle p^i \rangle$, i.e., 
    $$ \code_i := \code \cap \langle p^i \rangle. $$
    The $(s-1)$-st filtration $\code_{s-1}$ is commonly known as the \textit{socle} of the code $\code$.
\end{definition}
Note that the filtration subcodes naturally form a chain, namely
\begin{align}\label{eq:inclusion_filtrations}
    \code_{s-1} \subseteq \code_{s-2} \subseteq \ldots \subseteq \code_1 \subseteq \code_0 = \code.
\end{align}
We then define a new class of generalized Lee weights, or more concretely generalized Lee distances, coming from filtration subcodes.
\begin{definition}\label{def:genLD}
    Let $\code \subseteq \left( \zps \right)^n$ be a linear code. 
    For each $r \in \set{1, \ldots, s}$ we define the \textit{$r$-th generalized minimum Lee distance} of the code $\code$ to be the minimum distance of the filtration  subcode $\code_{r-1}$, that is
    \begin{align*}
        \LD^r(\code) = \LD(\code_{r-1}).
    \end{align*}
\end{definition}
The generalized minimum Lee distances have some natural properties that are summarized in the following.
\begin{proposition}\label{prop:genLD_properties}
    Given a linear code $\code \subseteq \left( \zps \right)^n$ of rank $K$ and subtype $(k_0, \ldots, k_{s-1})$, let $\sigma := \max\set{ i \in \set{0, \ldots, s-1} \st k_i \neq 0}$. Then the generalized minimum Lee distances satisfy
    \begin{enumerate}
        \item $\LD^1(\code) = \LD(\code)$, \vspace{1mm}
        \item $\LD^r(\code) \leq \LD^{r+1}(\code)$ for every $r \in \set{1, \ldots s-1}$,\vspace{1mm}
        \item $\LD^r(\code) \leq p^{r-1} + (n-k)M_{r-1} $ for every $r = \sigma + 1, \ldots , s$.\label{item:bound_minLeedist}
    \end{enumerate}
\end{proposition}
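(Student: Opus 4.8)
The plan is to establish the three properties in turn, leveraging the chain of filtration subcodes in \eqref{eq:inclusion_filtrations} and the structure theory already developed. I would begin with the easy observations and defer the genuinely quantitative claim (property \ref{item:bound_minLeedist}) to the end.

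\textbf{Property (1).} This is immediate from Definition \ref{def:genLD}: by construction $\LD^1(\code) = \LD(\code_0) = \LD(\code \cap \langle p^0 \rangle) = \LD(\code)$, since $\langle p^0 \rangle = (\zps)^n$ and hence $\code_0 = \code$.

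\textbf{Property (2).} Here I would use the inclusion $\code_r \subseteq \code_{r-1}$ from \eqref{eq:inclusion_filtrations}. Since the minimum Lee distance is defined as a minimum of Lee weights over nonzero codewords, and $\code_r \setminus \set{0} \subseteq \code_{r-1} \setminus \set{0}$, taking the minimum over the larger set $\code_{r-1}\setminus\set{0}$ can only decrease (or preserve) the value. Concretely,
\begin{align*}
    \LD^{r+1}(\code) = \LD(\code_r) = \min_{c \in \code_r \setminus \set{0}} \LW(c) \geq \min_{c \in \code_{r-1} \setminus \set{0}} \LW(c) = \LD(\code_{r-1}) = \LD^r(\code),
\end{align*}
which is the desired monotonicity. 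The only subtlety is to ensure $\code_r$ is nonzero so that $\LD(\code_r)$ is well-defined; for $r \leq s-1$ the socle $\code_{s-1}$ is already nontrivial whenever $\code \neq \set{0}$, so every intermediate filtration in the stated range is nonzero.

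\textbf{Property (3).} This is the main obstacle, as it requires an explicit upper bound on $\LD(\code_{r-1})$ for $r > \sigma$. The idea is to exhibit a concrete low-weight codeword in $\code_{r-1}$. For $r > \sigma$, every generator of $\code$ already lies in $\langle p^{\sigma} \rangle \subseteq \langle p^{r-1}\rangle$ only after multiplication by a suitable power of $p$; more precisely, I would pass to the systematic form $\Gsys$ of \eqref{systematicformG} and multiply a generator of the top block ($\mathbb{I}_{k_0}$ block) by $p^{r-1}$, landing in $\code_{r-1}$. Such a codeword has one entry of the form $p^{r-1}$ in an information position (contributing Lee weight $\LW(p^{r-1}) = p^{r-1}$ when $p^{r-1} \leq M$) and at most $n-k$ further nonzero entries, each contributing at most the maximal Lee weight $M_{r-1}$ available in $\langle p^{r-1}\rangle$. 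Summing gives $\LW(c) \leq p^{r-1} + (n-k)M_{r-1}$, and since $\LD(\code_{r-1})$ is a minimum it is bounded by this particular weight. The delicate points I expect are: justifying that the relevant scaled generator genuinely lies in $\code_{r-1}$ and remains nonzero (which uses $r-1 \geq \sigma$ so that $p^{r-1}$ does not annihilate the leading entry but does push it into the ideal $\langle p^{r-1}\rangle$), correctly counting the non-information entries as $n-k$ rather than $n-K$, and verifying that $p^{r-1}$ is indeed the Lee weight of the leading coordinate in the stated range. Care with the floor $M_{r-1} = \floor{p^{s-(r-1)}/2}p^{r-1}$ and with the boundary case $r = s$ (where $M_{s-1} = \floor{p/2}p^{s-1}$) completes the argument.
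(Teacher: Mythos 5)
Your treatments of properties (1) and (2) are correct and coincide with the paper's: (1) is immediate from Definition \ref{def:genLD} since $\code_0 = \code$, and (2) follows from the inclusion chain \eqref{eq:inclusion_filtrations} together with the fact that a minimum over a smaller set of nonzero codewords can only grow.

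For property (3), however, there is a genuine gap. Scaling a single row of the $\mathbb{I}_{k_0}$ block of $\Gsys$ by $p^{r-1}$ does give a codeword of $\code_{r-1}$, but its non-leading nonzero entries can occupy \emph{every} position from column $k_0+1$ to column $n$: in the systematic form \eqref{systematicformG} the entries $A_{1,2},\ldots,A_{1,s}$ sitting over the lower identity blocks are arbitrary, so after scaling they contribute up to $n-k_0$ entries of Lee weight up to $M_{r-1}$. Your codeword therefore only certifies $\LD^r(\code) \leq p^{r-1} + (n-k_0)M_{r-1}$, which is \emph{weaker} than the claimed $p^{r-1} + (n-k)M_{r-1}$ whenever the code is not free, since $k_0 \leq k$; indeed your count of ``at most $n-k$ further nonzero entries'' cannot be literally right, as $k$ is in general not an integer. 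Worse, when $k_0 = 0$ (say subtype $(0,k_1,0,\ldots,0)$, where $\sigma \geq 1$ and the claim still asserts something for $r = \sigma+1,\ldots,s$), the generator you propose to scale does not exist. The missing idea --- and the paper's actual argument --- is to use all $K$ generators simultaneously: for $r-1 \geq \sigma$, scale the block starting with $p^{i}\mathbb{I}_{k_i}$ by $p^{r-1-i}$, so that all pivots become $p^{r-1}$ and all entries in the first $K$ columns become multiples of $p^{r-1}$, and then clear those columns by elementary row operations (these stay inside $\code_{r-1}$ because the pivot rows lie there). This yields a generator matrix $G_{r-1} = \begin{pmatrix} p^{r-1}\mathbb{I}_K & A \end{pmatrix}$ of $\code_{r-1}$ with $A \in \left(p^{r-1}\zps\right)^{K \times (n-K)}$, so each row has exactly one nonzero entry of Lee weight $p^{r-1}$ among the first $K$ coordinates and at most $n-K$ further entries of Lee weight at most $M_{r-1}$, giving $\LD^r(\code) \leq p^{r-1} + (n-K)M_{r-1} \leq p^{r-1} + (n-k)M_{r-1}$. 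Note that this is where the hypothesis $r \geq \sigma+1$ is genuinely used (to make every block scalable into $\langle p^{r-1}\rangle$ without dying); your parenthetical about $p^{r-1}$ ``not annihilating the leading entry'' is not its real role, since scaling a unit by $p^{r-1}$ is nonzero for any $r \leq s$.
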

\begin{proof}
    The first and second property immediately follow from \eqref{eq:inclusion_filtrations}.\\
    For the third property we observe that for every $r \in \set{\sigma + 1, \ldots , s}$,   by applying elementary row operations, we can bring a generator matrix $G_{r-1}$ of $\code_{r-1}$ in the form
    \begin{align}\label{eq:G_filtration}
        G_{r-1} =
        \begin{pmatrix}
            p^{r-1}\mathbb{I}_{K} & A
        \end{pmatrix},
    \end{align}
    where $A \in \left(p^{r-1}\zps\right)^{K \times (n-K)}$. The $r$-th minimum Lee distance is upper bounded by the Lee weight of any row of $G_{r-1}$. For each row, the first $K$ positions have a Lee weight of exactly $p^{r-1}$. In the last $n-K$ positions of each row we assume the maximal Lee weight given by $M_{r-1} := \floor{p^{s-(r-1)}/2}p^{r-1}$ and hence the inequality follows.
\end{proof}

Due to Property 2. in Proposition \ref{prop:genLD_properties}, we cannot use the usual Singleton-like argument and decrease the weight of the whole code. Instead, we note that any $\LD^r(\code)$ is a direct upper bound on the minimum Lee distance. The only question that remains, is how far we have to go down in the filtration to expect the lowest minim Lee distance, $\LD^r(\code)$. In the following we identify several parameters of the code, that are easy to read off from any generator matrix of the code, that indicates which filtration subcode gives an appropriately low upper bound on $\LD(\code).$

The upper bound on the generalized minimum Lee distances in Property \ref{item:bound_minLeedist}. of Proposition \ref{prop:genLD_properties} is relatively loose. This is due to the fact, that we have assumed no knowledge about the matrix $A$ given in \eqref{eq:G_filtration}.\\

As computing the minimum Lee distance of every subcode $\code_i$ is an exhausting task, especially if there is no knowledge about the structure of $A$, we would like to introduce some more parameters regarding $A$ for the first filtration of $\code$ admitting a generator matrix of the form \eqref{eq:G_filtration}.
That is the filtration $\code_{\sigma}$ with a generator matrix of the form $G_{\sigma} = \begin{pmatrix} p^{\sigma}\mathbb{I}_{K} & A \end{pmatrix}$, for some matrix $A \in \left(p^{\sigma} \zps \right)^{K \times (n-k)}$.
Let $a_{ij}$ denote the entry of $A$ lying in row $i$ and column $j$.
For each row of $A$, we determine the maximal power of $p$ appearing and we denote it by
\begin{align*}
    \ell_i := \max \set{ k \in \set{\sigma, \ldots , s-1} \st \exists a_{ij} : \langle a_{ij} \rangle = \langle p^k \rangle,\, K+1 \leq j \leq n }.
\end{align*}
Clearly, $\ell_i \geq \sigma.$
Let $n'_i$ denote the number of entires of the $i$-th row  of $A$ that live in the ideal $\langle p^{\ell_i} \rangle$, i.e.,
\begin{align*}
    n'_{\ell_i} := \card{\set{ j \in \set{K+1, \ldots , n} \st a_{ij} \in \langle p^{\ell_i} \rangle }}.
\end{align*}

For a given linear code $\code \subseteq \zpsn$, these parameters help us to understand the evolution of the matrix $A$ in the generator matrices of the filtration subcodes $\code_{r-1}$, for $r \in \{\sigma + 1, \ldots , s\}$. In fact, given a generator matrix $G_{\sigma}$ of the filtration $\code_{\sigma}$ in the form \eqref{eq:G_filtration}, the parameters $\ell_i$ and $n'_{\ell_i}$ for a row $i \in \set{1, \ldots, K}$ allow to understand at which point in the filtration these positions become zero.  More precisely, knowing $\ell_i$ and $n'_{\ell_i}$ implies that in  $\code_{s-\ell_i + \sigma}$ there are $n'_{\ell_i}$ many zero entries in $i$-th row of $A$.

Knowing the number of entries turning into zero in a certain filtration is a huge advantage in bounding the minimum distance of a code. Therefore, we define by $n'^{(r-1)}$ the maximal number of zeros we can get in the last $n-K$ positions of a row of a generator matrix of the filtration $\code_{r-1}$. That is, for every $r \in \set{\sigma + 1, \ldots, s}$,
\begin{align*}
    n'^{(r-1)} := \max\set{ n'_{\ell_i} \st \ell_i > s - r + \sigma,\; i \in \{1, \ldots, K\} }.
\end{align*}
If there is no $\ell_i$ with $\ell_i > s - r + \sigma$, we will set $n'^{(r-1)} = 0$.
Furthermore, let $\ell^{(r-1)}$ be the corresponding value $\ell_i$ to $n'^{(r-1)}$, i.e.,
\begin{align*}
    \ell^{(r-1)} := \max\set{\ell_i \st n'_{\ell_i} = n'^{(r-1)}, \; i \in \{ 1, \ldots, K\}}.
\end{align*}

We can hence refine the third property in Proposition \ref{prop:genLD_properties} as follows.
\begin{lemma}\label{lemma:bound_rth}
    Given a linear code $\code \subseteq \zpsn$ of subtype $(k_0, \ldots, k_{s-1})$ with maximal subtype $k_\sigma$. Then, for every $r \in \set{\sigma +1 , \ldots , s}$, the $r$-th generalized Lee distance can be upper bounded by
    \begin{align*}
        \LD^r(\code) = \LD(\code_{r-1}) \leq p^{r-1} + (n - K - n'^{(r-1)})M_{r-1}.
    \end{align*}
\end{lemma}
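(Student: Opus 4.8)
The plan is to bound $\LD^r(\code)=\LD(\code_{r-1})$ by exhibiting a single nonzero codeword of small Lee weight inside the filtration subcode $\code_{r-1}$; since the minimum Lee distance is the minimum Lee weight over all nonzero codewords, any such codeword gives an upper bound. First I would take the candidate to be a single row of a conveniently normalized generator matrix of $\code_{r-1}$, and the entire role of the parameters $\ell_i$, $n'_{\ell_i}$ and ${n'}^{(r-1)}$ is to count how many entries of that row have already collapsed to zero once we descend to filtration level $r-1$.

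Next I would pin down the generator matrix of $\code_{r-1}$ for $r-1\geq\sigma$. Starting from the generator matrix $G_\sigma=\begin{pmatrix} p^{\sigma}\mathbb{I}_{K} & A\end{pmatrix}$ of $\code_\sigma$ fixed before the statement, I claim $\code_{r-1}=p^{r-1-\sigma}\code_\sigma$. The inclusion $p^{r-1-\sigma}\code_\sigma\subseteq\code_{r-1}$ is immediate since $\code_\sigma\subseteq\langle p^\sigma\rangle$ and $\code$ is a module. For the reverse inclusion I would write any $c\in\code_{r-1}\subseteq\code_\sigma$ as $c=xG_\sigma$; reading off the first $K$ coordinates $p^\sigma x$ and imposing $c\in\langle p^{r-1}\rangle$ forces $p^{r-1-\sigma}\mid x_i$ for every $i$, so $c=p^{r-1-\sigma}(yG_\sigma)$ with $yG_\sigma\in\code_\sigma$. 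Consequently
\begin{align*}
    G_{r-1}=p^{r-1-\sigma}G_\sigma=\begin{pmatrix} p^{r-1}\mathbb{I}_K & p^{r-1-\sigma}A\end{pmatrix}
\end{align*}
is a rank-$K$ generator matrix of $\code_{r-1}$, matching the form already used in Proposition \ref{prop:genLD_properties}.

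I would then track which entries of $p^{r-1-\sigma}A$ vanish. An entry $a_{ij}$ lying in $\langle p^{\ell_i}\rangle$ becomes $p^{r-1-\sigma}a_{ij}$, which is zero in $\zps$ precisely when $r-1-\sigma+\ell_i\geq s$, i.e. when $\ell_i>s-r+\sigma$; being the top-level entries of the row, these are the first to disappear. Choosing the row $i^\ast$ realizing ${n'}^{(r-1)}=\max\set{ n'_{\ell_i}\st \ell_i>s-r+\sigma}$, that row of $p^{r-1-\sigma}A$ carries at least ${n'}^{(r-1)}$ zeros. The corresponding row of $G_{r-1}$ is a nonzero codeword, its identity block contributing $p^{r-1}\neq 0$ as $r\leq s$, and its Lee weight splits into the identity part plus the surviving entries. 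Using $\LW(p^{r-1})=p^{r-1}$ (valid since $p^{r-1}\leq p^{s-1}\leq\floor{p^s/2}$) and bounding each of the at most $n-K-{n'}^{(r-1)}$ surviving entries, all in $\langle p^{r-1}\rangle$, by the maximal Lee weight $M_{r-1}$ of that ideal, I obtain
\begin{align*}
    \LD(\code_{r-1})\leq p^{r-1}+(n-K-{n'}^{(r-1)})M_{r-1},
\end{align*}
which is the claim; the degenerate case ${n'}^{(r-1)}=0$ returns the coarser estimate already recorded in Property \ref{item:bound_minLeedist} of Proposition \ref{prop:genLD_properties}.

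The step I expect to be the main obstacle is the valuation bookkeeping in the third paragraph: one must check carefully that multiplication by $p^{r-1-\sigma}$ annihilates exactly the entries of $p$-adic valuation at least $s-r+1+\sigma$ and no fewer, so that ${n'}^{(r-1)}$ is genuinely a lower bound on the number of zeros available in the selected row, and one must confirm both that this row is nonzero and that its identity-block entry really has Lee weight $p^{r-1}$ rather than the wrapped-around value $p^s-p^{r-1}$. Everything else is a direct weight count; the only conceptual content is matching the valuation threshold $\ell_i>s-r+\sigma$ against the filtration index $r-1$.
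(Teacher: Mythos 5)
Your proposal is correct and takes essentially the same route as the paper: both arguments bound $\LD(\code_{r-1})$ by the Lee weight of a single row of the systematic generator matrix $\begin{pmatrix} p^{r-1}\mathbb{I}_K & p^{r-1-\sigma}A \end{pmatrix}$ of the filtration subcode, choosing the row with the maximal number ${n'}^{(r-1)}$ of zeros in the last $n-K$ columns, so that the identity block contributes $p^{r-1}$ and each surviving position at most $M_{r-1}$. Your explicit verification that $\code_{r-1}=p^{r-1-\sigma}\code_\sigma$ (of which only the easy inclusion $p^{r-1-\sigma}\code_\sigma\subseteq\code_{r-1}$ is actually needed) simply makes rigorous the row-reduction step the paper delegates to the proof of Proposition~\ref{prop:genLD_properties}.
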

\begin{proof}
    The proof follows in a similar fashion as the proof of Proposition \ref{prop:genLD_properties} by focusing on the row with the maximal number of zeros in the last $n-K$ columns of $\code_{r-1}$ which is captured in $n'^{(r-1)}$. Hence, the remaining $(n-K-n'^{(r-1)})$ positions are bounded by the maximal Lee weight in the ideal considering, which is given by $M_{r-1}$.
\end{proof}

\begin{example}\label{ex:ell_n'}
    Let us consider a free code $\code \in (\Zp{27})^5$ spanned by the rows of the matrix
    \begin{align*}
        \begin{pmatrix}
            1 & 0 & 0 & 21 & 6 \\
            0 & 1 & 0 & 10 & 7 \\
            0 & 0 & 1 & 18 & 8
        \end{pmatrix}
        =: 
        \begin{pmatrix}
            \mathbb{I}_3 & A 
        \end{pmatrix}.
    \end{align*}
   We easily check that
    \begin{align*}
        \ell_1 = 1 \quad\text{and}\quad n'_{\ell_1} = 2, \\
        \ell_2 = 0 \quad\text{and}\quad n'_{\ell_2} = 2, \\
        \ell_3 = 2 \quad\text{and}\quad n'_{\ell_3} = 1.
    \end{align*}

    Let us now consider the filtration subcodes $\code_1$ and $\code_2$ in order to compute the bound given in Proposition \ref{prop:genLD_properties}.
    Note that in this case $\sigma = 0$ as the code is free. For $\code_{\sigma} = \code_0 = \code$, the values $\ell_i$ and $n'_i$ are given above. As $\ell_3 = 2$ and $n'_3 = 1$, at filtration $\code_{3-2+0} = \code_1$ there is one entry equal to zero. Indeed, $\code_1 = \code \cap \langle 3 \rangle$ has a generator matrix of the form
    \begin{align*}
        \begin{pmatrix}
            3 & 0 & 0 & 9 & 18 \\
            0 & 3 & 0 & 3 & 21 \\
            0 & 0 & 3 & 0 & 24
        \end{pmatrix},
    \end{align*}
    where the last row contains one zero element in the last $2$ columns.
    Note that $n'^{(1)} = n'_{\ell_3} = 1$ and hence, $\LD^2(\code) \leq 3 + (5-3-1)12 = 15$.
    
    Similarly, at filtration $\code_2 = \code \cap \langle 9 \rangle$ we  observe two zero entries in the first row, as
    \begin{align*}
        \begin{pmatrix}
            9 & 0 & 0 & 0 & 0 \\
            0 & 9 & 0 & 9 & 9 \\
            0 & 0 & 9 & 0 & 18
        \end{pmatrix}.
    \end{align*}
    Here we notice that $n'^{(2)} = n'_{\ell_1} = 2$ and thus $\LD^3(\code) \leq 9 + (5-3-2)9 = 9$.
\end{example}

By Proposition \ref{prop:genLD_properties}, we know that the $r$-th generalized Lee distances are in non-decreasing order. Therefore, for any $r \in \set{\sigma + 1, \ldots , s}$ the bound in Lemma \ref{lemma:bound_rth} is a valid upper bound for the minimum Lee distance of a code $\code$. However, as visible in Example \ref{ex:ell_n'}, the bounds on the $r$-th minimum Lee distances do not have to follow the same non-decreasing order. As they all hold as an upper bound to the minimum Lee distance of the code, the following bound is a direct consequence of Lemma \ref{lemma:bound_rth} by choosing the smallest among the bounds given in the statement.

\begin{corollary}\label{cor:SB_filtration}
    Given a code $\code \subseteq \zpsn$ of subtype $(k_0, \ldots, k_{s-1})$. For each $r \in \set{\sigma + 1, \ldots , s}$ let $\ell \geq 1$ and $(\ell, n')$ be the pair $(\ell^{(r-1)}, n'^{(r-1)})$ minimizing
    \begin{align*}
        p^{s - \ell^{(r-1)} + \sigma} + (n - K - n'^{(r-1)})M_{s - \ell^{(r-1)} + \sigma}.
    \end{align*}
    Then the codes minimum distance is bounded by
    \begin{align*}
        \LD(\code) \leq p^{s - \ell + \sigma} + (n - K - n')M_{s - \ell + \sigma}.
    \end{align*}
\end{corollary}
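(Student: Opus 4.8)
The plan is to derive Corollary~\ref{cor:SB_filtration} as an almost immediate consequence of Lemma~\ref{lemma:bound_rth} together with Property~1 of Proposition~\ref{prop:genLD_properties}. The key observation is that every filtration subcode $\code_{r-1}$ is a subcode of $\code$, so its minimum Lee distance is an upper bound on $\LD(\code)$: indeed $\code_{r-1} \subseteq \code_0 = \code$ by the chain~\eqref{eq:inclusion_filtrations}, and any nonzero codeword of $\code_{r-1}$ is in particular a nonzero codeword of $\code$, whence $\LD(\code) \leq \LD(\code_{r-1}) = \LD^r(\code)$ for every $r \in \set{\sigma+1, \ldots, s}$. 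Since each individual $\LD^r(\code)$ upper-bounds $\LD(\code)$, so does the minimum over $r$ of these quantities.

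First I would make explicit that $\LD(\code) \leq \LD^r(\code)$ holds for each $r$ in the stated range, using only the inclusion of the filtration subcodes. Second, I would invoke Lemma~\ref{lemma:bound_rth}, which gives for each such $r$ the computable upper bound
\begin{align*}
    \LD^r(\code) \leq p^{r-1} + (n - K - n'^{(r-1)})M_{r-1}.
\end{align*}
Chaining these two inequalities yields $\LD(\code) \leq p^{r-1} + (n-K-n'^{(r-1)})M_{r-1}$ for every admissible $r$. Third, since all of these are valid upper bounds simultaneously, the tightest one is obtained by minimizing the right-hand side over $r \in \set{\sigma+1, \ldots, s}$, which is precisely the content of the corollary once we re-index.

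The only genuinely non-routine step is the change of variables matching the index $r$ in Lemma~\ref{lemma:bound_rth} with the pair $(\ell^{(r-1)}, n'^{(r-1)})$ appearing in the statement. I would set $r = s - \ell^{(r-1)} + \sigma$ (equivalently $\ell^{(r-1)} = s - r + \sigma$), so that $p^{r-1}$ becomes $p^{s - \ell^{(r-1)} + \sigma}$ and $M_{r-1}$ becomes $M_{s - \ell^{(r-1)} + \sigma}$, turning the bound from Lemma~\ref{lemma:bound_rth} into the expression being minimized in the corollary; the condition $\ell \geq 1$ reflects the range $r \leq s$ of valid filtrations. Selecting the pair that minimizes this expression and reading off the resulting inequality on $\LD(\code)$ then completes the proof.

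The main obstacle, such as it is, lies entirely in this bookkeeping: one must verify that the re-indexing is a genuine bijection between the set $\set{\sigma+1, \ldots, s}$ of filtration levels and the range of admissible pairs $(\ell^{(r-1)}, n'^{(r-1)})$, and that minimizing over $r$ is the same as minimizing over $\ell$. Beyond that the argument is purely formal, since the substantive work---establishing the per-level upper bound---has already been carried out in Lemma~\ref{lemma:bound_rth}. I therefore expect the proof to be short, amounting to ``$\LD(\code) \leq \LD^r(\code)$ for all $r$, so $\LD(\code)$ is at most the minimum of the bounds in Lemma~\ref{lemma:bound_rth}, which is the claimed expression after re-indexing.''
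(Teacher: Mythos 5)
Your overall strategy is exactly the paper's: every generalized distance $\LD^r(\code)$ with $r \in \set{\sigma+1, \ldots, s}$ upper-bounds $\LD(\code)$ (you derive this from the inclusion chain \eqref{eq:inclusion_filtrations}; the paper cites the monotonicity in Proposition \ref{prop:genLD_properties}, which is the same fact), each level is bounded by Lemma \ref{lemma:bound_rth}, and one keeps the smallest of the resulting bounds. The paper indeed presents the corollary as precisely this direct consequence. However, the re-indexing step that you yourself single out as the crux is wrong as stated. You claim $r = s - \ell^{(r-1)} + \sigma$, ``equivalently $\ell^{(r-1)} = s - r + \sigma$''. First, by definition $n'^{(r-1)}$ is a maximum over rows with $\ell_i > s - r + \sigma$ (strict inequality), and $\ell^{(r-1)}$ is the largest such $\ell_i$ realizing it; hence $\ell^{(r-1)} \geq s - r + \sigma + 1$, with equality not guaranteed, so your identity is false in general and the map $r \mapsto (\ell^{(r-1)}, n'^{(r-1)})$ is no bijection (several $r$ may yield the same pair, and the pair attached to $r$ need not ``sit'' at level $r$). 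Second, even in the boundary case $\ell^{(r-1)} = s - r + \sigma + 1$ your substitution is off by one: $p^{r-1} = p^{s - \ell^{(r-1)} + \sigma}$ requires $r - 1 = s - \ell^{(r-1)} + \sigma$, not $r = s - \ell^{(r-1)} + \sigma$.

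The repair is short and also explains why the corollary's expression is legitimate when $\ell^{(r-1)} > s - r + \sigma + 1$: given the pair $(\ell, n') = (\ell^{(r-1)}, n'^{(r-1)})$, apply Lemma \ref{lemma:bound_rth} at the level $\tilde r := s - \ell + \sigma + 1$, which is admissible since $\ell > s - r + \sigma \geq \sigma$ forces $\sigma + 1 \leq \tilde r \leq s$. At that level $s - \tilde r + \sigma = \ell - 1$, so
\begin{align*}
    n'^{(\tilde r - 1)} = \max\set{ n'_{\ell_i} \st \ell_i \geq \ell } \geq n',
\end{align*}
because the row realizing the pair has $\ell_i = \ell$ and $n'_{\ell_i} = n'$. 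Hence
\begin{align*}
    \LD(\code) \leq \LD^{\tilde r}(\code) \leq p^{s - \ell + \sigma} + (n - K - n'^{(\tilde r - 1)})M_{s - \ell + \sigma} \leq p^{s - \ell + \sigma} + (n - K - n')M_{s - \ell + \sigma},
\end{align*}
which is exactly the expression minimized in Corollary \ref{cor:SB_filtration}; taking the minimizing pair completes the proof. With this substitution in place of your claimed bijection, your argument is sound and coincides with the paper's.
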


As for large rank $K$ this minimum can take a while to compute, we can also derive a slightly weaker bound depending on the maximal value $\ell_i$, which is easy to compute.
\begin{corollary}\label{cor:SB_filtration_2.0}
    Given a code $\code \subseteq \zpsn$ of subtype $(k_0, \ldots, k_{s-1})$ of maximal subtype $k_\sigma$. For each $r \in \set{\sigma + 1, \ldots , s}$ let $\ell := \max\set{\ell_i \st i = 1, \ldots , K}$ and define the corresponding value $n':= \max\set{n_{\ell_i} \st \ell_i = \ell, \, \text{for } i = 1, \ldots , K}$
    Then, the  minimum distance is bounded by
    \begin{align*}
        \LD(\code) \leq
        \begin{cases}
            p^{s - \ell + \sigma} + (n - K - n')M_{s - \ell + \sigma} & \text{if } \ell \geq 1, \\
            p^\sigma + (n-K)M_{\sigma} & \text{else}.
        \end{cases}      
    \end{align*}
\end{corollary}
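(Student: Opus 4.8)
The plan is to recognize Corollary~\ref{cor:SB_filtration_2.0} as a direct specialization of Lemma~\ref{lemma:bound_rth}, obtained by committing to one easily identifiable filtration index rather than minimizing over all of them as in Corollary~\ref{cor:SB_filtration}. The starting observation is that, by Property~2 of Proposition~\ref{prop:genLD_properties}, the generalized minimum Lee distances are non-decreasing, so $\LD(\code) = \LD^1(\code) \leq \LD^r(\code)$ for every $r \in \{1, \ldots, s\}$. Consequently, \emph{any} upper bound on \emph{any single} $\LD^r(\code)$ is automatically an upper bound on $\LD(\code)$. It therefore suffices to select one convenient value of $r$ and feed it into Lemma~\ref{lemma:bound_rth}; no telescoping or summation over the weights is needed here, unlike in the column-support bounds.

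First I would treat the main case $\ell > \sigma$ (which for a free code, $\sigma = 0$, is exactly the stated condition $\ell \geq 1$, since $\ell = \max_i \ell_i \geq \sigma$ always). Here $\ell$ records the largest power of $p$ occurring in $A$, and these maximal-power entries are the first to vanish as one descends the filtration: an entry of power $\ell_i$ becomes zero in $\code_{r-1}$ precisely when $r - 1 \geq s - \ell_i + \sigma$. I would therefore set $r - 1 = s - \ell + \sigma$, the shallowest filtration at which new zeros appear, and check that this lies in the admissible range, i.e.\ $r \in \{\sigma+1, \ldots, s\}$, using $\sigma < \ell \leq s-1$. The key bookkeeping step is to unwind the definition of $n'^{(r-1)}$: with this choice one computes $s - r + \sigma = \ell - 1$, so the defining condition $\ell_i > s - r + \sigma$ collapses to $\ell_i \geq \ell$, hence to $\ell_i = \ell$, giving $n'^{(r-1)} = \max\{n'_{\ell_i} \mid \ell_i = \ell\} = n'$. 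Substituting $r-1 = s-\ell+\sigma$ and $n'^{(r-1)} = n'$ into Lemma~\ref{lemma:bound_rth} yields $\LD(\code) \leq \LD^r(\code) \leq p^{s-\ell+\sigma} + (n-K-n')M_{s-\ell+\sigma}$, which is the first branch.

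Next I would dispose of the remaining case $\ell = \sigma$, where no entry of $A$ has power strictly exceeding $\sigma$ and hence no strictly deeper filtration than $\code_\sigma$ contributes additional zeros. In this situation the useful bound is simply the one at the starting filtration $\code_\sigma$, i.e.\ $r = \sigma+1$: unwinding $n'^{(\sigma)}$ gives the condition $\ell_i > s-1$, which is vacuous since every $\ell_i \leq s-1$, so $n'^{(\sigma)} = 0$. Lemma~\ref{lemma:bound_rth} (equivalently Property~3 of Proposition~\ref{prop:genLD_properties} evaluated at $r = \sigma+1$) then delivers $\LD(\code) \leq p^\sigma + (n-K)M_\sigma$, the second branch.

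I expect the only real obstacle to be index bookkeeping rather than any substantive inequality: carefully matching the maximal power $\ell$ to the filtration depth $s - \ell + \sigma$, verifying that $n'^{(r-1)}$ computed at that depth coincides with $n'$, and confirming that the resulting $r$ is a legitimate filtration index in $\{\sigma+1,\ldots,s\}$. All the analytic content — that the Lee weight of a row of a filtration generator matrix $(p^{r-1}\mathbb{I}_K \mid A)$ is controlled by $p^{r-1}$ on the systematic part and by $M_{r-1}$ on each of the $(n-K-n'^{(r-1)})$ remaining nonzero positions — is already packaged in Lemma~\ref{lemma:bound_rth}. Thus this corollary is essentially a reformulation of that lemma into a form requiring only the single, cheaply computed quantity $\ell = \max_i \ell_i$, at the cost of not optimizing over the filtration as Corollary~\ref{cor:SB_filtration} does.
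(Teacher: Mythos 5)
Your proof is correct and takes essentially the same route as the paper's (implicit) derivation: the corollary is precisely Lemma~\ref{lemma:bound_rth} evaluated at the single filtration depth $r-1 = s-\ell+\sigma$, combined with the monotonicity of the generalized Lee distances from Proposition~\ref{prop:genLD_properties}, and your bookkeeping (admissibility of $r$, $s-r+\sigma = \ell-1$, hence $n'^{(r-1)} = n'$, and $n'^{(\sigma)} = 0$ in the remaining case) is accurate. Your reading of the case split as $\ell > \sigma$ versus $\ell = \sigma$, rather than the literal $\ell \geq 1$, is moreover the correct interpretation: for $\sigma \geq 1$ with $\ell = \sigma$ the first formula would degenerate to depth $s$, and the paper's subsequent case analysis and comparison table confirm that the second branch is the one intended there.
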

In fact, we can identify  conditions, leading to four different cases for the bound provided in Corollary \ref{cor:SB_filtration_2.0}. 
For this very last observation, leading to the very last Lee-metric Singleton bound, we first need one last definition. Let $\mathcal{C} \subseteq \left( \mathbb{Z}/p^s\mathbb{Z}\right)^n$ be a linear code of maximal subtype $k_\sigma$ and assume that $\code_\sigma$ is generated by $(p^\sigma \mathbb{I} \ A)$. Let us denote the entries of $A$ as $a_{i,j},$ for $i \in \{1, \ldots, K\}$ and $j \in \{K+1, \ldots, n\}$. We define $$N'= \max\{ j \in \{K+1, \ldots, n\} \mid p \mid a_{i,j}, i \in \{1, \ldots, K\} \}.$$ That is $N'$ is  the maximal number of entries in a  row of $A$, which are divisible by $p.$

\begin{example}
    Let us consider the code over $\mathbb{Z}/27\mathbb{Z}$ generated by 
    $$G= \begin{pmatrix}
        1 & 0 & 3 & 6 \\ 
        0 & 1 & 18 & 1 
    \end{pmatrix}.$$
    The previous bound from Corollary \ref{cor:SB_filtration_2.0} would take $\ell=2$ and $n'=1,$ instead $N'=2$, as in the first row of $A$ we have two entries that are divisible by $p.$ In fact, this indicates  the minimum Hamming weight codeword in the socle, in this case 1.  Clearly, if $N'$ is large, it is beneficial to go until the socle.
\end{example}
\begin{enumerate}
    \item Case $\ell =  \sigma$ or $n'/2 \leq \frac{p^{s-\ell}-1}{p^{s -\sigma}-1}$. In this case we stay in $\code_{\sigma}$: 
        \begin{align*}
            \LD(\code) \leq p^{\sigma} + (n-K)M_\sigma
        \end{align*}
    \item Case $\ell =  s$. In this case we also stay in $\code_\sigma$, but observed some zero entries: 
        \begin{align*}
            \LD(\code) \leq p^{\sigma} + (n-k-n')M_{\sigma}
        \end{align*}
    \item Case $\ell \neq \sigma$ or $\ell \neq s$ and $n'/2 \geq \frac{p^{s-\ell}-1}{p^{s -\sigma}-1}$. In this case we can move to $\code_{s-\ell+\sigma}$:  
        \begin{align*}
            \LD(\code) \leq p^{s-\ell + \sigma} + (n-k-n')M_{s - \ell + \sigma}
        \end{align*}
    \item Case: if $n' \leq N' \frac{p^{\ell-\sigma}-p^{\ell-\sigma-1}}{p^{\ell-\sigma}-1}+(n-K-2)\frac{p^{\ell-\sigma-1}-1}{p^{\ell-\sigma}-1}$. In this case we go to the socle:
        \begin{align*}
            \LD(\code) \leq    p^{s - 1} + (n - K - N')M_{s - 1}. 
        \end{align*}
\end{enumerate}

Note also, that instead of taking the filtration subcodes $\mathcal{C}_i = \mathcal{C} \cap \langle p^i \rangle$, we could have also considered the torsion subcodes. 
\begin{definition}
    Let $\mathcal{C} \subseteq \left( \mathbb{Z} / p^s\mathbb{Z} \right)^n$. For $ i \in \{0, \ldots, s-1\}$, we call $\widetilde{\mathcal{C}_{i}}=\mathcal{C} \mod p^{s-i} \subseteq \left( \mathbb{Z}/p^{s-i} \mathbb{Z}\right)^n$ the $i$-th torsion code. 
\end{definition}
We can, however, immediately observe that the $i$-th torsion code represented as a code over the ambient space is naturally a subcode of the filtration subcode as $$p^i\widetilde{\mathcal{C}_i}\subseteq \mathcal{C}_i \subseteq \left( \mathbb{Z}/p^s\mathbb{Z}\right)^n,$$ with $\text{rk}(p^i \widetilde{\mathcal{C}_i})= \sum_{j=0}^{i-1} k_j < \text{rk}(\mathcal{C}_i) =K.$

In fact, any generator matrix of $\widetilde{\mathcal{C}_i}$  is a truncation of a generator matrix of $G$, i.e.,  we cut off the rows belonging to the subtypes $k_i, \ldots, k_{s-1}.$

Thus if we would define the $r$-th generalized Lee distances as $\LD^r(\mathcal{C})= \LD(\widetilde{\mathcal{C}_r})$, for $r \in \{0, \ldots, s-1\}$ then 
$\LD(\mathcal{C}) \leq \LD(\mathcal{C}_{i}) \leq \LD(p^{i} \tilde{\mathcal{C}_i})$. Thus, any upper bound on $\LD(p^{i} \tilde{\mathcal{C}_i})$ would serve as upper bound on $\LD(\mathcal{C}),$ but would be worse than taking directly bounds on the smaller $\LD(\code_i).$

Finally, we want to note, that the same considerations also apply to the Hamming metric. 
\begin{corollary} 
    Given a code $\code \subseteq \zpsn$ of subtype $(k_0, \ldots, k_{s-1})$ of maximal subtype $k_\sigma$. For each $r \in \set{\sigma + 1, \ldots , s}$ let $\ell := \max\set{\ell_i \st i = 1, \ldots , K}$ and define the corresponding value $n':= \max\set{n_{\ell_i} \st \ell_i = \ell, \, \text{for } i = 1, \ldots , K}$
    Then, the Hamming minimum distance is bounded by
    \begin{align*}
        \HD(\code) \leq
        \begin{cases}
            1 + (n - K - n') & \text{if } \ell \geq 1, \\
            1 + (n-K)  & \text{else}.
        \end{cases}      
    \end{align*}
\end{corollary}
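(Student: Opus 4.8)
The plan is to transcribe the proof of Corollary \ref{cor:SB_filtration_2.0} into the Hamming metric, replacing each Lee weight by the corresponding Hamming weight. The two quantities that carry the metric in the filtration bounds are the Lee weight $p^{r-1}$ of a diagonal pivot and the maximal off-diagonal Lee weight $M_{r-1}$; in the Hamming metric both collapse to $1$, since every nonzero entry of $\zps$ has Hamming weight exactly $1$. This is precisely why the filtration index $s-\ell+\sigma$ disappears from the statement and only the additive count $1+(n-K-n')$ survives.

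First I would recall the two structural facts that drive the filtration arguments. Since $\code_{r-1} = \code \cap \langle p^{r-1}\rangle$ is a submodule of $\code$, every nonzero codeword of $\code_{r-1}$ is a nonzero codeword of $\code$, so $\HD(\code) \leq \HD(\code_{r-1})$ for every $r$; this is the Hamming analogue of Property 2 in Proposition \ref{prop:genLD_properties}, and it lets us bound $\HD(\code)$ by the minimum Hamming distance of any filtration subcode. Next, for $r \in \set{\sigma+1, \ldots, s}$, starting from a generator matrix $G_\sigma = (p^\sigma \mathbb{I}_K \mid A)$ of $\code_\sigma$ and scaling the rows by $p^{r-1-\sigma}$, one obtains a generator matrix $(p^{r-1}\mathbb{I}_K \mid p^{r-1-\sigma}A)$ of $\code_{r-1}$. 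Each of its rows has exactly one nonzero pivot among the first $K$ coordinates, contributing Hamming weight $1$, plus the number of surviving nonzero entries among the last $n-K$ coordinates.

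The key step is then the Hamming analogue of Lemma \ref{lemma:bound_rth}. By the valuation computation preceding that lemma, an off-diagonal entry of $p$-adic valuation $\ell_i$ in row $i$ of $A$ becomes zero precisely at the filtration $\code_{s-\ell_i+\sigma}$. Taking $\ell := \max_i \ell_i$ and the corresponding $n'$, the row realizing $\ell_i = \ell$ acquires $n'$ zero entries in its last $n-K$ coordinates once we reach $\code_{s-\ell+\sigma}$, while its pivot $p^{s-\ell+\sigma}$ is still nonzero. Hence this row is a nonzero codeword of $\code_{s-\ell+\sigma}$ of Hamming weight $1 + (n-K-n')$, so that $\HD(\code) \leq \HD(\code_{s-\ell+\sigma}) \leq 1 + (n-K-n')$, which is the $\ell \geq 1$ case. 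When $\ell = 0$ no off-diagonal entry ever vanishes in a nontrivial filtration, so we simply read off the weight of a row of $\code = \code_0$ and obtain $\HD(\code) \leq 1 + (n-K)$.

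I expect the main obstacle to be bookkeeping at the boundary of the filtration range rather than any genuine difficulty. One must check that the pivot $p^{s-\ell+\sigma}$ is nonzero, equivalently $\ell > \sigma$, so that $\code_{s-\ell+\sigma}$ is nontrivial and the chosen row is genuinely a nonzero codeword whose weight upper-bounds the minimum distance; and one must verify that the degenerate situation $\ell = \sigma$ is correctly absorbed into the second case, where no guaranteed zeros appear and the bound $1+(n-K)$ applies. Both points are exactly the edge cases already dispatched in the Lee-metric proofs of Lemma \ref{lemma:bound_rth} and Corollary \ref{cor:SB_filtration_2.0}, so the argument should carry over verbatim under the weight substitutions $p^{r-1}\mapsto 1$ and $M_{r-1}\mapsto 1$.
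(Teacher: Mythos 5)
Your proof is correct and is essentially the paper's own argument: the paper gives this corollary no separate proof, saying only that ``the same considerations also apply to the Hamming metric,'' and your transcription of Lemma \ref{lemma:bound_rth} and Corollary \ref{cor:SB_filtration_2.0} under the substitutions $p^{r-1}\mapsto 1$ and $M_{r-1}\mapsto 1$ --- exhibiting a row of the generator matrix $\begin{pmatrix} p^{s-\ell+\sigma}\mathbb{I}_K & p^{s-\ell}A \end{pmatrix}$ of $\code_{s-\ell+\sigma}$ of Hamming weight $1+(n-K-n')$ --- spells out exactly the intended reasoning. Your boundary check is also well placed: the descent genuinely requires $\ell>\sigma$ (so the pivot $p^{s-\ell+\sigma}$ is nonzero), and for nonfree codes with $\ell=\sigma\geq 1$ the first branch of the statement as literally written can fail (there $n'=n-K$, yet no filtration subcode realizes the zeros), so absorbing $\ell=\sigma$ into the $1+(n-K)$ bound, as you do and as the paper's own case analysis following Corollary \ref{cor:SB_filtration_2.0} does, is the correct reading.
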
 

Note, that the Lee-metric version, that is Corollary \ref{cor:SB_filtration_2.0}, is not directly implied by the Hamming-metric bound. Such a direct bound would say
  \begin{align*}
        \LD(\code) \leq
        \begin{cases}
            M(1 + (n - K - n')) & \text{if } \ell \geq 1, \\
            M(1 + (n-K))  & \text{else}.
        \end{cases}      
    \end{align*}

    This is clearly a worse bound than our Lee-metric Singletonn bound of Corollary \ref{cor:SB_filtration_2.0}.
\subsection{Density of Optimal Codes with respect to Filtrations}\label{sec:MLD_filtration}\hfill\\
One interesting quantity is the number of codes of maximum achievable Lee distance for given parameters. We call a such a code a \textit{maximum Lee distance} (MLD) code. We have already seen that codes attaining the bounds based on the join-support and based on the column support are sparse as $p, s$ and $n$ tend to infinity.
In this subsection we discuss the density of MLD codes with respect  to the new Lee-metric Singleton bound \ref{cor:SB_filtration_2.0} from the filtration. 
If nothing else is stated we  consider a code $\code \subseteq \zpsn$ of rank $K$ and subtype $(k_0, \ldots, k_{s-1})$. 

Recall that the bound from Corollary \ref{cor:SB_filtration_2.0} is especially tight, if there are many zero positions in a row of a generator matric of a filtration subcode. Given the rank $K$ of a code $\code \subseteq \zpsn$ the probability that an entire row of $A$ is zero, where $A$ are the last $n-K$ columns of a generator matrix of a filtration $\code_{r-1}$ 
with $r \in \set{\sigma + 1, \ldots, s}$, is depending on $\sigma$, i.e., it depends on whether the code $\code$ is free or not.

For $n \tendsto \infty$ it is known \cite{free} that 
\begin{align*}
    \prob(\code \text{ is free}) =
    \begin{cases}
        1 & \text{if } R < 1/2, \\
        0 & \text{if } R > 1/2.
    \end{cases}
\end{align*}
Hence, in this case we would have to distinguish again the two cases. On the contrary for $p \tendsto \infty$, it is well-known that the code $\code$ is free with high probability, which implies that $\sigma = 0$. In this case, we have
\begin{enumerate}
    \item For every $i \in \set{1, \ldots, K}$, $\ell_i = 0$. Thus, the bound in Corollary \ref{cor:SB_filtration_2.0} can be reduced to 
    \begin{align*}
        \LD(\code) \leq 1 + (n-K)M,
    \end{align*}
    which coincides with the Singleton-like bound provided by \cite{shiromoto}.

    \item There is a $i \in \set{1, \ldots, K}$ with $\ell_i \neq 0$. In this case, we can find the pair $(\ell, n')$ as in Corollary \ref{cor:SB_filtration_2.0} and the minimum Lee distance is bounded by
    \begin{align*}
        \LD(\code) \leq p^{s-\ell} + (n-K-n')M_{s-\ell}.
    \end{align*}
\end{enumerate}

The following Lemma shows that for $p \tendsto \infty$ the first case occurs with high probability.
\begin{lemma}\label{lem:ell=0_growP}
    For a free linear code $\code \subseteq\zpsn$, as $p \tendsto \infty$, $\ell = 0$ with high probability.
\end{lemma}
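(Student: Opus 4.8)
The plan is to reduce the statement to an elementary valuation count under the standard model for a random free code. Since $\code$ is free we have $\sigma = 0$, so by Proposition \ref{prop:redsystematicG} (after a harmless coordinate permutation) $\code$ admits a generator matrix of the form $G = \begin{pmatrix} \mathbb{I}_K & A \end{pmatrix}$ with $A \in (\zps)^{K \times (n-K)}$, and the filtration used to define $\ell$ is $\code_\sigma = \code_0 = \code$ itself. Following the density model of \cite{free}, sampling a uniformly random free code of rank $K$ amounts to sampling $A$ uniformly at random from $(\zps)^{K\times(n-K)}$.

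First I would rewrite the event $\ell = 0$ purely in terms of the entries of $A$. By definition $\ell = \max_i \ell_i$, where $\ell_i$ is the largest $k \in \set{0, \ldots, s-1}$ such that some entry in the $i$-th row of $A$ generates the ideal $\langle p^k \rangle$. Hence $\ell = 0$ holds exactly when no entry of $A$ has $p$-adic valuation lying in $\set{1, \ldots, s-1}$; equivalently, every entry of $A$ is either a unit or zero, since a zero entry generates no ideal $\langle p^k\rangle$ with $k \leq s-1$ and is therefore harmless.

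Next I would bound the complementary ``bad'' event for a single entry. For $a \unirand \zps$ we have $\prob(p \mid a) = 1/p$ and $\prob(a = 0) = p^{-s}$, so the probability that $\langle a \rangle = \langle p^k \rangle$ for some $1 \leq k \leq s-1$ equals $1/p - p^{-s} \leq 1/p$. Since $n$, $s$ and $K$ are held fixed while $p \tendsto \infty$, the matrix $A$ has a fixed number $K(n-K)$ of independent uniform entries, so a union bound yields
\begin{align*}
    \prob(\ell \geq 1) \leq K(n-K)\left( \frac{1}{p} - \frac{1}{p^s} \right) \leq \frac{K(n-K)}{p} \tendsto 0.
\end{align*}
Therefore $\prob(\ell = 0) \tendsto 1$, which is the claim.

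The main obstacle is not the counting itself but pinning down the probabilistic model: one must justify that a random free code is faithfully represented by a uniform $A$ in the systematic block, and that the parameter $\ell$ --- which is read off a particular generator matrix --- is unaffected by the choice of information set used to put $\code$ in the form $\begin{pmatrix} \mathbb{I}_K & A \end{pmatrix}$. Once the model of \cite{free} is invoked, the remaining argument is the valuation count above, whose key feature is that the number of entries $K(n-K)$ stays bounded as $p$ grows, so that the $O(1/p)$ per-entry failure probability survives the union bound.
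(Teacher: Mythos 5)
Your proof is correct and takes essentially the same approach as the paper: both work in the systematic form $G = \begin{pmatrix} \mathbb{I}_K & A \end{pmatrix}$ with $A$ uniform over $(\zps)^{K \times (n-K)}$, reduce the event $\ell = 0$ to a per-entry valuation condition, and use that the number of entries $K(n-K)$ stays fixed as $p \tendsto \infty$. The only cosmetic differences are that the paper computes the exact probability $\left(1 - \frac{1}{p}\right)^{K(n-K)}$ (treating a zero entry as bad, i.e.\ requiring all entries to be units) rather than your union bound, while your reading of the definition of $\ell_i$ correctly notes that zero entries are harmless --- a discrepancy of order $p^{-s}$ per entry that does not affect the asymptotic conclusion.
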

\begin{proof}
    Note that $\prob(\ell = 0)$ is the probability that there is no multiple of $p$ contained in the last $n-K$ columns of a generator matrix $G$ of $\code$ in systematic form. More explicitly, it is the probability that all of the entries in the last $n-K$ columns of $G$ are units. That is,
    \begin{align*}
        \prob(\ell = 0) = \left(\frac{(p-1)p^{s-1}}{p^s}\right)^{K(n-K)} = \left(1 - \frac{1}{p}\right)^{K(n-K)}.
    \end{align*}
    Hence, letting $p$ grow to infinity and keeping $n$ and $K$ fixed, yields the desired result.
\end{proof}

This means that, with high probability, MLD codes are sparse as $p \tendsto \infty$, as codes attaining the bound \ref{shir} of Shiromoto are sparse.

Note that, letting $s$ grow to infinity and keeping $p$ fixed, the size of the ring $\zps$ still grows whereas the probability $\prob(\ell = 0)$ is a nonzero constant. This let us suggest, that codes attaining the bound on the minimum distance derived from filtration subcodes might not be sparse.\\

We start by discussing the case, where the code $\code$ is a free code, hence $\sigma = 0$. Free codes have a generator matrix of the form $(\mathbb{I}_K \st A)$, whit $A \in (\zps)^{K\times(n-K)}$.
If there is an $0<\widetilde{\ell}<s$ such that $n' = n_{\widetilde{\ell}} = n-K$, the filtration \smash{$\code_{s-\widetilde{\ell}}$} has an entire row equal to zero. This results in having an \smash{$(s-\widetilde{\ell})$}-th generalized Lee distance of \smash{$p^{s - \widetilde{\ell}}$} and hence $\LD(\code) \leq p^{s - \widetilde{\ell}}$.\\
Let us investigate on the probability for $A$ having a maximal $0<\ell_i = \widetilde{\ell} < s$ with corresponding $n' = n-K$. This requires that all other rows of $A$ are contained at most in the ideal $\langle p^{\widetilde{\ell}} \rangle$. The probability that $A$ is of this form is therefore
\begin{align*}
    \mathcal{P} &:= \frac{(p^{s - \widetilde{\ell}}- p^{s - \widetilde{\ell}-1})^{(n-K)} (p^{s-1} - p^{s-\widetilde{\ell}-1})^{(K-1)} (p^{s - \widetilde{\ell}} - p^{s - \widetilde{\ell}-1})^{(n-K-1)(K-1)} }{(p^s)^{(n-K)K}}\\
    &= (p^{ - \widetilde{\ell}}- p^{ - \widetilde{\ell}-1})^{(n-K)} (p^{s-1} - p^{-\widetilde{\ell}-1})^{(K-1)} (p^{ - \widetilde{\ell}} - p^{ - \widetilde{\ell}-1})^{(n-K-1)(K-1)}\\
    &= \left( \frac{1}{p^{\widetilde{\ell}}} -  \frac{1}{p^{\widetilde{\ell}+1}} \right)^{(n-K-1)K + 1} \left( \frac{1}{p} -  \frac{1}{p^{\widetilde{\ell}+1}} \right)^{(K-1)}.
\end{align*}
This probability tends to zero as $n \tendsto \infty$, and thus MLD codes are sparse with respect to the bound given in Corollary \ref{cor:SB_filtration_2.0} and $n \tendsto \infty$. However, since $\mathcal{P}$ does not dependent on $s$, as $s \tendsto \infty$ and is a nonzero constant, this  implies neither sparsity nor density. In any case, we have that with a probability $\prob(\code \text{ is free}) \mathcal{P}$ the minimum distance of the code is bounded by $\LD(\code)\leq p^{s-\widetilde{\ell}}$.\\

Let us consider now codes that achieve the bound on the minimum Lee distance based on filtration subcodes, i.e., Corollary \ref{cor:SB_filtration_2.0} and check whether this fixes the $r$-th generalizes Lee distances.

Clearly, if $\mathcal{C}$ has maximal subtype $k_\sigma$ and attains the bound in Corollary \ref{cor:SB_filtration_2.0}, then 
$\LD(\code)= \LD^1(\code) = \cdots = \LD^{\sigma-1}(\code)= \LD(\code_\sigma).$
If $\sigma=s-1$, or we are in the case 4, i.e., $n' \leq N' \frac{p^{\ell-\sigma}-p^{\ell-\sigma-1}}{p^{\ell-\sigma}-1}+(n-K-2)\frac{p^{\ell-\sigma-1}-1}{p^{\ell-\sigma}-1}$. In this case we go to the socle, and hence all $\LD^r(\code)$ are equal. 
If we are not in case 4, the behaviour  of the filtration subcodes $\code_r$ with $r\geq \sigma$ is more unpredictable. 

As already discussed above there are codes with several properties which are attaining the bound in Corollary \ref{cor:SB_filtration_2.0}.

One class of codes that we want to consider are those having $n' = n-K$. Assuming that such a code attains the bound, the following result gives us a closed expression for the $r$-th generalized Lee distances for all $r$.
\begin{proposition}
    Let $\code \subseteq \zpsn$ of rank $K$, subtype $(k_0, \ldots, k_s)$ of maximal subtype $k_\sigma$, and tuple $(\ell, n-K)$, such that $\LD(\code) = \LD^{s-\ell+\sigma}(\code)$. Then the $r$-th generalized Lee distance is given by
    \begin{align*}
        \LD^r(\code) =
        \begin{cases}
            p^{s-\ell+\sigma} & \text{for every } r \leq s - \ell + \sigma, \\
            p^r & \text{for every } r > s - \ell + \sigma. 
        \end{cases}
    \end{align*}
\end{proposition}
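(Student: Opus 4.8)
The plan is to split the range of $r$ at the threshold $t := s - \ell + \sigma$ and treat the two regimes by different mechanisms: the regime $r > t$ by exhibiting an explicit low-weight codeword in the relevant filtration subcode, and the regime $r \le t$ by combining the monotonicity of the generalized Lee distances with the hypothesis $\LD(\code) = \LD^{t}(\code)$.

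For the regime $r > t$, I would work with a generator matrix $G_\sigma = (p^\sigma \mathbb{I}_K \mid A)$ of the filtration $\code_\sigma$ and single out the distinguished row $g$ supplied by the tuple $(\ell, n-K)$: this is the row whose $n-K$ off-pivot entries all lie in $\langle p^\ell \rangle$, which is precisely what $n' = n-K$ encodes. Scaling $g$ by the power of $p$ needed to land it inside $\code_{r-1}$ pushes every off-pivot entry past $p^s$ exactly once $r$ crosses $t$, so the entire off-pivot block is annihilated and only the single pivot entry survives. This produces a nonzero codeword of $\code_{r-1}$ whose Lee weight is the bare pivot value, giving the upper bound. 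The matching lower bound is the elementary observation that every nonzero codeword of $\code_{r-1} \subseteq \langle p^{r-1} \rangle$ has each nonzero coordinate of Lee weight at least the minimal weight available in that ideal, so its total weight is bounded below by the same pivot value. Combining the two pins $\LD^r(\code)$ to the claimed power of $p$ on this regime.

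For the regime $r \le t$, I would invoke Proposition \ref{prop:genLD_properties}: property~1 gives $\LD^1(\code) = \LD(\code)$ and property~2 gives the monotone chain $\LD^1(\code) \le \LD^2(\code) \le \cdots \le \LD^{t}(\code)$. The hypothesis $\LD(\code) = \LD^{t}(\code)$ then collapses the whole chain into equalities, so $\LD^r(\code) = \LD(\code)$ for every $r \le t$; it remains only to identify this common value with the threshold value computed in the first regime, i.e.\ the boundary case $r = t$ where the distinguished row first becomes clean, which fixes $\LD(\code) = p^{t}$.

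The main obstacle I expect is the tight lower bound at and below the threshold. Exhibiting the clean codeword only controls $\LD^r(\code)$ from above; certifying that no codeword of the much larger filtration subcodes $\code_{r-1}$ with $r \le t$ dips below the threshold value is exactly where the hypothesis must do the work, and the squeeze above is the clean way to avoid a direct minimum-distance computation over these bigger subcodes. Beyond that, the bookkeeping is delicate: I must check that scaling the distinguished row keeps it nonzero inside the intended filtration (the pivot survives), that all $n-K$ off-pivot entries vanish simultaneously at the same threshold, and that the index $t = s - \ell + \sigma$ aligns consistently with the filtration convention $\LD^r(\code) = \LD(\code_{r-1})$; I would cross-check this alignment against Lemma \ref{lemma:bound_rth} before finalizing.
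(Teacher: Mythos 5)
Your proposal matches the paper's proof in both regimes: for $r \leq s-\ell+\sigma$ the paper likewise collapses the monotone chain from Proposition \ref{prop:genLD_properties} using the hypothesis $\LD(\code)=\LD^{s-\ell+\sigma}(\code)$, and for $r > s-\ell+\sigma$ it likewise observes that the distinguished row's last $n-K$ entries vanish in the filtration, leaving a pivot-only codeword whose Lee weight is then forced by the containment of the filtration subcode in the corresponding ideal. Your explicit squeeze (upper bound from the clean codeword, lower bound from the ideal) and your caution about the $\code_r$ versus $\code_{r-1}$ indexing are both warranted---the paper leaves the lower bound implicit, and its proof computes $\LD(\code_r)=p^r$ even though Definition \ref{def:genLD} and Lemma \ref{lemma:bound_rth} use the convention $\LD^r(\code)=\LD(\code_{r-1})$---but the approach is essentially identical.
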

\begin{proof}
    Since $\LD(\code) = \LD^{s-\ell+\sigma}(\code)$ and since the $r$-th generalized Lee distances are increasing in $r$, we have $\LD^r(\code) = \LD^{s-\ell+\sigma}(\code)$ for every $r \leq s - \ell + \sigma$. Hence, the first case is clear.
    For the second case note that $\code_{s - \ell + \sigma}$ admits a generator matrix containing only zeros in the last $n-K$ columns. These entries remain to be zero for every filtration $\code_r$ with $r > {s - \ell + \sigma}$. Hence, the minimum distance $\LD(\code_r)$ is always given by $p^r$.
\end{proof}

\subsection{Invariance under Isometry in the Lee Metric}\label{sec:invar_filtration}\hfill\\
Finally, we observe again that the $r$-th generalized Lee distance for a code $\code \subseteq \zpsn$ coincides with the $r$-th generalized Lee distance of a code $\code' \subseteq \zpsn$ that is equivalent to $\code$. 
\begin{proposition}
    Let $\code \subseteq \zpsn$ of rank $K$ and let $\code' \subseteq \zpsn$ another code that is equivalent to $\code$. Then, for every $r \in \set{1, \ldots , K}$, we have
    $$\LD^r(\code) = \LD^r(\code').$$
\end{proposition}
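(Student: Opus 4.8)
The plan is to exploit the fact that the two ingredients defining $\LD^r(\code) = \LD(\code_{r-1})$---namely the minimum Lee distance and the filtration subcode $\code_{r-1} = \code \cap \langle p^{r-1}\rangle$---are each well-behaved under the Lee-metric isometries. First I would recall, exactly as in the earlier invariance proofs for the join- and column-Lee weights, that any code $\code'$ equivalent to $\code$ under the linear Lee-metric isometries $\{-1,1\} \rtimes S_n$ arises as $\code' = \phi(\code)$, where $\phi$ sends a codeword $c$ to $\sigma(c) \star v$ for a permutation $\sigma \in S_n$ and a sign vector $v \in \{1,-1\}^n$, with $\star$ the coordinatewise product. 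Being an isometry, $\phi$ preserves the Lee weight of every vector, so $\LD(\subcode) = \LD(\phi(\subcode))$ for any subcode $\subcode$.

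The only substantive step is to show that $\phi$ commutes with the filtration, that is $\phi(\code_{r-1}) = \code'_{r-1}$. For this I would verify that $\phi$ fixes the submodule $\langle p^{r-1}\rangle = p^{r-1}(\zps)^n$ setwise: if an entry $c_i$ is divisible by $p^{r-1}$ then so is $\pm c_i$, and permuting coordinates keeps every entry a multiple of $p^{r-1}$. Hence $\phi(\langle p^{r-1}\rangle) = \langle p^{r-1}\rangle$, and since $\phi$ is a bijection on $(\zps)^n$ we obtain
$$\phi(\code_{r-1}) = \phi\big(\code \cap \langle p^{r-1}\rangle\big) = \phi(\code) \cap \phi\big(\langle p^{r-1}\rangle\big) = \code' \cap \langle p^{r-1}\rangle = \code'_{r-1}.$$

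Combining the two observations then yields $\LD^r(\code) = \LD(\code_{r-1}) = \LD(\phi(\code_{r-1})) = \LD(\code'_{r-1}) = \LD^r(\code')$, as desired. I do not expect a genuine obstacle here: the filtration is an intrinsic, coordinate-insensitive construction, so the entire argument reduces to the stability of the ideal $\langle p^{r-1}\rangle$ under sign changes and permutations. The mildest point of care is simply to note that $\phi$ respects intersections because it is a bijective module isomorphism of the ambient space, which justifies distributing $\phi$ over $\code \cap \langle p^{r-1}\rangle$ in the display above.
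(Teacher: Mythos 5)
Your proof is correct and takes essentially the same route as the paper's: both reduce the claim to the facts that the Lee-metric isometries are signed permutations, that they preserve Lee weights, and that they respect the filtration. If anything, yours is the more complete version, since you explicitly verify $\phi\left(\langle p^{r-1}\rangle\right) = \langle p^{r-1}\rangle$ and hence $\phi(\code_{r-1}) = \code'_{r-1}$, a step the paper's proof leaves implicit when it simply sets $\code'_r := \phi(\code_r)$ without checking that this image is the filtration subcode of $\code'$.
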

\begin{proof}
    Let $\phi$ denote an isometry preserving the Lee distance. Recall that this isometry can only consist of permutations and multiplications by $\pm 1$. Furthermore, recall that the $r$-th generalized Lee distance is given by the minimum Lee distance of the $r$-th filtration subcode $\code_r$ of $\code$, i.e.,
    \begin{align*}
        \LD^r(\code) = \LD(\code_r).
    \end{align*}
    By the inclusion property of the filtrations, we have $\code_r \subseteq \code$. Note that $\phi$ additionally preserves this inclusion property, i.e.,
    \begin{align*}
        \code'_r := \phi(\code_r) \subseteq \phi(\code).
    \end{align*}
    Hence, the minimum Lee distances of $\code_r$ and $\code'_r$ coincide.
\end{proof}

\section{Comparison of the Bounds}\label{sec:comparison}
At this point let us compare the bound of Corollary \ref{cor:SB_filtration_2.0} to the bounds derived from the new puncturing argument (Theorem \ref{thm:SBpunct}), the join-Lee support (Theorem \ref{thm:SB_joinsupp}), to the column support (Corollary \ref{cor:SB_column_cancelling}) and to the bounds provided by \cite{shiromoto,alderson}. We  do so by providing first some examples that attain the bound from Corollary \ref{cor:SB_filtration_2.0} and compare it to the other bounds.
\begin{example}
\begin{enumerate}
    \item 
    Let $\code \subseteq (\Zp{9})^4$ generated by 
    \begin{align*}
        G = 
        \begin{pmatrix}
            1 & 0 & 0 & 2\\
            0 & 1 & 0 & 6\\
            0 & 0 & 1 & 4
        \end{pmatrix}.
    \end{align*}
    We quickly observe that this code has a minimum Lee distance $\LD(\code) = 3$. For the last $n-K = 1$ column, we note, that all the entries live in the ideal generated by $1$. This means that $\ell = 0$ and $n' = n-K = 1$. Then the bounds are computed as follows.
    \begin{align*}
        \text{Filtration:}\quad &\LD(\code) \leq 3 &&\hspace{-20mm}(\text{Corollary \ref{cor:SB_filtration_2.0}})\\
        \text{Join-support:}\quad &\LD(\code) \leq 6 &&\hspace{-20mm}(\text{Theorem \ref{thm:SB_joinsupp}}) \\
        \text{Column support:}\quad &\LD(\code) \leq 5 &&\hspace{-20mm}(\text{Corollary \ref{cor:SB_column_cancelling}})\\
        \text{New puncturing:}\quad &\LD(\code) \leq 8 &&\hspace{-20mm}(\text{Theorem \ref{thm:SBpunct}})\\
        \text{Shiromoto:}\quad &\LD(\code) \leq 5 &&\hspace{-20mm}(\text{\cite{shiromoto}})\\
        \text{Alderson - Huntemann:}\quad &\LD(\code) \leq 4 &&\hspace{-20mm}(\text{\cite{alderson}})\\
    \end{align*}

    \item 
    Let $\code \subseteq (\Zp{27})^5$ generated by 
    \begin{align*}
        G = 
        \begin{pmatrix}
            1 & 10 & 4 & 20 & 9\\
            0 & 3 & 9 & 18 & 9
        \end{pmatrix}.
    \end{align*}
    The minimum Lee distance of this code is $\LD(\code) = 9$. For the last $n-K = 3$ columns, we quickly compute $\ell'=2$ and $n'=1$. Then the bounds are computed as follows.
    \begin{align*}
        \text{Filtration:}\quad &\LD(\code) \leq 9 &&\hspace{-20mm}(\text{Corollary \ref{cor:SB_filtration}})\\
        \text{Filtration:}\quad &\LD(\code) \leq 9 &&\hspace{-20mm}(\text{Corollary \ref{cor:SB_filtration_2.0}})\\
        \text{Join-support:}\quad &\LD(\code) \leq 36 &&\hspace{-20mm}(\text{Theorem \ref{thm:SB_joinsupp}}) \\
        \text{Column support:}\quad &\LD(\code) \leq 38 &&\hspace{-20mm}(\text{Corollary \ref{cor:SB_column_cancelling}})\\
        \text{New puncturing:}\quad &\LD(\code) \leq 48 &&\hspace{-20mm}(\text{Theorem \ref{thm:SBpunct}})\\
        \text{Shiromoto:}\quad &\LD(\code) \leq 40 &&\hspace{-20mm}(\text{\cite{shiromoto}})\\
        \text{Alderson - Huntemann:}\quad &\text{not existing} &&\hspace{-20mm}(\text{\cite{alderson}})
    \end{align*}

    \item In this example let us consider the code $\code \subseteq (\Zp{125})^6$ generated by
    \begin{align*}
        G = 
        \begin{pmatrix}
            1 & 0 & 25 & 50 & 75 & 100 \\
            0 & 1 & 2 & 3 & 4 & 5
        \end{pmatrix}.
    \end{align*}
    This code has minimum distance $\LD(\code) = 5$.
    Note that the two bounds with respect to the filtration (Corollary \ref{cor:SB_filtration} and \ref{cor:SB_filtration_2.0}) coincide. Hence, we obtain
    \begin{align*}
        \text{Filtration:}\quad &\LD(\code) \leq 5 &&\hspace{-10mm}(\text{Corollary \ref{cor:SB_filtration} and \ref{cor:SB_filtration_2.0}})\\
        \text{Join-support:}\quad &\LD(\code) \leq 200 &&\hspace{-10mm}(\text{Theorem \ref{thm:SB_joinsupp}}) \\
        \text{Column support:}\quad &\LD(\code) \leq 247 &&\hspace{-10mm}(\text{Corollary \ref{cor:SB_column_cancelling}})\\
        \text{New puncturing:}\quad &\LD(\code) \leq 300 &&\hspace{-10mm}(\text{Theorem \ref{thm:SBpunct}})\\
        \text{Shiromoto:}\quad &\LD(\code) \leq 249 &&\hspace{-10mm}(\text{\cite{shiromoto}})\\
        \text{Alderson - Huntemann:}\quad &\LD(\code) \leq 248 &&\hspace{-10mm}(\text{\cite{alderson}})
    \end{align*}
\end{enumerate}

\end{example}
We now compare the bounds for different parameters. We will leave out the bound given by the column support, i.e., Corollary \ref{cor:SB_column_cancelling}, as we would need to consider too many different parameters which would not fit in the overview.
\begin{longtable}{|p{2cm}||p{2.5cm}|p{2.5cm}|p{2.5cm}|p{2.5cm}|}
    \hline
    $(n, K, p^s, \sigma)$& Alderson - Huntemann  ~~\cite{alderson} & Shiromoto \cite{shiromoto} & Join-support (Theorem \ref{thm:SB_joinsupp})& Filtration (Corollary \ref{cor:SB_filtration_2.0}) $(\ell, n')$\\ 
    \hline\hline 
    \multirow{7}{*}{$(6, 3, 9, 0)$}& \multirow{7}{*}{$12$} & \multirow{7}{*}{$16$}   & \multirow{7}{*}{$12$}  & $(0, 3): \; 13$\\ 
    & & & & $(1, 1): \; 9$\\
    & & & & $(1, 2): \; 6$\\
    & & & & $(1, 3): \; 3$\\
    & & & & $(2, 1): \; 9$\\
    & & & & $(2, 2): \; 5$\\
    & & & & $(2, 3): \; 1$\\
    \hline
    \multirow{4}{*}{$(6, 3, 9, 1)$}& \multirow{4}{*}{Not existing} & \multirow{4}{*}{$16$}   & \multirow{4}{*}{$12$}  & $(1, \star): \; 12$\\ 
    & & & & $(2, 1): \; 9$\\
    & & & & $(2, 2): \; 6$\\
    & & & & $(2, 3): \; 3$\\
    \hline
    \multirow{10}{*}{$(6, 3, 125, 0)$}& \multirow{10}{*}{$186$} & \multirow{10}{*}{$248$}   & \multirow{10}{*}{$200$}  & $(0, 3):\;187$\\ 
    & & & & $(1, 1): \; 125$\\
    & & & & $(1, 2): \; 75$\\
    & & & & $(1, 3): \; 25$\\
    & & & & $(2, 1): \; 125$\\
    & & & & $(2, 2): \; 65$\\
    & & & & $(2, 3): \; 5$\\
    & & & & $(3, 1): \; 125$\\
    & & & & $(3, 2): \; 63$\\
    & & & & $(3, 3): \; 1$\\
    \hline
    \multirow{7}{*}{$(6, 3, 125, 1)$}&   & \multirow{7}{*}{$248$}   & \multirow{7}{*}{$200$}  & $(1, \star):\;185$\\ 
    & & & & $(2, 1): \; 125$\\
    & $248$ & & & $(2, 2): \; 75$\\
    & \tiny (only for subtype & & & $(2, 3): \; 2$\\
    & \tiny $(0, 3, 0)$) & & & $(3, 1): \; 125$\\
    & & & & $(3, 2): \; 65$\\
    & & & & $(3, 3): \; 5$\\
    \hline
    \multirow{4}{*}{$(6, 3, 125, 2)$}& $310$ \tiny(only for sub- & \multirow{4}{*}{$248$}   & \multirow{4}{*}{$200$}  & $(2, \star):\;175$\\ 
    & \tiny type $(0, 0, 3)$) & & & $(3, 1): \; 125$\\
    & $248$ \tiny(only for sub- & & & $(3, 2): \; 75$\\
    & \tiny type $(1, 1, 1)$) & & & $(3, 3): \; 25$\\
    \hline
    \caption{Comparison of the bounds on the minimum Lee distance of a code with given parameters. \label{tab:compare_free}}
\end{longtable}
Let us focus first on a free code, i.e., $\sigma = 0$). Observe, that if the last $n-K$ columns of a generator matrix consist only of nonunits, i.e., $\ell = 0$) the bound by Alderson and Huntemann beats our bounds. However, as soon as $\ell \neq 0$ the new  bound based on the minimum distance of filtration subcodes (Corollary \ref{cor:SB_filtration_2.0}) always outperforms any other bound. In Table \ref{tab:compare_free} we also observe, that the bound provided by Shiromoto is the loosest.

For nonfree codes, recall that the bound in \cite{alderson} only works for integer $\mathbb{Z}/p^s\mathbb{Z}$-dimensions $k> 1$. 
Furthermore, we note that for a given $\sigma \geq 1$ we always have $\ell \geq \sigma$ and if $\ell = \sigma$ the filtration bound (Corollary \ref{cor:SB_filtration_2.0}) is the same for any  $n'$. This is denoted by $n' = \star$ in Table \ref{tab:compare_free}. In any of the parameters presented, the bound based on the minimum Lee distance of a filtration subcodes of the code  (Corollary \ref{cor:SB_filtration_2.0}) outperforms all other bounds.

\section{Conclusions and Future Work}\label{sec:conclusions}

In this paper we presented several novel definitions of Lee support and the corresponding generalized Lee weights of subcodes of a fixed rank. 
These give raise to new Lee-metric Singleton bounds, that beat the previous bound by Shiromoto \cite{shiromoto}, which follows from a puncturing argument. 
However, their optimal codes are still sparse for $s,n$ or $p$ going to infinity. This led us to consider different subcodes, namely the  filtration subcodes. Bounding their minimum distances gives a sharper Singleton bound, that finally has the desired property; their optimal codes are not sparse for $s$ going to infinity.

The open question remains, whether there exists a bound on the minimum Lee distance of codes, which is such that their optimal codes are dense for $s,n$ or $p$ going to infinity.

\section*{Acknowledgements}\label{sec:ack} 
Violetta Weger is  supported   by the European Union's Horizon 2020 research and innovation programme under the Marie Sk\l{}odowska-Curie grant agreement no. 899987.

\bibliographystyle{plain}

\bibliography{References}

\end{document}